\title{The $R$-Matrix in 3d Topological BF Theory}
\author{Nanna Havn Aamand}
\date{\vspace{-1em}
    \begin{small}
        \textit{University of Copenhagen,
         Department of Mathematical Sciences, \\
        2100 Copenhagen, Denmark}\\
        [3ex]
    \end{small} 
}
\begin{document}

\maketitle

\begin{abstract}
    In this paper I study Wilson line operators in a certain type of ``split'' Chern-Simons theory for a Lie algebra $\g=\mathfrak a\oplus \mathfrak a^*$ on a manifold with boundaries. The resulting gauge theory
    %for which the Lie algebra decomposes into a direct sum of maximal isotropic subalgebras. One motivation for studying this type of Chern-Simons theory is it 
    is a 3d topological BF theory equivalent to a topologically twisted 3d $\mathcal N=4$ theory. I show that this theory realises solutions to the quantum Yang-Baxter equation all orders in perturbation theory as the expectation value of crossing Wilson lines.
\end{abstract}
\newpage
\tableofcontents
\newpage
\section{Introduction}
The perturbative framework for Chern-Simons theory on a general three-manifold $M$ was formalised by Axelrod and Singer in \cite{AS}. To account for ultraviolet singularities in Feynman integrals they used a Fulton-MacPherson like compactification of the configuration space of Feynman diagram vertices in $M$. The compactified space has the form of a stratified space with boundary strata defined from spherical blow-ups along the diagonals where subsets of vertices come together. This has led to a technique for recovering manifold invariants from Chern-Simons theory implemented in a series of notable works, see e.g. \cite{K,BT,BC,AF}. In particular, Bott and Taubes \cite{BT} constructed knot invariants from Wilson loops in $S^3$. The essential ingredient in this work is the use of Stokes' theorem: Since propagators in the theory are closed forms, proving invariance of the expectation value of Wilson loops under continuously displacing loop strands amounts to proving a series of vanishing theorems for Feynman integrals on the boundary of the configuration space. 
%suggested by Kontsevich \cite{K} and implemented in among others notable works of Bott and Taubes \cite{BT}, Bott and Cattaneo  \cite{BC} and Altschuler and Freidel \cite{AF}.
The objective of this paper is to implement the same type of arguments for the purpose of recovering a solution to the Yang-Baxter equation (an $R$-matrix) from the expectation value of crossing Wilson lines at all orders in perturbation theory. 

%However, things become much simpler if we instead
In \cite{Aamand} the present author carried out leading order Feynman diagram computations to realise the classical Yang-Baxter equation from Wilson lines in Chern-Simons theory for a semi-simple Lie algebra $\g$, on a manifold with boundaries $\R^2\times [-1,1]$. In order to obtain Yang-Baxter solutions, one must place boundary condition on the gauge field to break the full gauge symmetry of the theory. This is achieved by extending the Lie algebra by an extra copy of the Cartan subalgebra to admit a decomposition into maximal isotropic subalgebras $\g=\mathfrak l_-\oplus \mathfrak l_+$, restricting the gauge field to $\mathfrak l_-$ (resp. $\mathfrak l_+$) on the upper (resp. lower) boundary. This work was inspired by a construction of Costello, Witten and Yamazaki \cite{CWYI}, \cite{CWYII} in a 4-dimensional analogue of Chern-Simons theory. In this framework, the Yang-Baxter equation states the equivalence between the diagrams on the left- and right-hand side of figure \ref{fig:1}, where the lines represent Wilson lines extending to infinity along $\R^2$ and supported at different points in $[-1,1]$. The corresponding expectation value is an element in $\mathcal U(\g)^{\otimes 3}[[\hbar]]$.
\begin{figure}[H]
    \centering
\begin{tikzpicture}[vertex/.style={draw,circle, fill=black, inner sep=1.5pt}]

  \node (a1) at (-0.7,1.7)  {}; 
  \node (a2) at (-0.7,-0.65)  {};
  \node (a3) at (-0.7,-0.85)  {}; 
  \node (a4) at (-0.7,-1.7)  {};
  \draw[very thick] (a1)node[above]{$L_{2,0}$} --  (a2);
  \draw[very thick,->] (a3) --  (a4);

  \node (b1) at (-1.7,1.7)  {}; 
  \node (b2) at (-0.75,0.75)  {};
   \node (b3) at (-0.65,0.65)  {}; 
  \node (b4) at (0,0)  {};
  \node (b5) at (0.05,-0.05)  {};
  \node (b6) at (1.7,-1.7)  {};
  \draw[very thick] (b1) node[above]{$L_{1,0}$} -- (b2);
  \draw[very thick] (b3) -- (b4);
  \draw[very thick,->] (b5) -- (b6);
  
  \node (c1) at (1.7,1.7)  {}; 
  \node (c2) at (-1.7,-1.7)  {};
  \draw[very thick,->] (c1)node[above]{$L_{3,0}$} --  (c2);

\end{tikzpicture} \hspace{2.5cm}
\begin{tikzpicture}[vertex/.style={draw,circle, fill=black, inner sep=1.5pt}]

  \node (a1) at (0.7,1.7)  {};
  \node (a2) at (0.7,0.8)  {};
  \node (a3) at (0.7,0.6)  {};
  \node (a4) at (0.7,-1.7)  {};
  \draw[very thick] (a1)node[above]{$L_{2,1}$} --  (a2);
  \draw[very thick,->] (a3) --  (a4);

  \node (b1) at (-1.7,1.7)  {}; 
  \node (b2) at (0,0)  {};
  \node (b3) at (0.05,-0.05)  {};
  \node (b4) at (0.65,-0.65)  {};
  \node (b5) at (0.75,-0.75)  {};
  \node (b6) at (1.7,-1.7)  {};
  \draw[very thick] (b1) node[above]{$L_{1,1}$} -- (b2);
  \draw[very thick] (b3) -- (b4);
  \draw[very thick,->] (b5) -- (b6);
  
  \node (c1) at (1.7,1.7)  {}; 
  \node (c2) at (-1.7,-1.7)  {};
  \draw[very thick,->] (c1)node[above]{$L_{3,1}$} --  (c2);

\end{tikzpicture} 
\caption{: The Yang-Baxter equation for crossing Wilson lines.}
    \label{fig:1}
\end{figure}
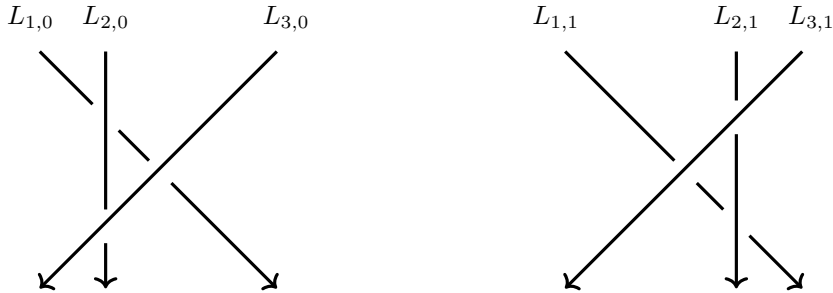
\noindent Directly implementing vanishing arguments similar to those of Bott and Taubes to the above theory
%in order to prove invariance under continuously displacing the middle Wilson line. This however
appears too ambitious as the vanishing theorems rely on a full rotational symmetry of the propagator which in this case is broken by the boundary conditions. However, things become easier if we instead consider %Chern-Simons theory in the same setup but for 
a Lie algebra $\g=\mathfrak a\oplus \mathfrak a^*$ with relations $[a,b^*]=[a,b]^*$ and $[a^*,b^*]=0$ for $a,b\in \mathfrak a$. Chern-Simons theory for this Lie algebra is equivalent to a $B$-twisted $3$d $\mathcal N=4$ theory; see e.g.
%The above gauge theory has recently been subject to an increased interest since it is equivalent to a topologically twisted $3$d $\mathcal N=4$ gauge theory; see e.g.
\cite{CG, Garner, G2023}. For this theory Feynman diagrams become particularly simple. In fact, the gauge field decomposes into two parts $\mathbf A\in \Omega^1(M)\otimes \mathfrak a$ and $\mathbf B\in \Omega^1(M)\otimes \mathfrak a^*$, and the only type of interaction vertices permitted by the theory has one incoming $\mathbf B$-field and two outgoing $\mathbf A$-fields. It turns out that this accounts for the problematic boundary faces 
%and the vanishing arguments carry through with small modifications. 
and we can therefore prove the following theorem:
%, implying that the expectation value of a pair of crossing Wilson lines is a solution to the Yang-Baxter equation
\begin{theorem}\label{thm:1}
Let $\braket{L_t}$ be the expectation value of the product of Wilson lines in figure \ref{fig:1}, where the parameter $t$ corresponds to moving the middle line continuously to the right. In the theory described above it holds that $\braket{L_1}-\braket{L_0}=0$.
\end{theorem}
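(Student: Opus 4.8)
The plan is to prove invariance of the Wilson-line expectation value under the continuous deformation parametrized by $t$ via Stokes' theorem on the Fulton-MacPherson compactified configuration space, following the Bott-Taubes strategy but adapted to the split Lie algebra $\g = \mathfrak a \oplus \mathfrak a^*$. The key structural input is that the propagator is a closed form, so the difference $\braket{L_1} - \braket{L_0}$ equals the integral of the total differential of the Feynman integrand over the family of configuration spaces fibered over the interval $t \in [0,1]$. Applying Stokes, this reduces to a sum of integrals over the boundary strata of the compactified configuration space, and the theorem amounts to showing that each such boundary contribution vanishes.

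\medskip

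\noindent First I would set up the relevant configuration space: for each Feynman diagram contributing to the expectation value, the vertices (both the interaction vertices in the bulk and the endpoints of propagators on the Wilson lines) give points in $M = \R^2 \times [-1,1]$, and I would take the Axelrod-Singer blow-up along all diagonals where subsets of these points collide, together with blow-ups along the boundary $\partial M$. The boundary faces fall into the familiar types: (i) two or more bulk vertices colliding, (ii) vertices colliding onto a Wilson line, (iii) collisions of points on a single Wilson line, (iv) a subset of points escaping to infinity along $\R^2$, and (v) faces where points approach the boundary $\partial M = \R^2 \times \{\pm 1\}$ carrying the isotropic boundary conditions. I would then argue that the interior contributions cancel between the two sides of figure \ref{fig:1} (or vanish by the Jacobi/Maurer-Cartan identity), leaving only the genuinely boundary-supported terms to control.

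\medskip

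\noindent The main simplification I would exploit, and the reason the theorem is provable here, is the severely restricted vertex structure: the only interaction vertex has one incoming $\mathbf B \in \Omega^1(M)\otimes\mathfrak a^*$ leg and two outgoing $\mathbf A \in \Omega^1(M)\otimes\mathfrak a$ legs. This has two consequences I would leverage. For the hidden/bulk faces of type (i), the usual vanishing follows from a dimension count combined with the antisymmetry of the vertex and the Jacobi identity (the integral over the blow-up sphere of a collapsing subdiagram vanishes unless the subdiagram is a single propagator, and the tadpole/anomalous contributions are killed by the $\mathbf A$-$\mathbf B$ valence constraint). For the boundary faces of type (v), the isotropic decomposition $\g = \mathfrak a \oplus \mathfrak a^*$ is precisely engineered so that the boundary condition restricts $\mathbf A$ or $\mathbf B$ to vanish, killing exactly the face contributions that were problematic in the semisimple setting of \cite{Aamand}; I would verify that the surviving propagator components have no support on the relevant boundary.

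\medskip

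\noindent The hard part, and where I would concentrate the real work, is the faces of type (ii) and (iii) where vertices collide \emph{onto} the moving Wilson line, since these are the faces whose cancellation encodes the Yang-Baxter relation itself rather than mere gauge invariance. Here I expect the argument to hinge on an anomaly-type computation: the collapse of a pair of propagator endpoints onto a strand produces a local contribution governed by the structure constants, and matching the two sides of figure \ref{fig:1} requires these local contributions to assemble into the associativity/Jacobi identity at each crossing. I anticipate the principal obstacle is ruling out the anomalous contribution from the face where a short chord of propagators shrinks onto the crossing point of two lines — the analogue of the Bott-Taubes ``anomaly'' term — and showing that the restricted $\mathbf B \to \mathbf A \mathbf A$ vertex forces this term to vanish (rather than contributing a framing-dependent correction), so that the net boundary sum is zero and hence $\braket{L_1} - \braket{L_0} = 0$.
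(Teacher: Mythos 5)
Your overall skeleton is the same as the paper's: write $\braket{L_1}-\braket{L_0}$ via Stokes' theorem as a sum of integrals over codimension-one boundary strata of the compactified configuration space (proposition \ref{prop:Stokes}), then kill each stratum. Your treatment of bulk collisions (dimension count plus Jacobi/IHX, with the valence-restricted vertex ruling out everything but the two-vertex collapse) also matches lemmas \ref{thm:hidden_int} and \ref{thm:IHX}. But the proposal stops being a proof exactly where you say you would ``concentrate the real work''. First, a structural omission: you never fix the propagator to be $\phi^*\omega$ with $\omega$ supported near the north pole. This choice is not cosmetic — it is what gives $\lambda(\Gamma)$ compact support (disposing of your type~(iv) faces at infinity, which you list and then never address), what makes the $\partial M$ faces of type~(v) vanish (an internal vertex at the lower boundary would need all edges outgoing, which the $\mathbf B\to\mathbf A\mathbf A$ vertex forbids), and what forces every contributing graph to be a forest with edges pointing strictly upward (proposition \ref{prop:forest}). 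That forest structure is then the engine of all the remaining vanishing arguments, and without it they do not go through.

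Second, for the faces where vertices collide onto a Wilson line, you anticipate an ``anomaly-type computation'' but do not supply it, and your guess at the obstruction is misdirected. There is no face where ``a short chord of propagators shrinks onto the crossing point of two lines'': the lines are supported at different heights in $I$, hence disjoint in $M$, so vertices on distinct lines can never collide — the only external collision strata are $\partial_{S,T}$ with $T$ contained in a \emph{single} line. The actual mechanisms needed are: for two colliding vertices, the STU relation (lemma \ref{thm:STU}), where the local factor $\zeta^a\xi_b-\xi_b\zeta^a-{f^a}_{bc}\zeta^c\int_{S^2}\omega$ vanishes by the relation $[\zeta^a,\xi_b]={f^a}_{bc}\zeta^c$ and the normalization $\int_{S^2}\omega=1$; and for the hidden faces with $|S\cup T|>2$ (lemma \ref{thm:hidden_ext}), a case analysis using the forest structure, the dimension count of corollary \ref{lm:dim1}, and — for the case of one incoming and one outgoing edge attached to distinct internal vertices — Kontsevich's orientation-reversing substitution $y\mapsto x+z-y$, which shows the relevant integral equals minus itself. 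None of these appear in your proposal. Finally, your framing that these faces ``encode the Yang-Baxter relation itself'' is a misreading of the logic: theorem \ref{thm:1} is pure isotopy invariance, and the Yang-Baxter equation only emerges when this is combined with the factorizations $\braket{L_0}=\mathcal R_{12}\mathcal R_{13}\mathcal R_{23}$ and $\braket{L_1}=\mathcal R_{23}\mathcal R_{13}\mathcal R_{12}$ of lemma \ref{lm:infinity}, which follow from the localization of interactions near the crossings, not from any boundary cancellation.
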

This entails proving a series of vanishing theorems in line with those of Bott and Taubes. The perturbative formalism for this ``split'' Chern-Simons theory on a manifold with boundaries was first studied in work of Cattaneo et al. \cite{CMR,CMK}, from where the term originates. 

\section{The Quantum Yang-Baxter Equation} \label{sec:YBE}
We begin by briefly recalling some basic notions relating to the quantum Yang-Baxter equation. Let $\g$ be a Lie algebra that can be quantized via the Drinfel'd double construction and let $\mathcal U_\hbar(\g)$ be the corresponding quantized universal enveloping algebra of $\g$. For each $i,j\in\{1,2,3\}$ with $i\neq j$ define 
$
\rho_{ij}:\mathcal U_\hbar(\g)^{\otimes 2}\to \mathcal U_\hbar(\g)^{\otimes 3}
$ 
by
\begin{align*}
    \rho_{12}(a\otimes b)=a\otimes b\otimes 1 \, , \ \rho_{13}(a\otimes b)=a\otimes 1\otimes b \, ,  \ \rho_{23}(a\otimes b)=1\otimes a\otimes b
\end{align*}
%It holds that $\mathcal U_\hbar(\g)$ has the structure of a quasi-triangular Hopf algebra with braiding given by an element 
%It holds that
%{\color{red} in the sense of e.g. Drinfel'd}. 
%$\mathcal U_\hbar(\g)$ has the structure of a quasi-triangular Hopf algebra with the braiding coming from an element 
%Since we are working in perturbation theory we write represent the $R$-matrix as 
%which we represent by its perturbative expansion in the parameter~$\hbar$: 
Given an element $R_\hbar\in \mathcal U_\hbar(\g)\otimes \mathcal U_\hbar(\g)$, write $R_{ij} = \rho_{ij}(R_\hbar)$.
%we can write
%$$
%R_\hbar=1\otimes 1+\sum_{k=1}^\infty\hbar^{k}r^{(k)},
%$$ 
%where $r^{(k)}\in \mathcal U(\g)\otimes \mathcal U(\g)$ for each $k$. We 
We say that $R_\hbar$ is a quantum $R$-matrix if it is invertible and it satisfies the following relation known as the Yang-Baxter equation: 
\begin{equation}\label{eq:YBE}   R_{23}R_{13}R_{12}=R_{12}R_{13}R_{23}\,,
\end{equation}
This equation is commonly represented graphically by the diagram shown below. 
\begin{figure}[H]
    \centering
\begin{tikzpicture}
  \node (a1) at (-0.8,1.6)  {}; 
  \node (a2) at (-0.8,-1.6)  {};
  \draw[thick, ->] (a1)node[above]{2} --  (a2);

  \node (b1) at (-1.6,1.6)  {}; 
  \node (b2) at (1.6,-1.6)  {};
  \draw[thick,->] (b1)node[above]{1} -- (b2);
  
  \node (c1) at (1.6,1.6)  {}; 
  \node (c2) at (-1.6,-1.6)  {};
  \draw[thick,->] (c1)node[above]{3} --  (c2);

  \node at (-0.4,0.9) {$R_{12}$};
  \node at (0.5,0) {$R_{13}$};
  \node at (-0.4,-0.9) {$R_{23}$};
\end{tikzpicture} \hspace{1cm}
\begin{tikzpicture}
  \node (a1) at (0,0)  {};
  \node (a2) at (0,2)  {\Large $=$};
\end{tikzpicture}\hspace{1cm}
\begin{tikzpicture}[vertex/.style={draw,circle, fill=black, inner sep=1.5pt}]
\node (a1) at (0.8,1.6)  {}; 
  \node (a2) at (0.8,-1.6)  {};
  \draw[thick, ->] (a1)node[above]{2} --  (a2);

  \node (b1) at (-1.6,1.6)  {}; 
  \node (b2) at (1.6,-1.6)  {};
  \draw[thick,->] (b1)node[above]{1} -- (b2);
  
  \node (c1) at (1.6,1.6)  {}; 
  \node (c2) at (-1.6,-1.6)  {};
  \draw[thick,->] (c1)node[above]{3} --  (c2);

  \node at (0.4,0.9) {$R_{23}$};
  \node at (-0.5,0) {$R_{13}$};
  \node at (0.4,-0.9) {$R_{12}$};
\end{tikzpicture}
%\caption{: The embedding space $L_t$ where $t\in [0,1]$.}
%    \label{fig:W-lines}
%\end{figure}
%\begin{figure}[H]
%    \centering
%    \includegraphics[scale=0.25]{YBE1.png}
    %\caption{:~A graphical representation of the Yang-Baxter equation}
    %\label{fig:YBE}
\end{figure}
\noindent To interpret this diagram, we imagine that each line carries a vector space $V_i$, $i\in\{1,2,3\}$ corresponding to some representation of $\g$. At the crossing between line $i$ and line $j$ the incoming vector spaces are transformed by the element $R_{ij}\in \End(V_i\otimes V_j)$ acting in the given representation.Reading the figure from up to down in the direction of the arrow reproduces the Yang-Baxter equation. The existence of an $R$-matrix gives a braiding structure on $\mathcal U_\hbar(\g)$, and hence in particular it allows for the construction of invariants of knots and braids.
\section{Split Chern-Simons Theory with Boundaries}\label{sec:gauge theory}
\subsection{The basic setup}
Let $\g$ be a Lie algebra with a non-degenerate invariant pairing $\Tr:\g\otimes \g\to \mathbb \R$ and assume that $\g$ admits a decomposition $\g=\mathfrak a\oplus \mathfrak a^*$ %be a Lie algebra with an 
where $\mathfrak a^*$ is dual to $\mathfrak a$ with respect $\Tr$. Moreover, let $\mathcal B(\mathfrak a)=\{\xi^a\}_{a=1,\dots, \dim \mathfrak a}$ be a basis for $\mathfrak a$ and $\mathcal B(\mathfrak a^*)=\{\zeta_a\}_{a=1,\dots, \dim \mathfrak a}$ be the dual basis for $\mathfrak a^*$. The gauge theory that we study in this paper is Chern-Simons for the Lie algebra $\g$ described above, with relations
%and assume that $\g$ admits a decomposition $\g=\mathfrak a\oplus \mathfrak a^*$ where $\mathfrak a^*$ is dual to $\mathfrak a$ with respect to the pairing. 
\begin{equation}\label{eq:brackets}
[\xi_a,\xi_b]={f^c}_{ab}\xi_c \ , \ \ [\zeta^a,\xi_b]={f^a}_{bc}\zeta^c \ , \ \ [\zeta^a,\zeta^b]=0,
\end{equation}
where ${f^c}_{ab}$ are the structure constants of $\mathfrak a$. Notice that, with this definition, $\mathfrak a$ and $\mathfrak a^*$ are maximal isotropic subalgebras of $\g$ and hence the triple $(\g,\mathfrak a,\mathfrak a^*)$ is a Manin triple. This is, in essence, what allows us to derive quantum groups structured in the theory. The above gauge theory is defined by the Chern-Simons action:
\begin{equation}\label{eq:action}
S_\text{CS}(\mathbf{C})=\frac{1}{2\pi}\int_{M}\Tr(\mathbf{C}\wedge d\mathbf{C})+\frac{1}{3}\Tr([\mathbf{C}, \mathbf{C}]\wedge\mathbf{C}),
\end{equation}
where the gauge field $\mathbf C$ is a one-form on a manifold $M$ taking values in $\g$, i.e. $\mathbf C\in\Omega^1(M)\otimes \g$. We will decompose $\mathbf C$ into a part $\mathbf A$ taking value in $\mathfrak a$ and a part $\mathbf B$ taking value in $\mathfrak a^*$. That is, we write 
$$
\mathbf{C}=\mathbf A + \mathbf B,
$$
where $\mathbf A \in \Omega^1(M)\otimes \mathfrak a$ and $\mathbf B\in \Omega^1(M)\otimes \mathfrak a^*$. Observe that, when inserting this into the Chern-Simons action, the terms containing only $\mathbf A$'s or $\mathbf B$'s vanish since the subalgebras $\mathfrak a$ and $\mathfrak a^*$ are isotropic. Similarly the term $\Tr({[\mathbf{A}, \mathbf{B}],\mathbf{B}})$ vanish by the relations in equation \eqref{eq:brackets}. Thus the resulting action takes the form:   
\begin{equation}
\begin{aligned}
S_\text{CS}(\mathbf{A}+\mathbf{B})=\frac{1}{2\pi}\int_{M}\Tr({\mathbf{A}\wedge d\mathbf{B}}+{\mathbf{B}\wedge d\mathbf{A}})+\frac{1}{3}\Tr({[\mathbf{A}, \mathbf{A}]\wedge\mathbf{B}}),
%\\
%&=\int_{\R^2\times I}\braket{\mathbf{A}\wedge d\mathbf{B}}+\frac{1}{6}\braket{[\mathbf{A}, \mathbf{A}]\wedge\mathbf{B}}+\int_{\R^2\times \{-1,1\}}\braket{\mathbf A\wedge \mathbf B}\,.
\end{aligned}
\end{equation}
which we identify with the action of a 3d topological BF theory. The first term in the above action is a kinetic term and represents the free propagation of a gauge field between states $\mathbf A$ and $\mathbf B$. We use a convention where the corresponding propagator is represented by an oriented edge going from $\mathbf A$ to $\mathbf B$. The form of the cubic interaction term then implies that the only allowed interaction vertices in the theory are the of the form shown in figure \ref{fig:interaction}, with one incoming $\mathbf B$-edge and two outgoing $\mathbf{A}$-edges. We will say more on this in section \ref{sec:admissible}.
\begin{figure}[H]
    \centering
   \begin{tikzpicture}[vertex/.style={draw,circle, fill=black, inner sep=1.5pt}]
  \node[vertex] (v0) at (0,0)  {}; 
  \node (v1) at (0,-1)  {};
  \node (v2) at (-0.8,0.7) {};
  \node (v3) at (0.8,0.7) {};
  \draw[<-] (v0) -- (v1) node[below]{$\mathbf B$};  
  \draw[->] (v0) -- (v2) node[above]{$\mathbf A$};
  \draw[->] (v0) --  (v3)node[above]{$\mathbf A$};

  %\draw[->] (0,0.4) arc (90:330:0.4);
\end{tikzpicture}
    \caption{: Interaction vertices in the relevant split Chern-Simons theory}
    \label{fig:interaction}
\end{figure}
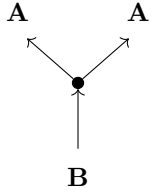
\noindent In what follows we take $M$ to be a manifold with boundaries, $M=\R^2\times I$, where $I=[-1,1]$. In this setting, when varying the action with respect to the gauge field, i.e. $\mathbf A\to \mathbf A + d\chi_{\mathbf A}$ and $\mathbf B\to \mathbf B+d\chi_{\mathbf B}$, we pick up a boundary term:
$$
\delta S_\text{CS}=\cdots+\frac{1}{2\pi}\int_{\R^2\times\{-1,1\}}\Tr(d\chi_{\mathbf A}\wedge d\mathbf B+d\chi_{\mathbf B}\wedge d\mathbf A). 
$$
Therefore, in order to have a consistent theory in the presence of boundaries, we must impose boundary conditions on the gauge field such that this term vanishes (see e.g. \cite{CWYI}). We accommodate for this by requiring that $\mathbf A=0$ on the upper boundary $\R^2\times \{1\}$ and $\mathbf B=0$ on the lower boundary $\R^2\times \{-1\}$. 

\subsection{The propagator}\label{sec:propagator}
As explained above the gauge field can propagate between states $\mathbf A^a(x)$ and $\mathbf B_b(y)$ for some $x,y\in M$ and $a,b\in \{1,\dots, \dim \mathfrak a\}$. The corresponding probability distribution is a two form ${P^a}_b(x,y)$ known as the propagator. It satisfies the following defining relations:
%In the figure, the edge going between the lines represents a propagator which 
\begin{equation}\label{eq:anti-sym}
%\braket{\mathbf A^a(x)\,\mathbf B_b(y)}=-\braket{\mathbf B_b(y)\,\mathbf A^a(x)},
{P^a}_b(x,y)=-{P_b}^a(y,x)
\end{equation}
%Define $P^{ab}\in \Omega^2(\R^3\times \R^3\setminus \Delta)$ by $P^{ab}(x,y)=\braket{\alpha^a(x)\beta^b(y)}$.
\begin{equation}\label{eq:prop}
%d\braket{\mathbf A^a(x)\,\mathbf B_b(y)}={\delta^a}_b\delta^{(3)}(x,y).
d{P^a}_b(x,y)={\delta^a}_b\delta^{(3)}(x,y).
\end{equation}
%where $\mathscr C(\g)\in\g\otimes\g$ is the Casimir element given by
%$$
%\mathscr C(\g)=\sum_{a\in B(\mathfrak a)}(a\otimes a^*+a^*\otimes a)
%$$
where $d$ is the differential operator and $\delta^{(3)}(x,y)$ is the Dirac delta function. Furthermore, the boundary conditions on the gauge field translate to the following constraint on the propagator: 
\begin{equation}\label{eq:bc}
{P^a}_b(x,y)=0 \text{ when } x\in \R^2\times \{1\} \text{ or } y\in\R^2\times \{-1\}.
\end{equation}
%A propagator satisfying the constraint in equation \eqref{eq:anti-sym}-\eqref{eq:bc} is constructed as follows: 
%Take $P$ to be the pull back of a two-form $\omega\in\Omega^2(S^2)$ via the map $\phi:(\R^3\times\R^3)\setminus\Delta\to S^2$ given~by 
%\begin{equation*}%\label{phi}
%    \phi(x,y)=\frac{y-x}{|y-x|}\,.
%\end{equation*}
%The boundary constraints in equation \eqref{eq:bc} can be accommodated by requiring $\omega$ to only be supported locally around the north pole $x_{np}=(0,0,1)$. 
%Concretely, we define
%\begin{align}
%    \omega\coloneqq f\vol_{S^2}\in\Omega^2(S^2)\,.
%\end{align}
%where $\vol_{S^2}$ is the unit volume form on $S^2$ given in terms of the coordinates on $\R^3$ by: $\vol_{S^2}=x\,dy\,dz+y\,dz\,dx+z\, dx\, dy$ and $f:S^2\to\mathbb R$ is a smooth function supported in a small neighbourhood around $x_{np}$ and normalized so that $\omega$ integrates to one on $S^2$.
Let $\phi:(\R^3\times\R^3)\setminus\Delta\to S^2$ be the map
\begin{equation*}%\label{phi}
    \phi(x,y)=\frac{y-x}{|y-x|}\,.
\end{equation*}
and define $\omega\in\Omega^2(S^2)$ by
\begin{align*}
    \omega\coloneqq f\vol_{S^2}\in\Omega^2(S^2),
\end{align*}
where $\vol_{S^2}$ is the unit volume form on $S^2$ given in terms of the coordinates on $\R^3$ by: $\vol_{S^2}=x\,dy\,dz+y\,dz\,dx+z\, dx\, dy$ and $f:S^2\to\mathbb R$ is a smooth function supported in a small neighbourhood around $x_{np}=(0,0,1)$ and normalized so that $\omega$ integrates to one on $S^2$. 

\begin{proposition} If we define $P\in\Omega^2(\R^3\times \R^3\setminus \Delta)$ by 
\begin{equation}\label{eq:propagator}
P=\phi^*\omega,
\end{equation} 
then ${P^a}_b(x,y)\coloneqq P(x,y){\delta^a}_b$ satisfies the constraints in equation \eqref{eq:anti-sym}-\eqref{eq:bc}.
\end{proposition}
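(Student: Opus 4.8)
The plan is to check the three relations \eqref{eq:anti-sym}--\eqref{eq:bc} separately, using throughout that $P=\phi^*\omega$ is pulled back from a form of top degree on $S^2$. Since ${P^a}_b(x,y)=P(x,y)\,{\delta^a}_b$ and ${\delta^a}_b={\delta_b}^a$, each of the three relations reduces to a statement about the scalar two-form $P(x,y)$. Because $\omega$ has top degree we have $d\omega=0$, so $dP=\phi^*(d\omega)=0$ on $\R^3\times\R^3\setminus\Delta$; the entire content of \eqref{eq:prop} is thus concentrated on the diagonal and must be read as an identity of currents.

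For \eqref{eq:prop} I would use that $\phi$ restricts to a degree-one map on the sphere of directions normal to $\Delta$. Passing to the blow-up of $\Delta$ recalled in the introduction, the restriction of $\phi$ to each exceptional $S^2$-fibre is the identity, and the normalisation $\int_{S^2}\omega=1$ together with Stokes' theorem identifies the distributional exterior derivative of $P$ with the delta current supported on $\Delta$. Multiplying by ${\delta^a}_b$ gives \eqref{eq:prop}. This is the one step that genuinely relies on the compactified configuration space rather than on a pointwise computation.

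For the antisymmetry \eqref{eq:anti-sym} let $\sigma(x,y)=(y,x)$ denote the swap and let $\nu\colon S^2\to S^2$ be the antipodal map, so that $\phi\circ\sigma=\nu\circ\phi$ and hence $\sigma^*P=\phi^*\nu^*\omega$. Since $\nu$ reverses orientation on $S^2$ we have $\nu^*\vol_{S^2}=-\vol_{S^2}$, i.e. the solid-angle form $\phi^*\vol_{S^2}$ is odd under $\sigma$. Provided the representative $\omega$ is chosen compatibly with this sign, one obtains $\sigma^*P=-P$, which is precisely $P(x,y)=-P(y,x)$ and therefore \eqref{eq:anti-sym}.

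Finally, \eqref{eq:bc} follows from the support of $f$. For $x,y\in M=\R^2\times I$ with $x$ on the upper boundary $\R^2\times\{1\}$, or $y$ on the lower boundary $\R^2\times\{-1\}$, the third coordinate of $y-x$ is non-positive, so $\phi(x,y)$ lies in the closed lower hemisphere of $S^2$; as $f$ is supported in a small neighbourhood of the north pole $x_{np}=(0,0,1)$, the direction $\phi(x,y)$ avoids $\mathrm{supp}(f)$ and hence $P(x,y)=0$. The main obstacle I anticipate is the interplay between this step and the antisymmetry step: boundary vanishing forces $\omega$ to be concentrated away from the lower hemisphere, whereas \eqref{eq:anti-sym} involves the antipodal map, which exchanges the two hemispheres. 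Reconciling the support required for \eqref{eq:bc} with the sign behaviour required for \eqref{eq:anti-sym}, while keeping $\int_{S^2}\omega=1$ for \eqref{eq:prop}, is the crux, and I expect it to be controlled by the directional, oriented-edge convention for the propagator, under which reversing an edge is what supplies the sign recorded in \eqref{eq:anti-sym}.
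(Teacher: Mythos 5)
Your treatment of \eqref{eq:prop} is, in substance, the paper's own proof: the paper fixes $x$, integrates $dP$ over a unit ball $B_x$, and uses Stokes' theorem plus the normalisation $\int_{S^2}\omega=1$ to identify the result with $1$; your blow-up/current phrasing is the same degree-one computation in different clothing. Your argument for \eqref{eq:bc} is correct and is actually more than the paper provides, since the published proof verifies only \eqref{eq:prop} and says nothing about the other two constraints.

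The obstacle you flag for \eqref{eq:anti-sym} is genuine, and your diagnosis is exactly right. Writing $\sigma(x,y)=(y,x)$ and $\nu$ for the antipodal map, one has $\sigma^*P=\phi^*\nu^*\omega$, and since $\nu^*\omega$ is supported near the south pole, $\sigma^*P$ is supported on configurations where $y-x$ points nearly downward, while $-P$ is supported where $y-x$ points nearly upward. The two sides of the would-be identity $\sigma^*P=-P$ thus have disjoint supports, so the only $f$ supported near $x_{np}$ for which $P$ is literally antisymmetric is $f=0$: your caveat ``provided the representative $\omega$ is chosen compatibly with this sign'' cannot be discharged while keeping \eqref{eq:bc} and $\int_{S^2}\omega=1$. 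The resolution is the one you anticipate in your final sentence: in this theory the propagator is directional (an oriented edge from $\mathbf A$ to $\mathbf B$), relation \eqref{eq:anti-sym} is the sign convention defining the propagator read against the edge orientation, and that sign is implemented combinatorially by the orientation form of definition \ref{def:Or}, where transposing $(h,h')$ reverses $dg(h)\wedge dg(h')$. It is not a pointwise property of the form \eqref{eq:propagator}, and the paper's own proof does not address it either; so on this point your proposal is not weaker than the paper --- it simply makes explicit a subtlety the paper leaves implicit.
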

\begin{proof}
    Since $\omega$ is a top-dimension form on $S^2$ it holds that $dP(x,y)=0$ away from the diagonal $x=y$. To see that $dP$ it is in fact the Dirac delta function we use Stokes' theorem: Fix some $x\in \R^3$ and let $B_x$ be the unit ball centered at $x$
\begin{equation*}
    \int_{y\in B_x}dP(x,y)=\int_{y\in B}dP(0,y)=\int_{y\in S^2}P(0,y)=\int_{y\in S^2}\omega(y)=1.\qedhere
\end{equation*}
\end{proof}
%In the following we refer to the two-form $P$ as the propagator. 
\subsection{Wilson lines}\label{sec:W-lines}
%\subsection{Leading order interactions}
With our choice of boundary conditions the global gauge symmetry of the action is completely broken. As a consequence, the theory admits a set of gauge invariant operators known as Wilson lines (see \cite{CWYI} for more details). For the present purpose we will think of a Wilson line simply as a proper embedding in $L:\R\hookrightarrow \R^2\times I$ parallel to the boundary,
%labeled by some representation of $\g$, 
along with a rule that a gauge field $\mathbf A^a$ (resp. $\mathbf B_a$) couples to $L$ by inserting a basis element $\xi_a$ (resp. $\zeta^a$) at the corresponding point in $L$.
%The weight is computed using a set of Feynman rules that can be derived from the Chern-Simons action in equation \eqref{CS action}.
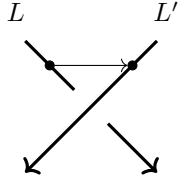
\begin{figure}[H]
    \centering
\begin{tikzpicture}[vertex/.style={draw,circle, fill=black, inner sep=1.2pt}]

\node (b1) at (-1,1)  {}; 
  \node (b2) at (-0.1,0.1)  {};
  \node (b3) at (0.1,-0.1)  {};
  \node (b4) at (1,-1)  {};
  \draw[very thick] (b1) node[above]{$L$} -- (b2);
  \draw[very thick,->] (b3) -- (b4);
  
  \node (c1) at (1,1)  {}; 
  \node (c2) at (-1,-1)  {};
  \draw[very thick,->] (c1)node[above]{$L'$} --  (c2);

  \node[vertex] (v1) at (-0.55,0.55) {};
   \node[vertex] (v2) at (0.55,0.55) {};
   \draw[->] (v1) -- node[above]{} (v2);
\end{tikzpicture} 
\caption{: The projection onto $\R^2$ of a pair of crossing Wilson lines. }
    \label{fig:crossing}
\end{figure}
%and the expectation value $\braket{LL'}$ is given in perturbation theory as a power expansion in the parameter $\hbar$. Each term in the expansion is represented by a set of weighted graph (Feynman diagrams). 
\noindent Consider for example a pair of Wilson lines $L$ and $L'$ supported at different points in $I$ and crossing in $\R^2$ as shown in figure \ref{fig:crossing}. The two Wilson lines interact by exchanging gauge bosons. The simplest (leading order) 
%diagram corresponds to 
interaction corresponds to a single gauge boson propagating between the lines. This interaction is illustrated in figure \ref{fig:crossing}, where the oriented edge represents a propagator. The corresponding amplitude is given~by
$$
\hbar \int_{x\in L, y\in L'}P(x,y){\delta^{a}}_b~\xi_a\otimes \zeta^b,,
$$
where $\hbar$ is a small expansion parameter. At higher orders in $\hbar$ we get interactions coming from the cubic interaction term in the Chern-Simons action in equation \eqref{eq:action}. Each interaction is represented by a directed graph (Feynman diagram) with three-valent interaction vertices in the bulk and one-valent vertices along the Wilson lines. The expectation value for the interaction is an element $\braket{LL'}\in \mathcal U(\g)^{\otimes 2}$ given as an perturbative expansion in $\hbar$ in terms of the set of Feynman diagrams:
\begin{equation*}
\braket{LL'}=\sum_{\Gamma}\hbar^{\ord(\Gamma)}\mathcal M(\Gamma),
\end{equation*}
where $\ord(\Gamma)$ is the number of edges of $\Gamma$ minus the number internal vertices and the weight (amplitude) $\mathcal M(\Gamma)$ is determined by the Feynman rules.
\begin{remark}\label{rmk:R}
    On the surface it appears that the expectation value $\braket{LL'}$ depends on the angle of crossing between the lines $L$ and $L'$. We will argue in section \ref{sec:conclusion} that $\braket{LL'}$ is in fact independent of the angle. For now we take this for given and define $\mathcal R\in\mathcal U(\g)^{\otimes 2}$ by
\begin{equation}\label{eq:R}
    \mathcal R \coloneqq \braket{LL'}.
\end{equation}
\end{remark}

\subsection{The $R$-matrix from crossing Wilson lines}
The goal of the remainder of this paper is to show that the element $\mathcal R$ is a quantum $R$-matrix, that is,
%at a crossing between a pair of Wilson lines 
it satisfies the Yang-Baxter equation \eqref{eq:YBE}. 
% that is
%$$
%R_{12}R_{13}R_{23}=R_{23}R_{13}R_{12}\,.
%$$
In this framework, the lines in the Yang-Baxter picure should be thought of as representing Wilson line operators supported at different points in $I$. With this as our motivation we define the following smooth family of proper embeddings:
\begin{definition}\label{def:W-lines} Let $L_t$ be a family of embeddings
$$
L_t:\coprod_{\alpha=1,2,3}\R_\alpha\hookrightarrow \R^2\times I,
$$
parametrized by $t\in[0,1]$, where $L_t|_{\R_\alpha}=L_{\alpha,t}:\R\hookrightarrow \R^2\times I$ is given by
\begin{equation*}
    \begin{aligned}
    L_{1,t}: s\mapsto (-s/\sqrt{2},s/\sqrt{2},-1/2)\ , \ \
    L_{2,t}: s\mapsto (t,s,0)\ , \ \
    L_{3,t}: s\mapsto (s/\sqrt{2},s/\sqrt{2},1/2).
    \end{aligned}
\end{equation*}
\end{definition}
The family of embeddings defined above is illustrated in figure \ref{fig:W-lines} which shows the projection onto $\R^2$. As $t$ increases, the lines $L_{1,t}$ and $L_{3,t}$ are held fixed while $L_{2,t}$ is dragged continuously over the crossing between the other two lines. 
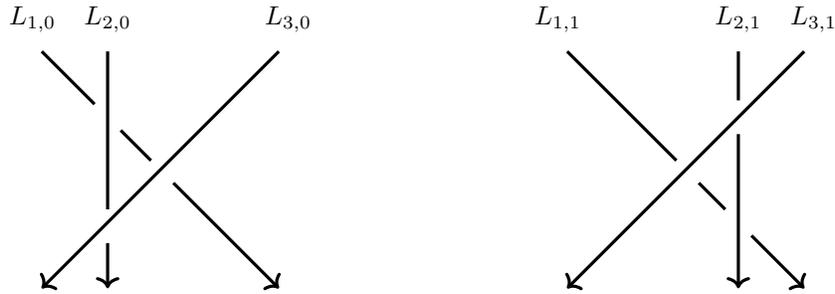
\begin{figure}[H]
    \centering
\begin{tikzpicture}[vertex/.style={draw,circle, fill=black, inner sep=1.5pt}]

  \node (a1) at (-0.7,1.7)  {}; 
  \node (a2) at (-0.7,-0.65)  {};
  \node (a3) at (-0.7,-0.85)  {}; 
  \node (a4) at (-0.7,-1.7)  {};
  \draw[very thick] (a1)node[above]{$L_{2,0}$} --  (a2);
  \draw[very thick,->] (a3) --  (a4);

  \node (b1) at (-1.7,1.7)  {}; 
  \node (b2) at (-0.75,0.75)  {};
   \node (b3) at (-0.65,0.65)  {}; 
  \node (b4) at (0,0)  {};
  \node (b5) at (0.05,-0.05)  {};
  \node (b6) at (1.7,-1.7)  {};
  \draw[very thick] (b1) node[above]{$L_{1,0}$} -- (b2);
  \draw[very thick] (b3) -- (b4);
  \draw[very thick,->] (b5) -- (b6);
  
  \node (c1) at (1.7,1.7)  {}; 
  \node (c2) at (-1.7,-1.7)  {};
  \draw[very thick,->] (c1)node[above]{$L_{3,0}$} --  (c2);

\end{tikzpicture} \hspace{2.5cm}
\begin{tikzpicture}[vertex/.style={draw,circle, fill=black, inner sep=1.5pt}]

  \node (a1) at (0.7,1.7)  {};
  \node (a2) at (0.7,0.8)  {};
  \node (a3) at (0.7,0.6)  {};
  \node (a4) at (0.7,-1.7)  {};
  \draw[very thick] (a1)node[above]{$L_{2,1}$} --  (a2);
  \draw[very thick,->] (a3) --  (a4);

  \node (b1) at (-1.7,1.7)  {}; 
  \node (b2) at (0,0)  {};
  \node (b3) at (0.05,-0.05)  {};
  \node (b4) at (0.65,-0.65)  {};
  \node (b5) at (0.75,-0.75)  {};
  \node (b6) at (1.7,-1.7)  {};
  \draw[very thick] (b1) node[above]{$L_{1,1}$} -- (b2);
  \draw[very thick] (b3) -- (b4);
  \draw[very thick,->] (b5) -- (b6);
  
  \node (c1) at (1.7,1.7)  {}; 
  \node (c2) at (-1.7,-1.7)  {};
  \draw[very thick,->] (c1)node[above]{$L_{3,1}$} --  (c2);

\end{tikzpicture} 
\caption{: The embedding space $L_t$ where $t\in [0,1]$.}
    \label{fig:W-lines}
\end{figure}
%We will compute this expectation value in perturbation theory as an expansion in terms of a set of weighted graphs called Feynman diagrams with vertices along the Wilson lines and in the ambient space: 
\noindent For each $t\in[0,1]$, the corresponding expectation value is an element $$\braket{L_t}\coloneqq\braket{L_{1,t}L_{2,t} L_{3,t}}\in \mathcal U(\g)^{\otimes 3}.
$$ 
The following section is dedicated to giving a precise definition of $\braket{L_t}$, which on the surface appears to depend on the parameter $t\in[0,1]$. The main objective of this paper is to show that $\braket{L_t}$ is in fact independent on $t$. Since the form of the propagator ensures that interactions only take place in a small neighbourhood around each crossing, this will imply the the expectation value of a pair of crossing Wilson lines is an $R$-matrix. A formal argument for this is given in section \ref{sec:admissible} below.
%is a power expansion in $\hbar$ given~by
%\begin{equation}\label{eq:graph sum}
%\braket{L_t}=\sum_{\Gamma\in\G}\hbar^{\ord(\Gamma)}\mathcal M_t(\Gamma),
%\end{equation}
%where $\G$ is the set of Feynman graphs, which will be defined in the following section. Notice that the amplitudes $\mathcal M_t(\Gamma)$ now depend on the parameter $t\in[0,1]$.

%As we shall see, the form of the propagator in equation \eqref{eq:prop} ensures that interactions can only take place in a small neighbourhood around each crossing. Thus, showing that the expectation value of a pair of crossing Wilson lines satisfies the Yang-Baxter equation amounts to showing that $\braket{L_1}-\braket{L_0}=0$. 
%Since we are interested in the how the expectation value of a set of Wilson lines changes under displacing the lines, we will in the remainder of the paper consider more generally a smooth family of $k$ proper embeddings $L_{1,t},\dots, L_{k,t}\subset \R^2\times I$ parametrized by $t\in[0,1]$. Then, for each $t\in[0,1]$, the expectation value $\braket{L_t}\coloneqq\braket{L_{1,t}\cdots L_{k,t}}\in \mathcal U(\g)^{\otimes k}$ is a power expansion in $\hbar$ given~by
%\begin{equation}\label{eq:sum}
%    \braket{L_t}=\sum_\Gamma \hbar^{\ord (\Gamma)} A_t(\Gamma)\in \mathcal U(\g)^{\otimes k},
%\end{equation}
%where the sum runs over all Feynman diagrams $\Gamma$ and $\ord(\Gamma)$ is defined as the number of edges minus the number of internal vertices of $\Gamma$. 

\section{Chern-Simons Perturbation Theory}\label{sec:Feynman}
%and present the Feynman rules for computing the amplitude $\mathcal M_t(\Gamma)$ associated to each graph. 
In this section we give a definition of the expectation value $\braket{L_t}$ in the formalism of perturbation theory. As mentioned, $\braket{L_t}$ is given by an expansion in $\hbar$ in terms of a set of weighted graphs called Feynman graphs which we define in subsection \ref{sec:graphs} below. %In particular, in section \ref{sec:graphs} we define the relevant set of graphs, and in section \ref{sec:conf} and \ref{F-rules} we define the associated weight (amplitude).

\subsection{Feynman graphs}\label{sec:graphs}
We here define the relevant set of graphs contributing to the expectation value $\braket{L_t}$. %These are graphs with three-valent interaction vertices in the bulk and one-valent vertices on Wilson lines. 
Given $m\in\mathbb Z_{\geq 0}$ and $\n=(n_1,n_2,n_3)$ a tuple of integers $n_\alpha\in\mathbb Z_{\geq 0}$, we first fix the data corresponding to the sets of $m$ internal (bulk) vertices and of $n_\alpha$ external vertices on the Wilson line $L_{\alpha,t}$, along with a set of half-edges incident on each vertex:%, with three half-edges incident on and internal vertex and a single half-edge incident on an external vertex:
%Given $m\in\mathbb Z_{\geq 0}$ and $\n=(n_1,n_2,n_3)$ a tuple of integers $n_\alpha\in\mathbb Z_{\geq 0}$, we define a set of graphs with fixed set of $m$ internal vertices and $n_\alpha$ external vertices on the Wilson line $L_{\alpha,t}$, along with a fixed set of half-edges at each vertex. We begin by defining this data:
%We first fix the data corresponding to the set of vertices and half-edges of the graphs. 

Let $n=\sum_\alpha n_\alpha$. We define a set $\mathcal V$ of vertices consisting of:
    \begin{enumerate}
        \item A set of internal vertices $V=\{v_1,\dots,v_m\}$.
        \item An set of external vertices $W_\alpha$ in each Wilson line $L_{\alpha,t}$, given by:
        $$
        W_1=\{w_1,\dots, w_{n_1}\} \ , \ \  W_2=\{w_{n_1+1},\dots, w_{n_1+n_2}\} \ , \ \ W_3=\{w_{n_1+n_2+1},\dots, w_{n}\}. 
        $$
        We write $W=\bigcup_{\alpha=1}^3W_\alpha$ and $\underline W=(W_1,W_2,W_3)$.
    \end{enumerate}
Moreover, we define a set $\mathcal H$ of half-edges consisting of:
    \begin{enumerate}
        \item A set of half-edges $\{h_i^1,h_i^2,h_i^3\}$ for each internal vertex $v_i\in V$.
        \item A single half-edge $h_j$ for each external vertex $w_j\in W$.
    \end{enumerate}
Finally, we denote by $s:\mathcal H\to \mathcal V$ the source map $s(h_i^k)= v_i$ and $s(h_j)= w_j$. 

With the above data fixed, the only data needed to define a graph is an involution of the set of half-edges to form edges. In addition, we want the definition of a Feynman graph to include an orientation of the edges and a Lie algebra labeling of the half-edges. This leads to the following definition:

\begin{definition}[Feynman graphs]\label{def:graphs} A Feynman graph $\Gamma\in \G_{m,\n}$ is defined by the following data:
\begin{enumerate}[(i)]
    \item A free involution $\iota:\mathcal H\to \mathcal H$ such that, if $\iota(h^k_{i})=h^l_j$ then $i\neq j$. A pair $\{h,\iota(h)\}$ is called an edge and we denote the set of edges by $E(\Gamma)$.
    \item An orientation of the edges corresponding to an ordering $(h,h')$ of each pair $\{h,h'\}\in E(\Gamma)$.
    \item An assignment $\tau:\mathcal H\to \mathcal B(\mathfrak a)\cup \mathcal B(\mathfrak a^*)$ such that if $(h,h')\in E(\Gamma)$ then $\tau(h)\in \mathcal B(\mathfrak a)$ and $\tau(h')=\tau(h)^*\in \mathcal B(\mathfrak a^*)$.
\end{enumerate}
\end{definition}
We write $\G=\bigcup_{m,\n}\G_{m,\n}$ for the collection of all Feynman graphs. When writing the expectation $\braket{L_t}$ we only wish to sum over isomorphism classes of Feynman graphs. Let us therefore make precise what it means for two Feynman graphs to be isomorphic. 
\begin{definition}\label{iso}  Two graphs $\Gamma,\Gamma'\in\G_{m,\n}$ are said to be isomorphic, and we write $\Gamma\sim\Gamma'$, if there are bijections
    $$F_{\mathcal V}:\mathcal V\to\mathcal V \ , \ \ F_{\mathcal H}:{\mathcal H}\to {\mathcal H}$$
such that:
\begin{enumerate}[(i)]
    \item $F_{\mathcal V}$ acts as the identity map on the set of external vertices.
    %$F_{\mathcal V}\big|_{W}:W\to W$ is the identity map.
    %order preserving. 
    \item $(F_{\mathcal V},F_{\mathcal H})$ is a graph isomorphism: $F_{\mathcal V}\circ s=s\circ F_{\mathcal H}$ and $F_{\mathcal H}\circ \iota=\iota'\circ F_{\mathcal H}$.
    \item $(F_{\mathcal V},F_{\mathcal H})$ preserves the edge orientation: If $(h,h')\in {E}(\Gamma)$ then $(F_{\mathcal H}(h),F_{\mathcal H}(h'))\in E(\Gamma')$. %$$\rho'_{\{F_H(h),F_H(h')\}}\circ \left. F_H\right|_e=\rho_{\{h,h'\}}\,.$$
    \item $(F_{\mathcal V},F_{\mathcal H})$ preserves the Lie algebra decoration of edges: $\tau(h,h')=\tau'(F_{\mathcal H}(h),F_{\mathcal H}(h'))$.
\end{enumerate}
%Furthermore, to ease notation we suppress the indices $\n$ and $m$ when the context does not allow for confusion. 
\end{definition}
\subsection{The configuration space of vertices}\label{sec:conf}
%In the remainder of the paper we consider more generally a smooth family of proper embeddings of $k$ copies of $\R$ into $\R^2\times I$ parametrized by $t\in[0,1]$:
%\begin{equation}\label{eq:L_t}
%L_t:\coprod_{i=1}^k\R_i\hookrightarrow \R^2\times I
%\end{equation}
%and we write $L_t\big|_{\R_i}\coloneqq L_{i,t}$. Notice that for each $t\in[0,1]$ the expectation value is now an element $\braket{L_t}=\braket{L_{1,t}\cdots L_{k,t}}\in \mathcal U(\g)^{\otimes k}$. %We begin by defining the relevant set of Feynman graphs below.
%given by a power expansion in terms of a set of weighted graphs. 
%In general
%\begin{equation}\label{eq:expansion}
%\braket{L_t}=\sum_{\Gamma} \hbar^{\ord\Gamma} A_t(\Gamma)\in \mathcal U(\g)^{\otimes k}
%\end{equation}
We wish to consider the space of embeddings of the vertices $V\cup W$ of Feynman graphs into $\R^2\times I$, such that for each $\alpha\in \{1,2,3\}$ the set of external vertices $W_\alpha$ maps to the Wilson line $L_{\alpha,t}$. We here give a formal definition of the space in question, following the definition given by Bott and Taubes in \cite{BT}. 

Let %a general manifold $M$ and 
$S$ be some ordered set. We denote by $\Conf_S(\R^2\times I)$ the configuration space of $|S|$ ordered points in $\R^2\times I$, i.e. the space of injections $S\hookrightarrow \R^2\times I$. Moreover, we denote by $\Conf_S(\R)$ the space of injections $S\hookrightarrow \R$ such that the points in $S$ are placed in increasing order along $\R$. Recall definition \ref{def:W-lines} and observe that an embedding $L_{t,\alpha}:\R\hookrightarrow \R^2\times I$ induces an embedding of configuration spaces
$
\Conf_{W_\alpha}(\R)\hookrightarrow \Conf_{W_\alpha}(\R^2\times I).
$
Hence we have a map:
\begin{equation}\label{def:A} 
\mathscr L: \prod_{\alpha=1}^k\Conf_{W_\alpha}(\R)\times [0,1] \longrightarrow  \Conf_{W}(\R^2\times I)\,.
\end{equation} 
%Furthermore, let $\Conf_{S}(\R)$ be the set of injection $S\hookrightarrow \R$ such that the points in $S$ are arranged in increasing order along $\R$. 
The relevant configuration space $\Conf_{V,\underline W}$ is now defined as the pullback:
\begin{equation}\label{Conf_nm}
    {\begin{tikzcd}
    \Conf_{V,\underline W}\ar[d] \ar[r] & \Conf_{V\cup W}(\R^2\times I) \ar[d,"\pi"] \\
    \prod_{\alpha=1}^3 \Conf_{W_\alpha}(\R)\times [0,1] \ar[r,"\mathscr L"] & \Conf_W(\R^2\times I)~.
    \end{tikzcd}}
\end{equation}
In particular, we can describe $\Conf_{V,\underline{W}}$ as the set of points $(t,q,p)$, where $t\in [0,1]$, $q\in \prod_{\alpha=1}^3 \Conf_{W_\alpha}(\R)$ and $p\in \Conf_{V}\big(\R^2\times I\setminus\big\{\mathscr L(q,t)(w_i)\big\}_{w_i\in W}\big)$. 

Notice that we have a projection 
$$
\Conf_{V,\underline W}\to [0,1]
$$
via the map on the left-hand side of the diagram \eqref{Conf_nm}. We write $\Conf^t_{V,\underline W}$ for the fiber of this map over $t\in[0,1]$.
%Moreover, 
\subsection{The expectation value} \label{F-rules}
%and $t\in[0,1]$ 
%an element $\lambda(\Gamma)\in\mathcal U(\g)^{\otimes 3}$. 
%\subsection{The Feynman rules} \label{F-rules}
We are now equipped to present the Feynman rules that determines the amplitude $\mathcal{M}_t(\Gamma)$ associated to any $\Gamma\in \G$ and $t\in[0,1]$. Our first step is to define a differential form of $\lambda(\Gamma)$ on $\Conf_{V,\underline W}$ as follows: For each edge $e=(h,h')\in E(\Gamma)$, let
$\phi_e: \Conf_{V\cup W}(\R^2\times I)\to S^2$ be the map
\begin{equation*}
    \phi_{e}(x)=\frac{x(s(h'))-x(s(h))}{|x(s(h'))-x(s(h))|},
\end{equation*} 
where $s:\mathcal H\to \mathcal V$ is the source map (see section \ref{sec:graphs}). Furthermore, let
$
\Phi_{e}: \Conf_{V,\underline W}\to S^2
$ 
be the pull back of $\phi_e$ to $\Conf_{V,\underline W}$ along the map in the top row of diagram \eqref{Conf_nm} and write $P_e=\Phi_e^*\omega\in \Omega^2(\Conf_{V,\underline W})$. We define %$\lambda(\Gamma)\in \Omega^{d-1}(\Conf_{V,\underline W})$ by:
\begin{equation}
\lambda(\Gamma)\coloneqq \bigwedge_{e\in E(\Gamma)}P_e.
\end{equation}
Notice that the degree of $\lambda(\Gamma)$ is $2|E|=3|V|+|W|$ and hence $\lambda(\Gamma)$ is a form of co-dimension one on $\Conf_{V,\underline W}$.
%where $d=3|V|+|W|$ is the dimension of $\Conf_{V,\underline W}$. 
Moreover, we associate to $\Gamma$ a Lie-algebra factor $c(\Gamma)\in \mathcal U(\g)^{\otimes 3}$ as follows:
\begin{enumerate} [(i)]
\item For each internal vertex $v_i$ we multiply by a factor: 
%\begin{figure}[H]
\begin{center}
\begin{tikzpicture}[vertex/.style={draw,circle, fill=black, inner sep=1.5pt}]
  \node[vertex] (v0) at (0,0)  {}; 
  \node (v1) at (0,-0.9)  {};
  \node (v2) at (-0.8,0.6) {};
  \node (v3) at (0.8,0.6) {};
  \node (a) at (1.5,-0.2) {$\sim$};
  \node (b) at (3.7,-0.2) {$\big<[\tau(h_i^1),\tau(h_i^2)],\tau(h_i^3)\big>$};
  \draw (v0) --  (v1)node[right]{};  
  \draw (v0) --  (v2)node[above]{};
  \draw (v0) --  (v3)node[above]{};

  \node at (0.25,-0.15) {$v_i$};
  \end{tikzpicture} 
\end{center}

\item For each Wilson line $L_\alpha$ we get an element of $\mathcal U(\g)$ given by:
\begin{center}
\begin{tikzpicture}[vertex/.style={draw,circle, fill=black, inner sep=1.5pt}]
\node (a) at (-1.6,0)  {}; 
  \node (b) at (1.7,0)  {};
  \node at (2.5,0.2) {$\sim$};
  \node at (5.3,0.2) {$\cdots \tau(h_j)\tau(h_{j+1})\tau(h_{j+2}) \cdots$};
  \draw[ultra thick,->] (a) --  (b);
  
  \node[vertex] (w) at (0,0) {};
  \node (v) at (0,1) {};
  \draw (w)node[below]{$w_{j+1}$} -- node[right]{}(v);

  \node[vertex] (w1) at (-0.9,0) {};
  \node (v1) at (-0.9,1) {};
  \draw (w1)node[below]{$w_j$} -- node[right]{}(v1);

  \node[vertex] (w2) at (0.9,0) {};
  \node (v2) at (0.9,1) {};
  \draw (w2)node[below]{$w_{j+2}$} -- node[right]{}(v2);

  \node at (-1.3,0.5) {$\cdots$};
  \node at (1.4,0.5) {$\cdots$};
  
\end{tikzpicture} 
\end{center}
\end{enumerate}
In other words, $c(\Gamma)$ is given by
\begin{equation}\label{eq:color}
c(\Gamma)=\prod_{i=1}^{m}\big<[\tau(h_i^1),\tau(h_i^2)],\tau(h_i^3)\big>~\prod_{j=1}^{n_1}\tau(h_j)\otimes \prod_{k=n_1+1}^{n_2}\tau(h_k)\otimes \prod_{l=n_1+n_2}^{n}\tau(h_l).
\end{equation}
Given $t\in [0,1]$ and $\Gamma\in \G$ we now wish to define the amplitude $\mathcal{M}_t(\Gamma)$ as the integral of the element $\lambda(\Gamma)c(\Gamma)$ over the configuration space of vertices $\Conf^t_{V,\underline W}$.
%\begin{equation}\label{eq:F-int}
%    \mathcal M_t(\Gamma)=\int_{\Conf^t_{V,\underline W}}\lambda(\Gamma)~c(\Gamma).
%\end{equation}
%\subsection{The Expectation Value} 
However, to properly define such an integral 
%the integral in equation \eqref{eq:F-int} 
we must equip the configuration space with a suitable orientation form. Specifically, the orientation form in question must ensure that integrals are invariant under isomorphisms of $\Gamma\in \G$. Furthermore, the anti-symmetry relation in equation \eqref{eq:anti-sym} implies that changing the orientation of an edge must reverse the sign of orientation of the configurations space.
%Following along the lines of Altschuler and Freidel \cite{AF} we can associate an orientation form $\Or(\Gamma)$ to $\Conf_{V,\underline W}$ satisfying the these constraints as follows: 

Given a point $(q,p)\in \Conf^t_{V,\underline W}$ we write $p_i= p(v_i)\in \R^2\times I$ and $q_j= q(w_j)\in \R$. Then, a small neighbourhood of $(p,q)\in\Conf^t_{V,\underline W}$ has local coordinates $t\in\R$, $(p_{i}^{1},p_{i}^{2},p_{i}^{3})\in\R^3$ for each internal vertex $v_i\in V$ and $q_{j}\in\R$ for each external vertex $w_j\in W$.  %We can associate an orientation form $\Or(\Gamma)$ to $\Conf_{V,\underline W}$ satisfying the the above constraints as follows: 
\begin{definition}\label{def:Or} 
    Let $g:\mathcal H\to \R$ be the map $g(h_i^k)=p_i^{k}$ and $g(h_j)=q_j$. For each $\Gamma\in \G$ we define an orientation form on $\Conf^t_{V, \underline W}$ by
    \begin{equation*}  \Or(\Gamma)=\bigwedge_{(h,h')\in E(\Gamma)} \big(dg(h)\wedge dg(h')\big).
    \end{equation*}
    %\begin{equation*}  \Or(\Gamma)=\Bigg(\bigwedge_{(h^i_\alpha,h^j_\beta)\in E(\Gamma)} dp_{\beta}^{j}\wedge dp_{\alpha}^{i}\Bigg)\wedge\Bigg(\bigwedge_{(h_\gamma, h_\delta^k)\in E(\Gamma)}dp_\delta^{k}\wedge dq_\gamma\Bigg)\wedge\Bigg( \bigwedge_{(h_\kappa, h_\rho)\in E(\Gamma)} dq_\rho\wedge dq_\kappa\Bigg)\wedge dt\,.
    %\end{equation*}
\end{definition}
In the following we use the notation $\Conf^t(\Gamma)$ to denote the configuration space $\Conf^t_{V,\underline W}$ equipped with the orientation form $\Or(\Gamma)$. Similarly we denote by $\Conf(\Gamma)$ the configuration space $\Conf_{V,\underline W}$ equipped with the orientation form $\Or(\Gamma)\wedge dt$. We now define 
\begin{equation}\label{eq:F-int}
\mathcal M_t(\Gamma)=\int_{\Conf^t(\Gamma)}\lambda(\Gamma)\,c(\Gamma).
\end{equation}

\begin{proposition}\label{Or invariance} The Feynman amplitude $\mathcal M_t(\Gamma)$ in equation \eqref{eq:F-int} is invariant under isomorphisms of $\Gamma$.
\end{proposition}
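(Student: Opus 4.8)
The plan is to show that an isomorphism $(F_{\mathcal V},F_{\mathcal H})\colon\Gamma\to\Gamma'$ induces a self-diffeomorphism of the configuration space under which the three ingredients entering \eqref{eq:F-int} --- the form $\lambda$, the orientation $\Or$, and the color factor $c$ --- transform compatibly, so that the integral is unchanged. First I would record that, since $F_{\mathcal V}$ fixes the external vertices (Definition~\ref{iso}(i)) and the intertwining $F_{\mathcal V}\circ s=s\circ F_{\mathcal H}$ forces $F_{\mathcal H}$ to fix every external half-edge, the whole isomorphism reduces to a permutation $\sigma$ of the internal vertices together with, for each internal vertex $v_i$, a permutation $\pi_i$ of its three half-edges determined by $F_{\mathcal H}(h_i^k)=h_{\sigma(i)}^{\pi_i(k)}$ (the source condition places $F_{\mathcal H}(h_i^k)$ among the half-edges at $v_{\sigma(i)}$).

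Next I would introduce the geometric diffeomorphism $\Psi_F\colon\Conf^t_{V,\underline W}\to\Conf^t_{V,\underline W}$ relabelling the internal points by $F_{\mathcal V}^{-1}$ and fixing $q$ and $t$; in local coordinates $\Psi_F^*p_i^k=p_{\sigma^{-1}(i)}^k$. Two facts then need checking. First, $\Psi_F^*\lambda(\Gamma')=\lambda(\Gamma)$: each propagator $P_e=\Phi_e^*\omega$ depends only on the ordered pair of endpoint positions of $e$, and since $F$ carries edges to edges preserving their orientation (Definition~\ref{iso}(ii)--(iii)), pulling back the factor of $\Gamma'$ attached to $F(e)$ returns the factor of $\Gamma$ attached to $e$; as $\lambda$ is a wedge of commuting two-forms the full products agree. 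Second, $c(\Gamma')$ is a constant, so by the change-of-variables formula
\[
\mathcal M_t(\Gamma')=\int_{(\Conf^t,\,\Psi_F^*\Or(\Gamma'))}\lambda(\Gamma)\,c(\Gamma').
\]

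It then remains to compare orientations and colors, and this is where the signs must conspire. Using $g(h_i^k)=p_i^k$ from Definition~\ref{def:Or}, one finds $\Psi_F^*\,dg\big(F_{\mathcal H}(h_i^k)\big)=dg(h_i^{\pi_i(k)})$, so that $\Psi_F^*\Or(\Gamma')$ differs from $\Or(\Gamma)$ precisely by permuting, at each internal vertex, the three coordinate one-forms $dp_i^1,dp_i^2,dp_i^3$ through $\pi_i$; hence $\Psi_F^*\Or(\Gamma')=\big(\prod_i\mathrm{sgn}(\pi_i)\big)\Or(\Gamma)$. On the other hand, the vertex weight $\langle[\tau(h_i^1),\tau(h_i^2)],\tau(h_i^3)\rangle$ is totally antisymmetric in its three arguments --- antisymmetry of the bracket together with invariance of the pairing gives $\langle[a,b],c\rangle=\langle[b,c],a\rangle=\langle[c,a],b\rangle$ --- so relabelling decorations through $F$ (condition \ref{iso}(iv)) yields $c(\Gamma')=\big(\prod_i\mathrm{sgn}(\pi_i)\big)c(\Gamma)$. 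The two copies of $\prod_i\mathrm{sgn}(\pi_i)$, one from reversing the orientation and one from the color factor, multiply to $+1$, giving $\mathcal M_t(\Gamma')=\mathcal M_t(\Gamma)$.

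The hard part is exactly this sign bookkeeping. I would need to fix the convention for $\Psi_F$ so that a pure relabelling of internal vertices (all $\pi_i$ trivial) contributes no sign --- this uses that interchanging two vertices permutes two blocks of three coordinate one-forms, an even permutation --- and then verify that the only surviving sign, coming from permuting half-edges at a single vertex, is matched term-by-term by the total antisymmetry of the structure-constant weight. I do not expect the Wilson-line tensor factors to cause difficulty: the external vertices are fixed pointwise and keep their prescribed order along each $L_{\alpha,t}$, so the ordered products $\prod_j\tau(h_j)$ in \eqref{eq:color} are literally unchanged, and only the scalar vertex weights participate in the sign analysis.
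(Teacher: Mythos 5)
Your proof is correct and follows essentially the same route as the paper's: the only sign an isomorphism can produce comes from an odd permutation of the three half-edges at an internal vertex, and it is cancelled by the matching sign of the totally antisymmetric vertex weight in $c(\Gamma)$; your write-up simply makes explicit the relabelling diffeomorphism $\Psi_F$ and the change-of-variables step that the paper's terse proof leaves implicit. One small correction to your final paragraph: interchanging two internal vertices permutes two blocks of three coordinates, which is an \emph{odd} permutation (three disjoint transpositions), not an even one --- but this slip is harmless, because with your pullback convention no block-parity argument is needed at all: your identity $\Psi_F^*\,dg\big(F_{\mathcal H}(h_i^k)\big)=dg\big(h_i^{\pi_i(k)}\big)$ already shows that the vertex relabelling $\sigma$ cancels identically between the definition of $\Or(\Gamma')$ and the pullback along $\Psi_F$, so that $\Psi_F^*\Or(\Gamma')=\big(\prod_i\mathrm{sgn}(\pi_i)\big)\Or(\Gamma)$ holds with no contribution from $\sigma$ whatsoever.
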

\begin{proof}
    By definition \ref{iso}, any isomorphism of $\Gamma$ is given by relabeling the internal vertices and permuting the set of half-edges at each internal vertex. Since the definition of $\mathcal M_t(\Gamma)$ does not depend on the labeling of vertices, we consider an isomorphism that permutes the half-edges $\{h_i^1,h_i^2,h_i^3\}$ incident to some $v_i\in V$. If the permutation is odd then the sign of $\Or(\Gamma)$ is reversed. On the other hand, since the structure constants are totally anti-symmetric, also $c(\Gamma)$ reverses its sign, thus leaving the overall sign of $\mathcal M_t(\Gamma)$ unchanged. 
    %Similarly, if the permutation even, neither $\Or(\Gamma)$ nor $c(\Gamma)$ changes sign. 
\end{proof}
We are now finally ready to give a precise definition of the expectation value $\braket{L_t}$
\begin{definition}
We define
\begin{equation}\label{eq:graph sum} \braket{L_t}=\sum_{\Gamma\in\G/\sim}\hbar^{\ord(\Gamma)}\mathcal M_t(\Gamma) \in \mathcal U(\g)^{\otimes 3},
\end{equation}
where the sum runs over isomorphism classes of Feynman graphs, and $\ord(\Gamma)$ is the number of edges minus the number of internal vertices of $\Gamma$.
\end{definition}

\subsection{Admissible Feynman graphs}\label{sec:admissible}
Only a limited set of Feynman graphs has a non-vanishing contribution to the sum in equation~\eqref{eq:graph sum}. In fact, recall that $\g=\mathfrak a\oplus \mathfrak a^*$ is defined by the following non-trivial brackets: 
$$
[\xi_a,\xi_b]={f^c}_{ab}\xi_c \ , \ \ [\zeta^a,\xi_b]={f^a}_{cb}\zeta^c.
$$
With this definition, the coefficient $\braket{[\tau(h_i^1),\tau(h_i^2)],\tau(h_i^3)}$ associated to an internal vertex $v_i$ is only non-zero when $v_i$ has exactly one incoming and two outgoing edges, as shown in figure \ref{fig:vertex}.
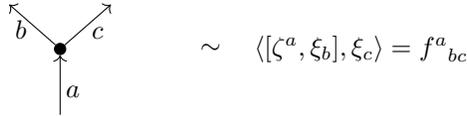
\begin{figure}[H]
\centering
\begin{tikzpicture}[vertex/.style={draw,circle, fill=black, inner sep=1.5pt}]
  \node[vertex] (v0) at (0,0)  {}; 
  \node (v1) at (0,-1)  {};
  \node (v2) at (-0.8,0.7) {};
  \node (v3) at (0.8,0.7) {};
  \node (a) at (2,0) {$\sim$};
  \node (b) at (4,0) {$\braket{[\zeta^a,\xi_b],\xi_c}={f^a}_{bc}$};
  \draw[<-] (v0) -- (v1);  
  \draw[->] (v0) -- (v2);
  \draw[->] (v0) --  (v3);
  \node at (-0.52,0.26) {$b$};
  \node at (0.5,0.22) {$c$};
  \node at (0.17,-0.57) {$a$};

  %\draw[->] (0,0.4) arc (90:330:0.4);
\end{tikzpicture}
\caption{: The only allowed internal vertex}
\label{fig:vertex}
\end{figure}
\noindent Moreover, we get no contributions from graphs that have an oriented cycles as shown in figure \ref{fig:loops} (a) or from graphs that have an oriented path that ends and begins on the same Wilson line as shown in figure \ref{fig:loops} (b). This follows from the definition of the propagator $P_e=\phi^*_e\omega$. In fact, because $\omega$ is only non-zero in a small neighbourhood of the north pole, $\lambda(\Gamma)$ is only supported in a neighbourhood of $\Conf_{V,\underline W}$ where all edges in $\R^2\times I$ point strictly upwards along $I$. Hence $\lambda(\Gamma)$ vanishes everywhere for the graphs in figure \ref{fig:loops}. 
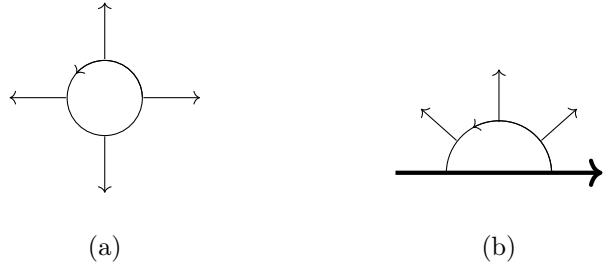
\begin{figure}[H]
    \centering
\begin{tikzpicture}[vertex/.style={draw,circle, fill=black, inner sep=1.2pt}]
  \draw[->] (0.5,0) arc (0.5:140:0.5);
  \draw[] (0.5,0) arc (0.5:360:0.5);
  
  \node (v1) at (0,0.39) {};
  \node (w1) at (0,1.39) {};
  \draw[->] (v1) -- (w1);
  
  \node (v2) at (-0.39,0) {};
  \node (w2) at (-1.39,0) {};
  \draw[->] (v2) -- (w2);

  \node (v3) at (0,-0.39) {};
  \node (w3) at (0,-1.39) {};
  \draw[->] (v3) -- (w3);
  
  \node (v4) at (0.39,0) {};
  \node (w4) at (1.39,0) {};
  \draw[->] (v4) -- (w4); 

  \node at (0,-2) {(a)};

\end{tikzpicture}\hspace{2cm}
\begin{tikzpicture}
  \node (a) at (-1.5,0)  {}; 
  \node (b) at (1.5,0)  {};
  \draw[ultra thick,->] (a) --  (b);
  \draw[->] (0.7,0) arc (0.7:120:0.7);
  \draw[] (0.7,0) arc (0.7:179:0.7);
  
  \node (v1) at (0.44,0.32) {};
  \node (w1) at (1.16,0.96) {};
  \draw[->] (v1) -- (w1);

  \node (v2) at (-0.44,0.32) {};
  \node (w2) at (-1.16,0.96) {};
  \draw[->] (v2) -- (w2);

  \node (v3) at (0,0.55) {};
  \node (w3) at (0,1.5) {};
  \draw[->] (v3) -- (w3);

  \node at (0,-1) {(b)};
  
\end{tikzpicture} 
    \caption{:~Non-contributing Feynman graphs}
    \label{fig:loops}
\end{figure}
\noindent 
The above discussion can be summarized to give the following proposition:
\begin{proposition}\label{prop:forest}
The only Feynman diagrams contributing to the sum in equation \eqref{eq:graph sum} are forests with edges in $\R^2\times I$ pointing strictly upwards along $I$ and roots and leafs connected to the Wilson lines (see figure \ref{fig:forest}). 
\end{proposition}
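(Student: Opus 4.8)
The plan is to combine the two local constraints already isolated in the discussion above—the vertex selection rule coming from the brackets \eqref{eq:brackets} and the upward-pointing rule coming from the support of $\omega$—and to show that together they force every contributing graph to be a forest of the claimed shape. Throughout I would work on the support of $\lambda(\Gamma)$ inside $\Conf_{V,\underline W}$. By the preceding discussion, on this support every oriented edge points strictly upward: writing $z(v)$ for the $I$-coordinate (height) of a vertex $v$, we have $z(s(h))<z(s(h'))$ for each edge $(h,h')\in E(\Gamma)$. Recall also that each internal vertex there carries exactly one incoming and two outgoing edges, i.e.\ has in-degree $1$ and out-degree $2$ (figure \ref{fig:vertex}), while each external vertex has total degree $1$.

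First I would rule out oriented cycles: following the orientations around such a cycle produces a strictly increasing sequence of heights returning to its starting value, which is impossible. This discards figure \ref{fig:loops}(a), and the same monotonicity discards any oriented path with both endpoints on a single Wilson line (figure \ref{fig:loops}(b)), since the external vertices of a fixed $L_{\alpha,t}$ all share the height of that line.

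The key step is to upgrade this to the absence of \emph{undirected} cycles, so that each contributing graph is genuinely a forest. Suppose, for contradiction, that $\Gamma$ contains an undirected cycle $\gamma$, and let $v^*$ be a vertex of maximal height along $\gamma$. Because $v^*$ lies on a cycle it has degree at least $2$ within $\gamma$, so it cannot be external (external vertices have degree $1$); hence $v^*$ is internal. Both edges of $\gamma$ at $v^*$ join it to cycle-neighbours of strictly smaller height—strictly, because the two heights along any edge differ by the upward rule—so both are oriented \emph{into} $v^*$. This gives $v^*$ in-degree at least $2$, contradicting the rule that every internal vertex has in-degree exactly $1$. Hence $\Gamma$ has no undirected cycle and is a forest.

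Finally I would read off the root/leaf structure. In any tree component, a vertex of minimal height can have no incoming edge, hence has in-degree $0$, and since every internal vertex has in-degree $1$ it must be external; such a source lies on a Wilson line. Dually, a vertex of maximal height has out-degree $0$ and, as internal vertices have out-degree $2$, is likewise external; these are the leaves. Since every degree-$1$ vertex of the forest is external by definition, all endpoints of the trees sit on the Wilson lines, as claimed. A short count—each internal vertex contributing degree $3$ and each external vertex degree $1$—then shows that a component with $m'$ internal vertices has exactly $m'+2$ external ones, of which exactly one is a source (the root) and $m'+1$ are sinks (the leaves), matching the binary-branching picture of figure \ref{fig:forest}. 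I expect the only real subtlety to be the undirected-cycle step: the discussion above addresses only oriented cycles and oriented Wilson-line-to-Wilson-line paths, and it is precisely the maximal-height argument, crucially using in-degree $1$, that promotes those facts to the full forest statement.
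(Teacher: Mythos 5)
Your proof is correct and follows essentially the same route as the paper, which obtains the proposition by summarizing exactly the two constraints you use: the one-incoming/two-outgoing vertex rule forced by the brackets \eqref{eq:brackets} and the strictly-upward support of the propagator. The only difference is that the paper leaves the exclusion of \emph{undirected} cycles implicit in that summary, whereas your maximal-height vertex argument (both cycle edges at a highest vertex must point inward, contradicting in-degree one) makes this step explicit — a useful sharpening rather than a different approach.
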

\begin{figure}[H]
    \centering
\begin{tikzpicture}[vertex/.style={draw,circle, fill=black, inner sep=1.3pt}]
  \node[vertex] (v0) at (-1,-2)  {}; 
  \node[vertex] (v1) at (-1,-1.3) {};
  \node[vertex] (v2) at (-1.5,0) {};
  \node[vertex] (v3) at (-0.5,-0.7) {};
  \node[vertex] (v4) at (-1,2) {};
  \node[vertex] (v5) at (-0.2,0) {};
  %\node (a) at (2,0) {$\sim$};
  %\node (b) at (4,0) {$\braket{[t_a,t_b],t^*_c}={f_{abc}}$};
  \draw[->] (v0) --  (v1);
  \draw[->] (v1) --  (v2);
  \draw[->] (v1) --  (v3);
  \draw[->] (v3) --  (v4);
  \draw[->] (v3) --  (v5);

  \node[vertex] (w0) at (1,0)  {}; 
  \node[vertex] (w1) at (1,1) {};
  \node[vertex] (w2) at (0.5,2) {};
  \node[vertex] (w3) at (1.5,2) {};
  \draw[->] (w0) --  (w1);
  \draw[->] (w1) --  (w2);
  \draw[->] (w1) --  (w3);

  \node (a1) at (-2.5,2)  {};
  \node (a2) at (2.5,2)  {};
  \draw[very thick,->] (a1) -- (a2)node[right]{$L_{3,t}$};

  \node (b1) at (-2.5,0)  {};
  \node (b2) at (2.5,0)  {};
  \draw[very thick,->] (b1) -- (b2)node[right]{$L_{2,t}$};

  \node (c1) at (-2.5,-2)  {};
  \node (c2) at (2.5,-2)  {};
  \draw[very thick,->] (c1) -- (c2)node[right]{$L_{1,t}$};
\end{tikzpicture}
 \caption{: Example of an admissible Feynman graph}
    \label{fig:forest}
\end{figure}
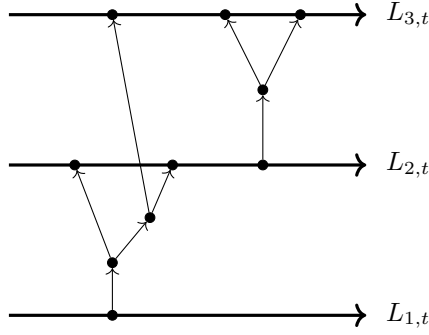
\noindent It follows from proposition \ref{prop:forest} that a given \textit{connected} Feynman graph $\Gamma$ connects at least two Wilson lines. Again using the fact that $\omega$ is only non-zero in a small neighbourhood of the north pole, it follows that the associated differential form $\lambda(\Gamma)$ only has support in a small neighbourhood of $\R^2$ around the crossing between the corresponding Wilson lines. Recall from remark \ref{rmk:R} of section \ref{sec:W-lines} that we denoted the (angle independent) expectation value of a pair of crossing Wilson lines by $\mathcal R\in \mathcal U(\g)^{\otimes 2}$.  For each $i,j\in\{1,2,3\}$ with $i\neq j$ let 
$
\rho_{ij}:\mathcal U(\g)^{\otimes 2}\to \mathcal U(\g)^{\otimes 3}
$ 
be the map defined in section \ref{sec:YBE}, i.e.
\begin{align*}
    \rho_{12}(a\otimes b)=a\otimes b\otimes 1 \, , \ \rho_{13}(a\otimes b)=a\otimes 1\otimes b \, ,  \ \rho_{23}(a\otimes b)=1\otimes a\otimes b.
\end{align*}
and write $\mathcal R_{ij}=\rho_{ij}(\mathcal R)\in \mathcal U(\g)^{\otimes 3}$. By the above discussion we now have the following lemma:
\begin{lemma}\label{lm:infinity} 
   $\braket{L_0}$ and $\braket{L_1}$ takes the form  
   $$
   \braket{L_0}=\mathcal R_{12}\mathcal R_{13}\mathcal R_{23} \ \text{ and } \ \braket{L_1}=\mathcal R_{23}\mathcal R_{13}\mathcal R_{12}\,.
   $$
\end{lemma}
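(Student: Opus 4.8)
The plan is to reduce $\braket{L_0}$ and $\braket{L_1}$ to ordered products of the pairwise expectation value by exploiting the localization of the Feynman amplitudes recorded in Proposition \ref{prop:forest}. First I would set down the geometry of the two endpoint configurations: the three Wilson lines project to $\R^2$ as three lines meeting pairwise in three \emph{distinct} points, namely the $L_1 L_3$ crossing at the origin (fixed for all $t$) together with the $L_1 L_2$ and $L_2 L_3$ crossings, which both lie to one side of the origin at $t=0$ and to the opposite side at $t=1$. Since the lines sit at the distinct heights $z=-\tfrac12,\,0,\,\tfrac12$, every admissible forest has all its edges pointing strictly upward along $I$, and because $\omega$ is supported near the north pole each edge is moreover nearly vertical on its support, forcing every connected tree to project to a single point of $\R^2$. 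A tree attaches to Wilson lines at its root and leaves, so this common projected point must be a crossing, and near any one crossing only two of the three lines are close. Hence each connected component of a contributing forest joins exactly one pair of lines and is supported near the corresponding crossing.

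Next I would use this to factorize the graph sum. Every contributing forest decomposes uniquely as a disjoint union of trees distributed among the three crossing regions, and since these regions are disjoint the configuration space splits, on the support of $\lambda(\Gamma)$, as a product of the three pairwise configuration spaces. The form $\lambda(\Gamma)=\bigwedge_e P_e$ and the colour factor $c(\Gamma)$ both factorize accordingly --- $\lambda$ as a wedge of the per-crossing forms and $c$ as a tensor product placing each tree's $\mathcal U(\g)$-contribution into the two slots indexed by the lines it connects --- so the integral \eqref{eq:F-int} becomes a product of three integrals. Summing over all distributions of trees, each per-crossing factor reproduces exactly the expectation value of a single pair of crossing Wilson lines, which by Remark \ref{rmk:R} equals $\mathcal R$ independently of the crossing angle; embedded in the appropriate slots by $\rho_{ij}$ this is $\mathcal R_{ij}$. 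Because the lower-indexed line always sits at the lower height, the two tensor factors of $\mathcal R$ are placed in slots $i<j$ with no transposition.

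It remains to pin down the order in which $\mathcal R_{12},\mathcal R_{13},\mathcal R_{23}$ are multiplied in $\mathcal U(\g)^{\otimes 3}$. The non-commutative product in each slot is dictated by the order of the external vertices along the corresponding line through \eqref{eq:color}, i.e.\ by reading each line from top to bottom (consistent with the convention of Section \ref{sec:YBE}). I would therefore read off, for each line and each endpoint, the vertical order of the two crossings lying on it. At $t=0$ one finds the top-to-bottom orders $(12),(13)$ along $L_1$, $(12),(23)$ along $L_2$ and $(13),(23)$ along $L_3$, which assemble consistently into $\braket{L_0}=\mathcal R_{12}\mathcal R_{13}\mathcal R_{23}$; dragging $L_2$ across the origin interchanges the heights of its two crossings, and likewise reverses the position of the $L_2$-crossing relative to the origin crossing along $L_1$ and $L_3$, producing $\braket{L_1}=\mathcal R_{23}\mathcal R_{13}\mathcal R_{12}$ while leaving $\mathcal R_{13}$ in the middle.

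The main obstacle I expect is precisely the ordering bookkeeping of the last step: one must verify that the linear orders of the crossings read off independently along the three lines are mutually compatible and glue to a single well-defined product in $\mathcal U(\g)^{\otimes 3}$, and that the factorization of the integral respects the orientation form $\Or(\Gamma)$ so that no spurious signs are introduced. The localization argument itself is a repackaging of the support property already invoked for Proposition \ref{prop:forest}, so the geometric input is in hand; the delicate point is the consistent tracking of the three non-commutative orderings simultaneously as $L_2$ is moved over the $L_1 L_3$ crossing.
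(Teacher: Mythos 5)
Your proposal is correct and follows essentially the same route as the paper: the paper's ``proof'' is precisely the discussion preceding the lemma, namely that Proposition \ref{prop:forest} plus the near-north-pole support of $\omega$ localizes every connected contributing graph to a neighbourhood of a single pairwise crossing, so that $\braket{L_t}$ factors into the three per-crossing expectation values $\mathcal R_{ij}$, with the ordering read off from figure \ref{fig:R}. Your write-up simply makes explicit the factorization of $\lambda(\Gamma)$, $c(\Gamma)$ and the configuration-space integral, and the bookkeeping of the crossing order along each line, which the paper leaves to the figure.
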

\noindent The situation is illustrated in figure \ref{fig:R}. The dotted circle indicates the area where the interaction matrix $\mathcal R_{ij}$ acts.
%\begin{proof}
 %    Let $\Gamma\in \G$ and let $\wt\Gamma$ be a connected subgraph of $\Gamma$ with vertices along $L_{\alpha,t}$ and $L_{\beta,t}$ for $\alpha,\beta\in\{1,2,3\}$. By the above discussion it holds that $\lambda(\Gamma)$ is only non-zero when the vertices of $\wt \Gamma$ are close to the point where $L_{\alpha,t}$ and $L_{\beta,t}$ cross in $\R^2$. 
%\end{proof}
\begin{figure}[H]
    \centering
\begin{tikzpicture}[vertex/.style={draw,circle, fill=black, inner sep=1.5pt}]

  \node (a1) at (-0.7,1.7)  {}; 
  \node (a2) at (-0.7,-0.67)  {};
  \node (a3) at (-0.7,-0.78)  {}; 
  \node (a4) at (-0.7,-1.7)  {};
  \draw[very thick] (a1)node[above]{$L_{2,0}$} --  (a2);
  \draw[very thick,->] (a3) --  (a4);

  \node (b1) at (-1.7,1.7)  {}; 
  \node (b2) at (-0.72,0.72)  {};
   \node (b3) at (-0.68,0.68)  {}; 
  \node (b4) at (0,0)  {};
  \node (b5) at (0.02,-0.02)  {};
  \node (b6) at (1.7,-1.7)  {};
  \draw[very thick] (b1) node[above]{$L_{1,0}$} -- (b2);
  \draw[very thick] (b3) -- (b4);
  \draw[very thick,->] (b5) -- (b6);
  
  \node (c1) at (1.7,1.7)  {}; 
  \node (c2) at (-1.7,-1.7)  {};
  \draw[very thick,->] (c1)node[above]{$L_{3,0}$} --  (c2);

  \draw[dotted,thick] (0.31,0.0) arc (0.3:360:0.3);
  \draw[dotted,thick] (-0.38,0.7) arc (0.3:360:0.3);
  \draw[dotted,thick] (-0.38,-0.7) arc (0.3:360:0.3);

  \node at (-1.4,0.7) {$\mathcal R_{12}$};
   \node at (0.8,0) {$\mathcal R_{13}$};
  \node at (-1.4,-0.7) {$\mathcal R_{23}$};

\end{tikzpicture} \hspace{2.5cm}
\begin{tikzpicture}[vertex/.style={draw,circle, fill=black, inner sep=1.5pt}]

  \node (a1) at (0.7,1.7)  {};
  \node (a2) at (0.7,0.75)  {};
  \node (a3) at (0.7,0.63)  {};
  \node (a4) at (0.7,-1.7)  {};
  \draw[very thick] (a1)node[above]{$L_{2,1}$} --  (a2);
  \draw[very thick,->] (a3) --  (a4);

  \node (b1) at (-1.7,1.7)  {}; 
  \node (b2) at (0,0)  {};
  \node (b3) at (0.02,-0.02)  {};
  \node (b4) at (0.68,-0.68)  {};
  \node (b5) at (0.72,-0.72)  {};
  \node (b6) at (1.7,-1.7)  {};
  \draw[very thick] (b1) node[above]{$L_{1,1}$} -- (b2);
  \draw[very thick] (b3) -- (b4);
  \draw[very thick,->] (b5) -- (b6);
  
  \node (c1) at (1.7,1.7)  {}; 
  \node (c2) at (-1.7,-1.7)  {};
  \draw[very thick,->] (c1)node[above]{$L_{3,1}$} --  (c2);

  \draw[dotted,thick] (0.31,0.0) arc (0.3:360:0.3);
  \draw[dotted,thick] (1.01,0.7) arc (0.3:360:0.3);
  \draw[dotted,thick] (1.01,-0.7) arc (0.3:360:0.3);

  \node at (1.5,0.7) {$\mathcal R_{12}$};
   \node at (-0.7,0) {$\mathcal R_{13}$};
  \node at (1.5,-0.7) {$\mathcal R_{23}$};
\end{tikzpicture} 
\caption{:~A graphical illustration of lemma \ref{lm:infinity}.}
    \label{fig:R}
\end{figure}
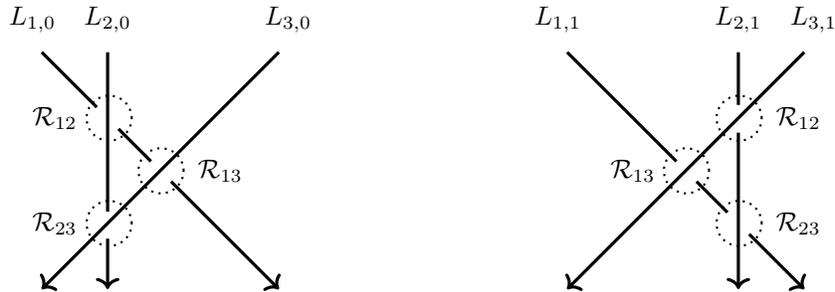
%\begin{proof}
%    Recall the definition of the propagator $P_e=\phi_e^*\omega$ where $\omega\in\Omega^2(S^2)$ is only non-zero in a small neighbourhood of the north pole. It follows that $\lambda(\Gamma)$ is support in a neighbourhood of $\Conf(\Gamma)$ where all edges of $\Gamma$ ``point upwards'' along $I$. The lemma now follows from property that In fact, for $\lambda(\Gamma)$ to be non-zero in $\Conf(\Gamma)$, any internal vertex must belong to an oriented path in $\Gamma$ connecting two different Wilson lines and for the edges to point upwards along $I$ all vertices along such a path must lie in a small neighbourhood of where the lines cross in~$\R^2$. 
%\end{proof}

\section{Finiteness of the Integrals}\label{sec:finiteness}
Because the propagator $P_e=\phi_e^*\omega$ is only defined away from the diagonal
%, the differential forms $\lambda(\Gamma)$ do not have compact support on $\Conf_{V,\underline W}$ and 
it is not immediately clear that the Feynman integrals in equation \eqref{eq:F-int} converges in the limit when vertices come together. In fact, the finiteness of Feynman integrals in Chern-Simons theory on a general three-manifold was proven in \cite{AS} by Axelrod and Singer, using a configuration space compactification closely related to the Fulton-MacPherson compactification \cite{FM}, and in \cite{BT} this was extended by Bott and Taubes to Chern-Simons theory in the presence of Wilson lines. 
%In this section we recall the relevant configuration space compactification of $\Conf_{V,\underline W}$. 
For the present purpose these results can be assembled to give the following theorem: 
\begin{theorem}
    There is a partial compactification $\overline \Conf_{V,\underline W}$ of the configuration spaces $\Conf_{V,\underline W}$ for subsets of vertices coming together, such that the compactified space is a manifold with corners and the differential forms $\lambda(\Gamma)$ are smooth forms with compact support on $\Conf_{V,\underline W}$.
\end{theorem}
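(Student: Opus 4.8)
The plan is to assemble the compactification from the iterated spherical blow-up of Fulton--MacPherson \cite{FM} and Axelrod--Singer \cite{AS}, enriched with Wilson lines in the manner of Bott--Taubes \cite{BT}. First I would enumerate the collision loci that make $\phi_e$ singular: the locus where a subset $S$ of bulk vertices of $V$ collide, the locus where a subset of bulk vertices approaches a point of a Wilson line, and the locus where consecutive external vertices on a fixed line $L_{\alpha,t}$ collide. To each such subset $S$ there corresponds a diagonal $\Delta_S\subset\Conf_{V,\underline W}$, and I would define $\overline{\Conf}_{V,\underline W}$ by performing a spherical blow-up (replacing each $\Delta_S$ by its oriented normal sphere bundle) along every $\Delta_S$, carried out in order of increasing dimension so that the successive strict transforms stay transverse. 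Because the external vertices are constrained to the lines, the relevant normal sphere for a line--line or bulk--line collision is cut down accordingly, exactly as in \cite{BT}.

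Second, I would verify that $\overline{\Conf}_{V,\underline W}$ is a manifold with corners. The collection $\{\Delta_S\}$ is closed under intersection (an intersection of diagonals is again a diagonal, and nested subsets give nested loci), so it forms a building set in the sense of \cite{FM}; the standard theory then guarantees that the iterated blow-up is a manifold with corners whose codimension-one boundary faces are indexed by the individual subsets $S$, and whose corners of codimension $k$ are indexed by nested chains $S_1\subsetneq\dots\subsetneq S_k$ of colliding subsets. Here I would be careful that the ambient space $\R^2\times I$ carries its own boundary $\R^2\times\{\pm1\}$; since the theorem asks only for a \emph{partial} compactification along the diagonals, I would refrain from blowing up collisions of vertices with $\partial(\R^2\times I)$, performing the bulk blow-ups inside the open manifold $\R^2\times(-1,1)$ and leaving the boundary and the non-compact $\R^2$ directions open.

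The heart of the argument is the smooth extension of $\lambda(\Gamma)=\bigwedge_{e}P_e$ across the new faces, and this is where I expect the main difficulty. Each $P_e=\Phi_e^*\omega$ is pulled back along the direction map $\phi_e$, and the content of the Axelrod--Singer estimate is that $\phi_e$ extends to a \emph{smooth} $S^2$-valued map on $\overline{\Conf}_{V,\underline W}$. Concretely, near the face created by collapsing a subset $S$ I would introduce normal coordinates consisting of an overall scale and a ``screen'' recording the relative positions of the points of $S$: for an edge $e$ with both endpoints in $S$ the direction $\phi_e$ is read off directly from the screen, while for an edge straddling $S$ it tends to the direction towards the collapsed point; in both cases $\phi_e$ extends smoothly. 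A wedge of smooth forms is smooth, so $\lambda(\Gamma)$ extends. The genuinely technical step, which I would import from \cite{AS}, is to control the \emph{nested} faces, where several subsets collapse at different relative rates and one must check that the extension is smooth jointly in all the normal parameters rather than merely continuous.

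Finally I would establish compact support. Since $\omega$ is supported in a small neighbourhood of the north pole, $P_e$ is supported where the endpoints of $e$ are very nearly vertically aligned in $\R^2\times I$. By Proposition \ref{prop:forest} the contributing graphs are forests whose leaves and roots lie on the Wilson lines, so along any connected component the near-vertical condition propagates from the boundary vertices inward: the horizontal positions of all its vertices are pinned to a small neighbourhood of a crossing of the projected lines, while the vertical coordinate is confined to the bounded interval $I$. As there are only finitely many such crossings, the support of $\lambda(\Gamma)$ lies in a compact subset of $\overline{\Conf}_{V,\underline W}$, so $\lambda(\Gamma)$ is a smooth compactly supported form; together with the smooth extension this shows that the integrals \eqref{eq:F-int} converge.
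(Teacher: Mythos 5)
Your proposal is correct and takes essentially the same route as the paper: the paper does not write out a detailed proof but assembles the theorem from the blow-up constructions of \cite{AS}, \cite{FM} and \cite{BT}, recording the codimension-one strata in sections \ref{sec:int_strata}--\ref{sec:ext_strata} via exactly the scale-plus-screen coordinate changes you describe. Your compact-support argument is likewise the one the paper relies on, namely the confinement of the support of $\lambda(\Gamma)$ near the crossings via Proposition \ref{prop:forest}, as used in the discussion of admissible graphs and in the proof of Proposition \ref{prop:Stokes}.
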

%For the purpose of the present paper it will only be necessary to describe the  for vertices coming together. The corresponding co-dimension one boundary
In this compactification boundary strata are defined using spherical blow-ups along diagonals where subsets of vertices come together. In subsections \ref{sec:int_strata} and \ref{sec:ext_strata} below we give a full description of the corresponding boundary strata of co-dimension one, each coming from a single subset of vertices all coming together at the same speed. Denoting by $\partial \Conf_{V,\underline W}$ the corresponding co-dimension one boundary it holds that $\partial\Conf_{V,\underline W}$ is given by the disjoint union of the following strata:
\begin{itemize}
    \item For each $S\subset V$ we get boundary stratum $\partial_S\Conf_{V,\underline W}$ corresponding to the vertices $S$ coming together.
    %in $\Conf_{\underline A,B}$.
    \item For each $\alpha\in\{1,2,3\}$, $S\subset V$ and $T\subset W_\alpha$ with $T\neq \emptyset$ we get a boundary stratum $\partial_{S,T}\Conf_{V,\underline W}$ corresponding to vertices $S\cup T$ coming together on the line $L_{\alpha,t}$.
    %in $\Conf_{\mathbf n,m}$. 
\end{itemize}
The reader is referred to \cite{AS}, \cite{Sinha} and the appendix of \cite{BT} for details on the strata of higher co-dimension, which correspond to collapsing nested subsets of vertices.
%\begin{equation*}
%\begin{aligned}
%\partial \overline \Conf_{\underline A,B} = \bigg(\bigcup_{S\subset B} \partial_{S}\Conf_{\underline A,B} \bigg)&\cup \bigg(\bigcup_{i=1}^3\bigcup_{\substack{T\subset A \\S\subset B}}\partial_{T,S}\Conf_{\underline A,B}\bigg)\,.
%\end{aligned}
%\end{equation*}
\subsection{Boundary strata for internal collisions}\label{sec:int_strata}
We begin by describing the boundary strata corresponding to a subset $S\subset V$ of internal vertices coming together. Recall that given a point $(t,q,p)\in \Conf_{V,\underline W}$ we use the notation $p_i=p(v_i)$ and $q_j=q(w_j)$. Let $i_0\coloneqq \min\{i\}_{v_i\in S}$ and write $v_0\coloneqq v_{i_0}$ and $p_0\coloneqq p_{i_0}$. Furthermore, let $d_{\min}$ be the minimal distance between $p_{0}$ and a vertex in $\{p_i\}_{v_i\in V\setminus S}\cup \{q_j\}_{w_j\in W}$. We can define a neighbourhood $U\subset \Conf_{V,\underline W}$ where the vertices in $S$ are close together and far from all other vertices as follows:
%where the points $\{p_\alpha\}_{v_\alpha\in S}$ are close to each other and far from all other points as follows:
$$
U=\bigg\{(t,q,p)\in \Conf_{V,\underline W}~\bigg|~\bigg(\sum_{v_i\in S}|p_{0}-p_i|^2\bigg)^{1/2}<\eta d_{\min}\bigg\}\,,
$$
where $\eta >0$ is small. Given any point $(t,q,p)\in U$ we can now write
\begin{equation}\label{eq:coord change1}
p_i=p_{0}+rd_{\min}u_\alpha \ , \ \ v_i\in S\setminus \{v_{0}\},
\end{equation}
where $u_\alpha\in \R^3$ and $r \in (0,\eta)$ are uniquely determined by the conditions:
\begin{itemize}
    \item $\sum_i|u_i|^2=1$,
    %\item $\sum_\alpha u_\alpha=0$
    \item $u_i\neq u_j$ for $i\neq j$.
\end{itemize}
\begin{definition} Let $G<\Homeo(\R^3)$ to be group of scalings and translations in $\R^3$. We define $$C_S\coloneqq\Conf_S(\R^3)/G$$ where $G$ acts on $\Conf_S(\R^3)$ by translating and/or scaling all points simultaneously. 
\end{definition}
The points $(u_i)_{i\in S\setminus \{v_{0}\}}$ then determines a set of coordinates on the space $C_S$ and hence the change of coordinates in equation \eqref{eq:coord change1} determines a diffeomorphism
\begin{equation}\label{eq:U}
U\cong C_S\times\Conf_{(V\setminus S)\cup\{v_0\},\underline W}\times (0,\eta).
\end{equation}
%given by $\varphi(t,q,p)=(x,(u_\alpha),r)$, where 
%$$
%x=\big(t,p_0,\{p_{\alpha}\}_{v_\alpha\in B\setminus S},\{q_\beta\}_{w_\beta \in A}\big) \ , \ \ r=\frac{1}{d_{\min}}\Big(\sum_{v_\alpha\in S}|p_0-p_\alpha|^2\Big)^{1/2}\ , \ \ u_\alpha=\frac{p_\alpha-p_0}{rd_{\min}}.
%$$
The boundary stratum corresponding to the vertices $\{p_i\}_{i\in S}$ coming together is obtained by including the $r=0$ in the interval on the right-hand side of equation \eqref{eq:U}. Hence
\begin{equation}\label{eq:internal}
    \partial_S\Conf_{V,\underline W}=C_S\times\Conf_{(V\setminus S)\cup\{v_0\},\underline W}\,.
\end{equation}
\subsection{Boundary strata for external collisions}\label{sec:ext_strata}
We now describe the boundary strata corresponding to a subset of both internal and external vertices coming together on one of the Wilson lines.
%$L_{\alpha,t}$ for some $\alpha\in \{1,2,3\}$. 
%The construction is essentially analogous to that of internal collisions given in the previous section with a few modifications. 
%and consider a neighbourhood $V\subset \Conf_{\underline A,B}$ where the vertices $\{p_\alpha, q_\beta\}_{v_\alpha\in S, w_\beta\in T}$ are close to each other and far from all other vertices. 
Let $S\subset V$ and $T\subset W_\alpha$ for some $\alpha\in \{1,2,3\}$ and let $\mathbf e_\alpha$ be the unit vector pointing along $L_{\alpha,t}$ (notice that $\mathbf e_\alpha$ does not depend on $t$). Given a point $(t,q,p)\in \Conf_{V,\underline W}$ we use the following notation:
\begin{itemize}
    \item $\braket{p_i,\mathbf e_\alpha}$ is the projection of $p_i$ onto $L_{\alpha,t}$,
    \item $j_0=\min\{j\}_{w_j\in T}$ and we write $w_0\coloneqq w_{j_0}$ and $q_0\coloneqq q_{j_0}$
    %is the center of mass of $\{\braket{p_\alpha,\mathbf e_i}\}_{v_\alpha\in S}\cup \{q_\beta\}_{w_\beta\in T}$,
    \item $d_{\min}$ is the minimal distance between $L_{\alpha,t}(q_0)$ and a vertex in $(V\setminus S)\cup (W\setminus T)$.
    %$\{p_\gamma\}_{v_\gamma\in B\setminus S}\cup\{L_{i,t}(q_\delta)\}_{w_\delta\in A\setminus T}$.
\end{itemize}
We can define a neighbourhood $V\subset \Conf_{V,\underline W}$ where the vertices in $S\cup T$ are close together and far from all other vertices as follows:
%where the points $\{p_\alpha\}_{v_\alpha\in S}\cup \{L_{i,t}(q_\beta)\}_{w_\beta\in T}$ are close to each other and far from all other points as follows:
$$
V=\bigg\{(t,q,p)\in \Conf_{V,\underline W}~\bigg|~\bigg(\sum_{v_i\in T}|q_0-\braket{p_i,\mathbf e_i}|^2+\sum_{w_j\in T}|q_0-q_j|^2\bigg)^{1/2}<\eta d_{\min}\bigg\}\,.
$$
Given any $(t,q,p)\in V$, $v_i\in S$ and $w_j\in T$ we can write:
\begin{equation}\label{coord_change2}
\begin{aligned}
p_i&=L_{\alpha,t}(q_0)+rd_{\min}u_i \ , \ \ v_i\in S\\
q_j&=q_0+rd_{\min}a_j %\implies L_{i,t}(q_\beta)=L_{i,t}(q_0)+rd_{\min}a_\beta\mathbf e_i
\ , \ \ w_\beta\in T\setminus \{w_0\}
\end{aligned}
\end{equation}
for unique $r\in (0,\eta)$, $u_i\in \R^3$ and $a_j\in \R$ subject to the conditions:
\begin{itemize}
    %\item $\sum_\alpha\braket{u_\alpha,\mathbf e_i}+\sum_\beta a_\beta=0$,
    \item $\sum_i |\braket{u_i,\mathbf e_\alpha}|^2+\sum_j|a_j|^2=1$,
    \item $u_i\neq u_j$, $a_i\neq a_j$ and $u_i\neq a_j \mathbf e_\alpha$ when $i\neq j$.
\end{itemize}
\begin{definition}
Let $\Conf_{S,T}(L,\R^3)$ be the configuration space with points in the bulk and along the line $L\subset \R^3$.
%$L_\alpha=\{s\mathbf e_\alpha~|~s\in \R\}\subset \R^3$. 
Concretely, $\Conf_{S,T}(L,\R^3)$ is defined as the pullback: 
    \begin{equation}
        {\begin{tikzcd}
        \Conf_{S,T}(L,\R^3)\ar[d] \ar[r] &  \Conf_{S\cup T}(\R^3) \ar[d] \\
        \Conf_{T}(\R) \ar[r,hook,"L"] & \Conf_T(\R^3)~.
    \end{tikzcd}}
\end{equation}
%where the embedding $\Conf_T(\R)\hookrightarrow \Conf_T(\R^3)$ is the induced by the embedding $L\subset \R^3$.
%\label{def:Conf_TS}
Moreover, let $G'<\Homeo(\R^3)$ be the subgroup of scalings and translations along $L$. We define 
$$
C_{S,T}\coloneqq \Conf_{S,T}(L,\R^3)/G'
$$ 
where $G'$ acts on $\Conf_{T,S}$ by translating and/or scaling all points simultaneously. 
\end{definition}
The points $\big\{(u_i)_{v_i\in S},(a_j)_{w_j\in T\setminus \{w_0\}}\big\}$ determine a set of coordinates on the space $C_{S,T}$ defined above and hence the change of coordinates in equation \eqref{coord_change2} determines a diffeomorphism
\begin{equation}\label{eq:V}
V\cong \Conf_{V\setminus S,\underline{W}'}\times C_{S,T}\times  (0,\eta),
\end{equation}
where $\underline W'$ is obtained from $\underline W$ by substituting $W_\alpha$ with $(W_\alpha\setminus T)\cup\{w_0\}$.
The boundary stratum corresponding to the vertices $S\cup T$ coming together is obtained by including the $r=0$ in the interval on the right-hand side of equation \eqref{eq:V}. Hence
\begin{equation}\label{eq:bdry2}
    \partial_{S,T}\Conf_{V,\underline W}\cong\Conf_{V\setminus S,\underline{W}'}\times C_{S,T}.
\end{equation} 
\section{Stokes' Theorem}
The remainder of this paper is dedicated to proving theorem \ref{thm:1}, namely that $\Delta_t \braket{L_t}=\braket{L_1}-{\braket{L_0}=0}.$ 
To this aim we will use the below proposition.
\begin{proposition} \label{prop:Stokes} Let $\partial\Conf(\Gamma)$ be the co-dimension one boundary in the Axelrod-Singer compactification. Then
\begin{equation}\label{eq:boundary}
\Delta_t \braket{L_t} = \sum_{\Gamma}{\hbar^{\ord(\Gamma)}}\int_{\partial \Conf(\Gamma)} \lambda(\Gamma)\,c(\Gamma)\,.
\end{equation}
\end{proposition}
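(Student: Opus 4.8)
The plan is to obtain the identity as a direct application of Stokes' theorem to the total configuration space $\overline{\Conf}(\Gamma)$, exploiting that the integrand $\lambda(\Gamma)\,c(\Gamma)$ is closed. First I would unwind the definitions: since $\braket{L_t}=\sum_\Gamma \hbar^{\ord(\Gamma)}\mathcal M_t(\Gamma)$ with $\mathcal M_t(\Gamma)=\int_{\Conf^t(\Gamma)}\lambda(\Gamma)c(\Gamma)$, the quantity $\Delta_t\braket{L_t}=\braket{L_1}-\braket{L_0}$ is the sum over $\Gamma$ of $\hbar^{\ord(\Gamma)}\big(\mathcal M_1(\Gamma)-\mathcal M_0(\Gamma)\big)$. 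The two fibers $\Conf^1(\Gamma)$ and $\Conf^0(\Gamma)$ are exactly the faces of $\overline{\Conf}(\Gamma)$ lying over the endpoints of $[0,1]$ under the projection $\Conf_{V,\underline W}\to[0,1]$, so the difference $\mathcal M_1-\mathcal M_0$ should be recognizable as one contribution to an integral over the full boundary.

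The second step is to observe that $\lambda(\Gamma)$ is closed. Each propagator factor $P_e=\Phi_e^*\omega$ is the pullback of $\omega\in\Omega^2(S^2)$, which is top-dimensional on $S^2$ and hence closed; therefore $dP_e=\Phi_e^*d\omega=0$ on the (compactified) configuration space, and since $c(\Gamma)$ is a constant Lie-algebra factor, $d\big(\lambda(\Gamma)c(\Gamma)\big)=0$. By the compactification theorem of section \ref{sec:finiteness}, $\overline{\Conf}(\Gamma)$ is a manifold with corners on which $\lambda(\Gamma)$ extends to a smooth, compactly supported form; the compact support --- a consequence of $\omega$ being supported near the north pole, which forces edges to point upward and localizes interactions near crossings --- is what guarantees there is no boundary contribution from the non-compact $\R^2$ directions. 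Stokes' theorem then gives $\int_{\partial\overline{\Conf}(\Gamma)}\lambda(\Gamma)c(\Gamma)=\int_{\overline{\Conf}(\Gamma)}d\big(\lambda(\Gamma)c(\Gamma)\big)=0$.

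The third step is to decompose the codimension-one boundary of $\overline{\Conf}(\Gamma)$ into the two endpoint fibers over $t=0,1$ and the collision strata $\partial_S\Conf$ and $\partial_{S,T}\Conf$ described in section \ref{sec:finiteness}, and to read off the induced orientations. With the orientation form $\Or(\Gamma)\wedge dt$ on $\Conf(\Gamma)$, the outward-normal convention orients the $t=1$ fiber as $+\Conf^1(\Gamma)$ and the $t=0$ fiber as $-\Conf^0(\Gamma)$, the signs coming from contraction with $\pm\partial_t$ and the fact that $\deg\Or(\Gamma)=2|E|$ is even. Substituting into the vanishing of the total boundary integral isolates $\mathcal M_1(\Gamma)-\mathcal M_0(\Gamma)$ against the integral over the collision strata $\int_{\partial\Conf(\Gamma)}\lambda(\Gamma)c(\Gamma)$; summing over $\Gamma$ with the weights $\hbar^{\ord(\Gamma)}$ then produces equation \eqref{eq:boundary}.

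The main obstacle I anticipate is the orientation bookkeeping combined with the legitimacy of Stokes on a manifold with corners. One must verify that the collision strata of section \ref{sec:finiteness} genuinely exhaust the codimension-one boundary (so that no face is omitted), that the higher-codimension corners contribute nothing to the codimension-one boundary integral, and --- most delicately --- that the induced orientation on $\partial\Conf(\Gamma)$ is the one for which the signs assemble into the stated equality rather than its negative. This is precisely where the convention fixed in Definition \ref{def:Or}, the even degree of $\Or(\Gamma)$, and the placement of $dt$ as the final factor of the orientation form of $\Conf(\Gamma)$ all become essential.
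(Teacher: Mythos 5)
Your strategy coincides with the paper's: the integrand $\lambda(\Gamma)\,c(\Gamma)$ is closed, the compactification is a manifold with corners on which $\lambda(\Gamma)$ is smooth with compact support, and Stokes' theorem trades $\mathcal M_1(\Gamma)-\mathcal M_0(\Gamma)$ for an integral over the remaining codimension-one faces. However, your decomposition of the codimension-one boundary omits an entire class of faces, and this is a genuine gap. Because $M=\R^2\times I$ is itself a manifold with boundary, $\overline{\Conf}(\Gamma)$ has, in addition to the endpoint fibers $\Conf^0(\Gamma)$, $\Conf^1(\Gamma)$ and the collision strata $\partial_S\Conf$, $\partial_{S,T}\Conf$, also codimension-one faces where an internal vertex reaches $\R^2\times\{-1\}$ or $\R^2\times\{1\}$. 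These are not ``faces at infinity'': the compact-support argument you invoke (localization of interactions near the crossings in the $\R^2$ directions) says nothing about a vertex drifting to the top or bottom of the slab. You do flag ``no face is omitted'' as something to verify, but you never identify these faces, and without an argument that their contribution vanishes, what Stokes' theorem yields is not equation \eqref{eq:boundary} but that equation plus unaccounted boundary terms.

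The paper lists these faces explicitly (its case 3) and disposes of them by a structural argument special to this split theory: by Proposition \ref{prop:forest} the only admissible internal vertex has exactly one incoming and two outgoing edges, while the boundary conditions on the propagator \eqref{eq:bc} (equivalently, the support of $\omega$ near the north pole, which forces every edge to point strictly upward along $I$) imply that a vertex lying on $\R^2\times\{-1\}$ can have no incoming edge and a vertex lying on $\R^2\times\{1\}$ can have no outgoing edge. Hence $\lambda(\Gamma)$ vanishes identically on these faces, and only then does the boundary integral reduce to the collision strata plus the two endpoint fibers, as in equation \eqref{eq:stokes}. This is precisely the point where the special vertex structure of the theory enters the proposition --- in ordinary Chern--Simons theory with boundaries these faces are the problematic ones --- so it cannot be skipped or absorbed into a generic compactness claim. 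Your orientation bookkeeping for the $t=0,1$ faces (contracting $\pm\partial_t$ into $\Or(\Gamma)\wedge dt$ and using that $\deg\Or(\Gamma)$ is even) is correct and matches the paper's signs, up to the immaterial convention chosen for orienting the collision strata, whose total contribution is in any case what the subsequent vanishing theorems eliminate.
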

\begin{proof}
    Observe that the total co-dimension one boundary of $\overline \Conf(\Gamma)$ is given by the union of boundary components coming from
\begin{enumerate}
    \item The boundary $\partial \Conf(\Gamma)$ corresponding to subsets of vertices coming together.
    \item The boundaries $\Conf^1(\Gamma)$ and $\Conf^0(\Gamma)$ corresponding to $t=0$ and $t=1$.
    \item The boundaries coming from an internal vertex reaching $\R^2\times \{-1\}$ or $\R^2\times \{1\}$.
\end{enumerate}
By proposition \ref{prop:forest} and lemma \ref{lm:infinity} in section \ref{sec:admissible} it holds for any $\Gamma\in \G$ that $\lambda(\Gamma)$ has compact support in $\overline{\Conf}(\Gamma)$ and vanishes on the boundary corresponding to case 3 in the above. Moreover, since the propagator is a closed form on the interior of $\overline \Conf(\Gamma)$ it holds that $d\lambda(\Gamma)=0$. The following version of Stokes' theorem now applies:
\begin{equation}\label{eq:stokes}
0=\int_{\overline \Conf(\Gamma)}d\lambda(\Gamma)=\int_{\partial \Conf(\Gamma)} \lambda(\Gamma) +\int_{\Conf^0(\Gamma)}\lambda(\Gamma)- \int_{\Conf^1(\Gamma)}\lambda(\Gamma)\,.
\end{equation}
Inserting equation \eqref{eq:stokes} into the expression for $\braket{L_t}$ in equation \eqref{eq:graph sum} the proposition follows.
\end{proof}
%$$
%\Delta_t \mathcal M_t(\Gamma)=\bigg(\int_{\Conf^1(\Gamma)}\lambda(\Gamma)-\int_{\Conf^0(\Gamma)}\lambda(\Gamma)\bigg)c(\Gamma)=\int_{\partial \Conf(\Gamma)} \lambda(\Gamma)\,c(\Gamma),
%$$ and hence 
Proving theorem \ref{thm:1} therefore amounts to showing that the sum of all boundary integrals in equation \eqref{eq:boundary} vanishes. By the construction in the previous section we have
\begin{align*}
\int_{\partial \Conf(\Gamma)} \lambda(\Gamma)\,c(\Gamma) = \sum_{S} \int_{\partial_S \Conf(\Gamma)} \lambda(\Gamma)\,c(\Gamma) +\sum_{S,T}\int_{\partial_{S,T} \Conf(\Gamma)} \lambda(\Gamma)\,c(\Gamma)\,.
\end{align*}
%Where the sums run over all subsets $S\subset  V$ and $T\subset W_\alpha$, $\alpha\in\{1,2,3\}$. 
%We show in section \ref{sec:int} and \ref{sec:ext} that the contribution to the sum in equation \eqref{eq:boundary} coming from boundary strata corresponding to respectively internal and external collisions vanishes. 
\section{Vanishing Theorems}\label{sec:vanishing}
This section contains the proof of theorem \ref{thm:1} via a series of vanishing results for the boundary integrals in equation \eqref{eq:boundary}. These results are variations of the vanishing theorems of Bott and Taubes \cite{BT}. Concretely, in section \ref{sec:int} we prove the vanishing of boundary integrals coming from internal collisions and in section \ref{sec:ext} we prove the vanishing of boundary integrals coming from external collisions (collisions along a Wilson line) .
\subsection{Vanishing theorems for internal collisions}\label{sec:int}
%The goal of this section is to prove the following theorem:
\begin{theorem}\label{thm:int}
   The boundary integrals contributing to equation \eqref{eq:boundary} coming from internal collisions vanishes, that is
\begin{equation}\label{eq:internal_sum}
    \sum_{\Gamma}{\hbar^{\ord(\Gamma)}}\sum_{S} \int_{\partial_S \Conf(\Gamma)} \lambda(\Gamma)\,c(\Gamma) =0\,.
\end{equation}
\end{theorem}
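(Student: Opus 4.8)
The plan is to localize each summand of equation \eqref{eq:internal_sum} on the internal boundary stratum $\partial_S\Conf(\Gamma)\cong C_S\times\Conf_{(V\setminus S)\cup\{v_0\},\underline W}$ of equation \eqref{eq:internal}, show by a dimension count that all but one type of collision contributes zero for degree reasons, and then dispose of the surviving contributions by the Jacobi identity of $\mathfrak a$. First I would analyse how $\lambda(\Gamma)$ restricts to $\partial_S\Conf(\Gamma)$. Writing $E_S$ for the set of edges of $\Gamma$ with both endpoints in $S$, the propagators $P_e$ with $e\in E_S$ depend in the collision limit only on the relative directions of the colliding points, so they pull back to a form $\alpha_S\coloneqq\bigwedge_{e\in E_S}P_e$ on the factor $C_S$; every other edge incident to $S$ has, to leading order, a direction pointing from the collision centre $p_0$ to a far vertex and is therefore constant along $C_S$. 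Hence $\lambda(\Gamma)\big|_{\partial_S}$ is the pullback of $\alpha_S$ wedged with a form on the far configuration space, and the integral vanishes unless $\alpha_S$ is of top degree on $C_S$, that is unless $2|E_S|=\dim C_S=3|S|-4$.

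Next comes the dimension count, which is where Proposition \ref{prop:forest} does the essential work. Since every contributing $\Gamma$ is a forest, the subgraph induced on $S$ is again a forest, so $|E_S|=|S|-k$ with $k\ge 1$ its number of connected components. Substituting into $2|E_S|=3|S|-4$ gives $|S|=4-2k$, which forces $k=1$ and $|S|=2$. Thus the only internal collisions that survive are those of exactly two internal vertices joined by a single edge $e$. For such an $S$ one has $C_S\cong S^2$, and $\alpha_S=\phi_e^*\omega$ integrates to $\pm 1$ over $C_S$; all of the $C_S$-geometry is thereby absorbed into a sign, and the surviving contribution reduces to $\pm 1$ times an integral over $\Conf_{(V\setminus S)\cup\{v_0\},\underline W}$ of the propagators attached to the merged vertex $v_0$.

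The final and most delicate step is the cancellation of these $|S|=2$ terms. Counting half-edges shows that the merged vertex $v_0$ is four-valent, with exactly one incoming $\mathbf B$-leg and three outgoing $\mathbf A$-legs. For a fixed far configuration there are precisely three graphs $\Gamma_1,\Gamma_2,\Gamma_3$ producing this stratum, differing only in which of the three $\mathbf A$-legs shares the source vertex of $e$ with the $\mathbf B$-leg. Because the merged vertex sits at a single point $p_0$ and the three external propagators run to the same far targets in all three cases, the differential-form factors and the induced orientations coincide across $\Gamma_1,\Gamma_2,\Gamma_3$; what differs is only the Lie-algebra factor $c(\Gamma)$ of equation \eqref{eq:color}, which evaluates to the three contractions
$$
\textstyle\sum_m {f^a}_{bm}{f^m}_{cd},\qquad \sum_m {f^a}_{cm}{f^m}_{db},\qquad \sum_m {f^a}_{dm}{f^m}_{bc}.
$$
Their sum is the Jacobi identity of $\mathfrak a$ and vanishes, so the three boundary integrals cancel and equation \eqref{eq:internal_sum} follows. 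I expect the \emph{main obstacle} to be exactly the bookkeeping in this last step: verifying that the orientation forms $\Or(\Gamma_k)$ of Definition \ref{def:Or}, together with the signs produced by the antisymmetry of ${f^a}_{bc}$ in its two lower indices, conspire so that the three terms enter the Jacobi combination with equal coefficients, and checking that this is consistent with summing over isomorphism classes rather than labelled graphs via Proposition \ref{Or invariance}.
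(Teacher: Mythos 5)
Your proposal is correct and follows essentially the same route as the paper's proof: a splitting of $\lambda(\Gamma)$ on $\partial_S\Conf(\Gamma)$ into a $C_S$-factor and a far factor (the paper's Lemma \ref{prop:splitting}), a degree-versus-dimension count against $\dim C_S=3|S|-4$ which, combined with the forest property of Proposition \ref{prop:forest}, forces $|S|=2$ (the paper's Corollary \ref{lm:dim2} and Lemma \ref{thm:hidden_int}), and a final cancellation among the three graphs collapsing to the four-valent vertex via the Jacobi identity (the paper's IHX Lemma \ref{thm:IHX}). The only discrepancy is a matter of convention: the paper fixes a clockwise half-edge ordering, getting $\Or(\Gamma_1)=-\Or(\Gamma_2)=\Or(\Gamma_3)$ together with the alternating combination ${f^c}_{ab}{f^e}_{cd}-{f^c}_{bd}{f^e}_{ac}+{f^c}_{ad}{f^e}_{bc}$, while your bookkeeping (equal orientations, cyclic contractions) produces the same quantity up to an overall sign, so both vanish by Jacobi.
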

\paragraph{Notation:} Given $\Gamma\in \G$ and $S\subset V$, we denote by $\Gamma_S$ the sub-graph of $\Gamma$ spanned by the vertices in $S$ and by $\delta_S\Gamma$ the graph obtained from $\Gamma$ by collapsing $\Gamma_S$ to a single internal vertex $v_0$. Then
$$
\partial_S\Conf(\Gamma)=C_S\times \Conf(\delta_S\Gamma).
$$

% We write $\Conf(\Gamma_0)$ for the configuration space of vertices in $\Gamma_0$. 
Observe that $\lambda(\Gamma)$ splits into a product 
$
\lambda(\Gamma)=\lambda_1\wedge \lambda_2
$
where $\lambda_1$ is constructed from edges in $\Gamma_S$ and $\lambda_2$ is constructed from the remaining edges. In order to prove theorem \ref{thm:int} we will need the following lemma. 
\begin{lemma}\label{prop:splitting}
Upon restricting to $\partial_S\Conf(\Gamma)$, the form $\lambda_1$ factors through the projection
$$
\pi_1:\partial_{S}\Conf(\Gamma)\to C_{S}
$$
and the form $\lambda_2$ factors through the projection 
$$
\pi_2:\partial_{S}\Conf(\Gamma)\to \Conf(\delta_S\Gamma).
$$
%If $e\in E(\Gamma)$ connects two vertices in $S$ then the restriction of $P_e=\Phi_e^*\omega$ to $\partial_{S}\Conf_{\mathbf n,m}$ factors through the projection 
%Otherwise, the restriction of $P_e$ factors through the projection 
\end{lemma}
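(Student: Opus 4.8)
The plan is to analyze the restriction of each propagator factor $P_e=\Phi_e^*\omega$ appearing in $\lambda(\Gamma)=\bigwedge_{e\in E(\Gamma)}P_e$ separately, using the collision coordinates of equation \eqref{eq:coord change1}. Recall that near $\partial_S\Conf(\Gamma)$ the positions of the colliding vertices are written $p_i=p_0+rd_{\min}u_i$ for $v_i\in S$ (with the convention $u_0=0$ for $v_0$), where $(u_i)$ are coordinates on $C_S$ and $r\in(0,\eta)$, the boundary stratum being the locus $r=0$. Under the product decomposition $\partial_S\Conf(\Gamma)=C_S\times\Conf(\delta_S\Gamma)$ from equation \eqref{eq:internal}, the projection $\pi_1$ records the relative configuration $(u_i)_{v_i\in S}$ while $\pi_2$ records the positions of the non-colliding vertices together with the collapsed vertex $v_0$. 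I will show that each $P_e$ with $e\in E(\Gamma_S)$ is a $\pi_1$-pullback, while each $P_e$ with $e\notin E(\Gamma_S)$ restricts on $r=0$ to a $\pi_2$-pullback; taking wedge products and using $\pi_i^*\alpha\wedge\pi_i^*\beta=\pi_i^*(\alpha\wedge\beta)$ then yields the claimed factorizations of $\lambda_1$ and $\lambda_2$.

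For an edge $e$ of $\Gamma_S$ both of its source vertices $v_i,v_{i'}$ lie in $S$, so the displacement entering $\phi_e$ is $p_{i'}-p_i=rd_{\min}(u_{i'}-u_i)$. After normalization the positive factor $rd_{\min}$ cancels, leaving $\phi_e=(u_{i'}-u_i)/|u_{i'}-u_i|$, which depends only on the $C_S$-coordinates; this already holds throughout the neighbourhood $U$, not merely on its boundary. Consequently $P_e=\Phi_e^*\omega$ is a pullback along $\pi_1$, and hence so is the wedge $\lambda_1=\bigwedge_{e\in E(\Gamma_S)}P_e$.

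For the remaining edges I first restrict to the boundary $r=0$, where every vertex of $S$ sits at the common point $p_0$, i.e. at the image of the collapsed vertex $v_0$ in $\delta_S\Gamma$. Hence for any edge with at least one endpoint outside $S$ the map $\phi_e$ evaluated on $r=0$ depends only on the positions of the vertices of $V\setminus S$, of $v_0$, and of the external vertices — exactly the data recorded by $\pi_2$. Thus the restriction of $\Phi_e$ to $\partial_S\Conf(\Gamma)$ factors as $\overline\Phi_e\circ\pi_2$ for some $\overline\Phi_e:\Conf(\delta_S\Gamma)\to S^2$, and by naturality of pullback the restriction of $P_e$ equals $\pi_2^*\overline\Phi_e^*\omega$. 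Wedging over all $e\notin E(\Gamma_S)$ gives the factorization of $\lambda_2$ through $\pi_2$.

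The one point requiring care is this last step: I must ensure that it is the restricted differential form, and not merely the limiting value of the Gauss map, that factors through $\pi_2$. This is guaranteed by the map-level identity $\Phi_e|_{r=0}=\overline\Phi_e\circ\pi_2$ together with naturality of pullback along the inclusion of the boundary. Equivalently, one may verify directly that $\partial\phi_e/\partial u_i$ vanishes at $r=0$: the $u_i$-dependence of $\phi_e$ enters only through the combination $rd_{\min}u_i$, which is of order $r$, so no $du_i$ component survives in the restricted form. This is the analogue in the present setting of the decoupling argument of Bott and Taubes, and it is precisely where the product structure of the internal collision stratum is genuinely used.
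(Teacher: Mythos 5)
Your proposal is correct and follows essentially the same route as the paper's proof: a case analysis on edges using the collision coordinates of equation \eqref{eq:coord change1}, showing that edges internal to $\Gamma_S$ give Gauss maps depending only on the $(u_i)$, while edges with an endpoint outside $S$ limit as $r\to 0$ to Gauss maps depending only on the data of $\Conf(\delta_S\Gamma)$. Your additional care in distinguishing map-level from form-level factorization (via naturality of pullback along the boundary inclusion) is a sound refinement of the same argument, not a different method.
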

\begin{proof}
%$\partial_{T,S}\Conf_{\underline A,B}$ with $T\subset A_2$ (the other case is similar) and 
The proposition follows from the change of coordinates in equation \eqref{eq:coord change1}. In fact, if $e$ connects two internal vertices $v_i,v_j\in S$ we have
\begin{equation*}\label{eq:extensionI}
    \Phi_e(x)=\frac{p_j-p_i}{|p_j-p_i|}=\frac{u_j-u_i}{|u_j-u_i|}\,,
\end{equation*}
%As before, this expression is independent of $r$ and hence $\Phi_e$ extends continuously to the boundary stratum. 
which implies that $\Phi_e$ and thereby $P_e$ factors through the projection $\pi_1$. On the other hand, if $e$ connects a vertex $v_i\in V\setminus S$ and a vertex $v_j\in S$ we have
\begin{equation*}  \Phi_e(x)=\frac{p_j-p_i}{|p_j-p_i|}=\frac{p_0+ru_j-p_i}{\big|p_0+ru_j-p_i\big|}\to \frac{p_0-p_i}{|p_0-p_i|} \ \text{when} \ r\to 0,
\end{equation*}
and hence $P_e$ factors through the projection $\pi_2$. 
\end{proof}
%By lemma \ref{prop:splitting} we can write
%$$
%\int_{\partial_{S}\Conf(\Gamma)}\lambda(\Gamma)=\int_{C_{S}}\lambda_{S}(\Gamma)\int_{\Conf(\Gamma_0)}\lambda(\Gamma_0)\,,
%$$
We write 
$$\lambda_1|_{\partial_{S}\Conf(\Gamma)}=\pi_1^*\,\wt\lambda(\Gamma_{S}) \ \text{ and } \ \lambda_2|_{\partial_{S}\Conf(\Gamma)}=\pi_2^*\,\lambda(\delta_S\Gamma).
$$
%Let $S\subset \{1,\dots, m\}$ label the subset of vertices coming together and
\begin{corollary}\label{lm:dim2}
 Given $\Gamma\in \G$ and $S\subset V$, let $\eta_S(\Gamma)$ be the number of edges connecting a vertex in $S$ with a vertex in $(V\cup W)\setminus S$. The contribution to equation \eqref{eq:internal_sum} from the boundary stratum $\partial_S\Conf(\Gamma)$ vanishes unless $\eta_S(\Gamma)=4$.    
\end{corollary}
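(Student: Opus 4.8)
The plan is to reduce the vanishing to a dimension count on the factor $C_S$ of the boundary stratum, exploiting the splitting established in Lemma \ref{prop:splitting}. First I would combine that lemma with the product decomposition $\partial_S\Conf(\Gamma)=C_S\times\Conf(\delta_S\Gamma)$ to write the restricted integrand as $\pi_1^*\,\wt\lambda(\Gamma_S)\wedge\pi_2^*\,\lambda(\delta_S\Gamma)$. Since $c(\Gamma)\in\mathcal U(\g)^{\otimes 3}$ is a constant Lie-algebra factor, it pulls out of the integral, and Fubini applied to the product manifold yields
\begin{equation*}
\int_{\partial_S\Conf(\Gamma)}\lambda(\Gamma)\,c(\Gamma)=\left(\int_{C_S}\wt\lambda(\Gamma_S)\right)\left(\int_{\Conf(\delta_S\Gamma)}\lambda(\delta_S\Gamma)\right)c(\Gamma).
\end{equation*}
Because the two forms are pullbacks from the respective factors, the first integral vanishes whenever $\deg\wt\lambda(\Gamma_S)\neq\dim C_S$: if the degree exceeds the dimension the form is identically zero, and if it is smaller the integral of a non-top form over $C_S$ vanishes. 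Hence the whole contribution vanishes unless these two numbers agree.

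The second step is to compute both quantities. For the dimension, $C_S=\Conf_S(\R^3)/G$ with $G$ the group of simultaneous translations and scalings of $\R^3$, which has dimension $3+1=4$ and acts freely once $|S|\geq 2$; thus $\dim C_S=3|S|-4$. For the degree, each edge contributes the $2$-form $P_e=\Phi_e^*\omega$, so $\deg\wt\lambda(\Gamma_S)=2e_S$ where $e_S$ is the number of edges of $\Gamma_S$. Counting half-edges, each of the $|S|$ three-valent internal vertices carries three half-edges, whence $3|S|=2e_S+\eta_S(\Gamma)$, the edges internal to $S$ consuming two half-edges each and the $\eta_S(\Gamma)$ boundary edges one each. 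This gives $\deg\wt\lambda(\Gamma_S)=3|S|-\eta_S(\Gamma)$.

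Finally I would equate the two expressions: the condition $\deg\wt\lambda(\Gamma_S)=\dim C_S$ reads $3|S|-\eta_S(\Gamma)=3|S|-4$, i.e. $\eta_S(\Gamma)=4$. Combined with the first step, this shows the contribution of $\partial_S\Conf(\Gamma)$ to equation \eqref{eq:internal_sum} vanishes unless $\eta_S(\Gamma)=4$, as claimed.

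The main point requiring care is the Fubini factorization: one must verify that $\wt\lambda(\Gamma_S)$ genuinely descends to a form on $C_S$ (this is exactly the content of Lemma \ref{prop:splitting}) and that the fibered integral over the product remains valid on the Axelrod-Singer compactification, together with the freeness of the $G$-action needed for the dimension formula $\dim C_S=3|S|-4$. The remaining steps are the elementary half-edge and orbit-dimension counts.
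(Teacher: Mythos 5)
Your proof is correct and takes essentially the same approach as the paper: restrict via Lemma \ref{prop:splitting} to the product $C_S\times\Conf(\delta_S\Gamma)$ and compare $\deg\wt\lambda(\Gamma_S)=3|S|-\eta_S(\Gamma)$ (by the half-edge count) with $\dim C_S=3|S|-4$. The only cosmetic difference is in the case $\eta_S(\Gamma)>4$: the paper notes that then $\lambda(\delta_S\Gamma)$ exceeds the dimension of $\Conf(\delta_S\Gamma)$ and so vanishes identically, while you instead observe that the integral over $C_S$ of a form of less than top degree is zero; both dispositions are valid.
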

\begin{proof}
%Recall the form of the corresponding boundary stratum: 
%$$
%\partial_{S\cup T}\Conf_{n,m}=C_{s,t}\times \Conf_{n-t+1,m-s}\,.
%$$
%We write $\pi_1$ for the projection onto the first factor and $\pi_2$ for the projection onto the second factor. Given $\Gamma\in \G_{n,m}$ 
By counting the number of edges connecting vertices in $S$ one finds 
$
\deg \wt\lambda(\Gamma_{S}) = 3|S|-\eta_S(\Gamma)
$. On the other hand, $\dim C_{S}= 3|S|-4$, and hence $\wt\lambda(\Gamma_{S})$ vanishes unless $\eta_S(\Gamma)\geq 4$. By a similar argument $\lambda(\Gamma_{S,T})$ vanishes on the boundary stratum unless $\eta_S(\Gamma)\leq 4$. 
\end{proof}
\begin{lemma}\label{thm:hidden_int}
    The contribution to equation \eqref{eq:boundary} coming from boundary strata where more than two internal vertices come together vanishes. 
\end{lemma}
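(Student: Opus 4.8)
The plan is to show that the strata with $|S|\geq 3$ contribute nothing graph by graph, using the forest structure of admissible diagrams together with the dimension count already recorded in Corollary \ref{lm:dim2}. Fix $\Gamma\in\G$ and $S\subset V$ with $|S|\geq 3$. By Lemma \ref{prop:splitting} the restriction of $\lambda(\Gamma)$ to $\partial_S\Conf(\Gamma)=C_S\times\Conf(\delta_S\Gamma)$ factors as $\pi_1^*\wt\lambda(\Gamma_S)\wedge\pi_2^*\lambda(\delta_S\Gamma)$, and since $c(\Gamma)$ is a position-independent element of $\mathcal U(\g)^{\otimes 3}$ it pulls out of the integral. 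As the orientation $\Or(\Gamma)$ restricts to a product orientation on $C_S\times\Conf(\delta_S\Gamma)$, the boundary integral factors as $\big(\int_{C_S}\wt\lambda(\Gamma_S)\big)\big(\int_{\Conf(\delta_S\Gamma)}\lambda(\delta_S\Gamma)\big)\,c(\Gamma)$, so it suffices to show that the first factor vanishes whenever $|S|\geq 3$.

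First I would compute the degree of $\wt\lambda(\Gamma_S)$. Each edge of $\Gamma_S$ contributes a factor $P_e=\Phi_e^*\omega$ of degree two, so $\deg\wt\lambda(\Gamma_S)=2|E(\Gamma_S)|$, while $\dim C_S=3|S|-4$. By Proposition \ref{prop:forest} every admissible $\Gamma$ is a forest, hence so is its spanned subgraph $\Gamma_S$; a forest on $|S|$ vertices has at most $|S|-1$ edges, giving $|E(\Gamma_S)|\leq |S|-1$. Equivalently, in the notation of Corollary \ref{lm:dim2}, $\eta_S(\Gamma)=3|S|-2|E(\Gamma_S)|\geq 3|S|-2(|S|-1)=|S|+2$.

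For $|S|\geq 3$ this yields $\eta_S(\Gamma)\geq 5>4$, or equivalently $\deg\wt\lambda(\Gamma_S)=2|E(\Gamma_S)|\leq 2|S|-2<3|S|-4=\dim C_S$. Thus $\wt\lambda(\Gamma_S)$ is a form of degree strictly below the dimension of $C_S$, so $\int_{C_S}\wt\lambda(\Gamma_S)=0$ and the entire boundary integral over $\partial_S\Conf(\Gamma)$ vanishes; summing over all $\Gamma$ and all such $S$ gives the claim, in fact term by term. The only genuine input beyond the earlier dimension count is the acyclicity of admissible diagrams, which is exactly what forces the internal subgraph to be too sparse to produce a top-degree form on $C_S$ (note the bound is sharp: for $|S|=2$ a connected pair gives $\eta_S=4$, the case treated separately). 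The main point to verify carefully is therefore that the forest property of Proposition \ref{prop:forest} is inherited by $\Gamma_S$ and that the factorization and orientation of Lemma \ref{prop:splitting} legitimately reduce the boundary integral to the vanishing factor on $C_S$.
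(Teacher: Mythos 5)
Your proof is correct and follows essentially the same route as the paper: the paper's own argument is exactly the combination of Corollary \ref{lm:dim2} (the contribution vanishes unless $\eta_S(\Gamma)=4$) with Proposition \ref{prop:forest} (admissible graphs are forests), observing that collapsing $|S|\geq 3$ vertices of a forest yields $\eta_S(\Gamma)\geq |S|+2\geq 5>4$. You merely spell out the intermediate steps (the factorization from Lemma \ref{prop:splitting} and the edge bound $|E(\Gamma_S)|\leq |S|-1$) that the paper leaves implicit.
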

\begin{proof}
    This follows directly from corollary \ref{lm:dim2} and proposition \ref{prop:forest}, since collapsing more than two internal vertices in a forest creates a vertex of valence greater than four. 
\end{proof}
%We are now ready to prove the vanishing theorems for internal collisions. We first consider the case of two internal vertices coming together. The following vanishing theorem is known as the IHX-relation.
%These boundary strata are known in the paper of Bott and Taubes as \textit{principal faces}.
%The vanishing in this case follows from relation known as the IHX-relations: {\color{red}(...)}
%\subsection{Two internal vertices coming together}
The following lemma is known as the IHX relations. 
\begin{lemma}\label{thm:IHX} The contribution to equation \eqref{eq:internal_sum} coming from boundary strata where two internal vertices come together vanishes.
\end{lemma}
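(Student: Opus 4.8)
The plan is to show that when two internal vertices $v_i, v_j$ collide, the contributing boundary strata (those with $\eta_S(\Gamma)=4$ by Corollary \ref{lm:dim2}) cancel in pairs, or rather in triples, against one another via the Jacobi identity. The key observation is that the strata with $\eta_S = 4$ correspond to collapsing an edge $e=\{v_i,v_j\}$ in the forest, producing a single four-valent vertex $v_0$. Because $\lambda_1$ factors through $\pi_1: \partial_S\Conf(\Gamma)\to C_S$ by Lemma \ref{prop:splitting}, and here $S=\{v_i,v_j\}$ so that $\dim C_S = 3\cdot 2 - 4 = 2 = S^2$, the integral $\int_{C_S}\wt\lambda(\Gamma_S)$ is a pure number (the integral of $\omega$ over $S^2$, equal to one) independent of the decorations on the external legs. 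Consequently $\int_{\partial_S\Conf(\Gamma)}\lambda(\Gamma)\,c(\Gamma)$ factorizes as $\big(\int_{C_S}\wt\lambda(\Gamma_S)\big)\cdot\int_{\Conf(\delta_S\Gamma)}\lambda(\delta_S\Gamma)\,c(\Gamma)$, so that the geometric part is common and only the Lie-algebra factor $c(\Gamma)$ distinguishes the terms.

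First I would fix the collapsed graph $\delta_S\Gamma$, i.e. fix a four-valent internal vertex $v_0$ together with the rest of the forest and all external data, and then enumerate which graphs $\Gamma\in\G$ give rise to it. Since each internal vertex of $\Gamma$ is trivalent with exactly one incoming and two outgoing edges (figure \ref{fig:vertex}), collapsing the edge $e$ between $v_i$ and $v_j$ means $v_0$ has four incident half-edges whose orientations are inherited from $\Gamma$. The three distinct ways of splitting a four-valent vertex into two trivalent vertices joined by an internal edge — grouping the four legs as $(12)(34)$, $(13)(24)$, $(14)(23)$ — correspond precisely to the three channels in the IHX relation. I would then verify that each such resolution is an admissible graph (respects the one-in/two-out rule at both trivalent vertices), keeping careful track of which leg carries the incoming $\mathbf B$-field; typically only certain channels are admissible, but the ones that survive are exactly those whose color factors appear in the Jacobi identity.

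The crux is the Lie-algebra computation. For the three resolutions the color factors differ only in the local contraction at $v_0$, and using the structure constants $\braket{[\zeta^a,\xi_b],\xi_c}={f^a}_{bc}$ the sum of the three channels reproduces the combination $f\cdot f - f\cdot f + f\cdot f$ appearing on the left-hand side of the Jacobi identity for $\mathfrak a$ (equivalently, the co-Jacobi identity for the cobracket on $\mathfrak a^*$ dual to the bracket structure of the Manin triple). One must also account for the relative orientation signs: collapsing $e$ in different channels reverses the induced orientation $\Or(\Gamma)$ via Definition \ref{def:Or}, and these signs must be matched against the signs in the Jacobi identity. The hardest part will be precisely this sign bookkeeping — reconciling the orientation form $\Or(\Gamma)\wedge dt$, the ordering of half-edges at $v_0$, and the antisymmetry $P^a{}_b(x,y)=-P_b{}^a(y,x)$ — so that the three geometrically-equal boundary integrals combine with coefficients $+1,-1,+1$ (up to a common overall factor) and hence sum to zero by the Jacobi identity.

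Concretely, I would organize the proof as follows: (i) invoke Corollary \ref{lm:dim2} and Lemma \ref{thm:hidden_int} to reduce to $|S|=2$ with $\eta_S(\Gamma)=4$; (ii) use Lemma \ref{prop:splitting} to factor the boundary integral into the $C_S\cong S^2$ factor times the integral over $\Conf(\delta_S\Gamma)$, noting the $S^2$ factor equals one and is identical across channels; (iii) for each fixed $\delta_S\Gamma$, partition the contributing graphs $\Gamma$ into IHX triples sharing the same $\delta_S\Gamma$; and (iv) show that within each triple the color factors sum to zero via the Jacobi identity for the structure constants ${f^a}_{bc}$, after carefully matching orientation signs. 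Summing over all $\delta_S\Gamma$ then yields the vanishing \eqref{eq:internal_sum}, completing the proof.
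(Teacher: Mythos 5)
Your proposal is correct and follows essentially the same route as the paper's proof: fix the collapsed graph with a four-valent vertex, identify the three resolutions $\Gamma_1,\Gamma_2,\Gamma_3$ that degenerate to it, factor the boundary integral as $\int_{S^2}\omega$ (equal to one) times a common integral over $\Conf(\delta_S\Gamma)$, and cancel the three color factors with signs $+1,-1,+1$ via the Jacobi identity for ${f^a}_{bc}$. The paper resolves your ``hardest part'' (the sign bookkeeping) exactly as you anticipate, by choosing a clockwise ordering of half-edges so that $\Or(\Gamma_1)=-\Or(\Gamma_2)=\Or(\Gamma_3)$.
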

\begin{proof} 
Let $\Gamma_0$ be a graph which has a single four-valent internal vertex $v_0$ with one incoming and three outgoing edges, and with all other vertices three- and one-valent. There are exactly three graphs $\Gamma_{1},\Gamma_{2},\Gamma_{3}\in\G$ that identify with $\Gamma_0$ when collapsing two internal vertices. These graphs are shown in figure \ref{fig:IHX}, where we imagine that all vertices and edges outside the encircled area are held fixed:
\begin{figure}[H]
    \centering
\begin{tikzpicture}[vertex/.style={draw,circle, fill=black, inner sep=1.5pt}]
\node (w0) at (0,0.3)  {$p_j$}; 
  \node (w1) at (0,-1.3)  {$p_i$};
  \node[vertex] (v0) at (0,0)  {}; 
  \node[vertex] (v1) at (0,-1)  {};
  \node (v2) at (-1,0.7) {};
  \node (v3) at (1,0.7) {};
  \node (v4) at (1,-1.7) {};
  \node (v5) at (-1,-1.7) {};
  \node (v6) at (0,-2.5) {${\Gamma_1}$};
  \draw[->] (v1) --  (v0);  
  \draw[->] (v0) --  (v2)node[above]{$a$};
  \draw[->] (v0) --  (v3)node[above]{$b$};
  \draw[->] (v1) --  (v4)node[below]{$d$};
  \draw[->] (v5)node[below]{$e$} --  (v1);
  \draw[dotted,thick] (1.1,-0.5) arc (1.1:360:1.1);
\end{tikzpicture} \hspace{1cm} \begin{tikzpicture}[vertex/.style={draw,circle, fill=black, inner sep=1.5pt}]
\node (w0) at (0.8,-0.5)  {$p_j$}; 
  \node (w1) at (-0.8,-0.5)  {$p_i$};
  \node[vertex] (v0) at (0.5,-0.5)  {}; 
  \node[vertex] (v1) at (-0.5,-0.5)  {};
  \node (v2) at (-1,0.7) {};
  \node (v3) at (1,0.7) {};
  \node (v4) at (1,-1.7) {};
  \node (v5) at (-1,-1.7) {};
  \node (v6) at (0,-2.5) {${\Gamma_2}$};
  \draw[->] (v1) --  (v0);  
  \draw[->] (v1) --  (v2)node[above]{$a$};
  \draw[->] (v0) --  (v3)node[above]{$b$};
  \draw[->] (v0) --  (v4)node[below]{$d$};
  \draw[->] (v5)node[below]{$e$} --  (v1);
  \draw[dotted,thick] (1.1,-0.5) arc (1.1:360:1.1);
\end{tikzpicture} \hspace{1cm} \begin{tikzpicture}[vertex/.style={draw,circle, fill=black, inner sep=1.5pt}]
\node (w0) at (0.8,-0.5)  {$p_j$}; 
  \node (w1) at (-0.8,-0.5)  {$p_i$};
  \node[vertex] (v0) at (0.5,-0.5)  {}; 
  \node[vertex] (v1) at (-0.5,-0.5)  {};
  \node (v6) at (0,-2.5) {${\Gamma_3}$};
  \node (v2) at (-1,0.7) {};
  \node (v3) at (1,0.7) {};
  \node (v4) at (1,-1.7) {};
  \node (v5) at (-1,-1.7) {};
  \draw[->] (v1) --  (v0);  
  \draw[->] (v0) --  (v2)node[above]{$a$};
  \draw[->] (v1) --  (v3)node[above]{$b$};
  \draw[->] (v0) --  (v4)node[below]{$d$};
  \draw[->] (v5)node[below]{$e$} --  (v1);
  \draw[dotted,thick] (1.1,-0.5) arc (1.1:360:1.1);
\end{tikzpicture} \hspace{0.5cm}
\begin{tikzpicture}
  \node (a1) at (0,0)  {};
  \node (a2) at (0,2)  {$\longrightarrow$};
\end{tikzpicture} \hspace{0.5cm}
\begin{tikzpicture}[vertex/.style={draw,circle, fill=black, inner sep=1.5pt}]
\node (w0) at (0.3,-0.5)  {$\,p_0$}; 
  %\node (w1) at (-0.8,-0.5)  {$v_\alpha$};
  \node[vertex] (v0) at (0,-0.5)  {}; 
  %\node[vertex] (v1) at (-0.5,-0.5)  {};
  \node (v2) at (-1,0.7) {};
  \node (v3) at (1,0.7) {};
  \node (v4) at (1,-1.7) {};
  \node (v5) at (-1,-1.7) {};
  \node (v6) at (0,-2.5) {${\Gamma_0}$};
  %\draw[->] (v1) --  (v0);  
  \draw[->] (v0) --  (v2)node[above]{$a$};
  \draw[->] (v0) --  (v3)node[above]{$b$};
  \draw[->] (v0) --  (v4)node[below]{$d$};
  \draw[->] (v5)node[below]{$e$} --  (v0);
  \draw[dotted,thick] (1.1,-0.5) arc (1.1:360:1.1);
\end{tikzpicture}
%\hspace{0.3cm}
%\begin{tikzpicture}
%  \node (a1) at (0,0)  {};
%  \node (a2) at (0,2)  {$\longrightarrow$};
%\end{tikzpicture} \hspace{0.3cm}
%\begin{tikzpicture}[vertex/.style={draw,circle, fill=black, inner sep=1.5pt}]
%\node (w0) at (0.3,-0.5)  {$v_0$}; 
%  \node[vertex] (v0) at (0,-0.5)  {}; 
%  \node (v2) at (-1,0.7) {};
%  \node (v3) at (1,0.7) {};
%  \node (v4) at (1,-1.7) {};
 % \node (v5) at (-1,-1.7) {};
  %\node (v6) at (0,-2.5) {${\Gamma_0}$};
  %\draw[->] (v0) --  (v2)node[above]{$a$};
  %\draw[->] (v0) --  (v3)node[above]{$b$};
  %\draw[->] (v0) --  (v4)node[below]{$d$};
  %\draw[->] (v5)node[below]{$e$} --  (v0);
  %\draw[dotted] (1.1,-0.5) arc (1.1:360:1.1);
%\end{tikzpicture}
\caption{:~The IHX-relation}
\label{fig:IHX}
\end{figure}
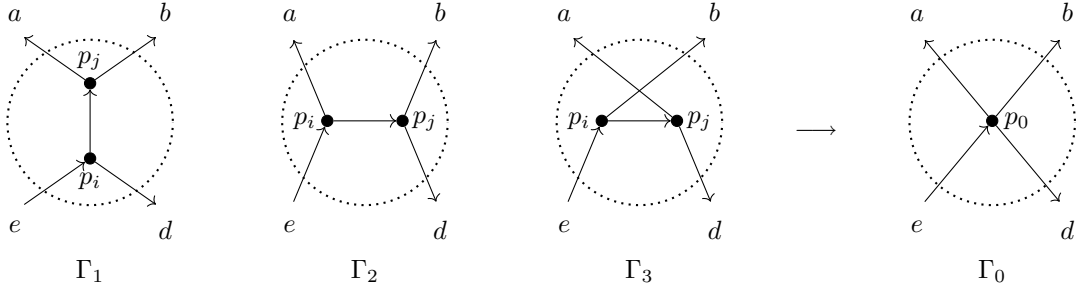
%Let $\lambda(\Gamma_0)\in \Omega^*(\Conf_{\mathbf{n},m-1})$ be given by $$\lambda(\Gamma_0)=\bigwedge_{e\in E(\Gamma_0)}P_e.$$
%Given $\Gamma\in \G_{\n,m}$, let $\Gamma_0$ be the graph obtained from collapsing the vertices $v_\alpha$ and $v_\beta$ into a single vertex $v_0$ and erasing all loops. In the diagrams below all edges and vertices outside of the dotted circle are fixed. Then there are three possible graphs $\Gamma_1$, $\Gamma_2$ and $\Gamma_3$. 
%Collapsing $p_i$ and $p_j$ in each $\Gamma_k$, $k\in\{1,2,3\}$ into a single vertex $p_0$, we obtain a graph $\Gamma_0$ which has a single four-valent internal vertex. 
\noindent The boundary stratum corresponding to collapsing $p_i$ and $p_j$ is given by: 
$$
\partial_{\{v_i,v_j\}}\Conf(\Gamma_k)=C_{\{v_i,v_j\}}\times \Conf(\Gamma_0)\cong S^2\times \Conf(\Gamma_0),
$$
%By lemma \ref{prop:splitting2} the three $\lambda(\Gamma_k)$'s restrict to the same form on the boundary:  $\pi^*_1\wt\lambda_1\wedge \pi_2^*\wt \lambda_2$
%namely
%$$
%\pi^*_1\omega\wedge \pi^*_2(\lambda(\Gamma_0)),
%$$
%where $\wt\lambda_1=\omega$ and $\wt\lambda_2=\lambda(\Gamma_0)$ is given by
%\begin{align*}
%\lambda(\Gamma_0)=\bigwedge_{e\in E(\Gamma_0)}P_e.
%\end{align*}
for $k=1,2,3$. If we choose the ordering of half edges in each graph to be clockwise, it follows from definition~\ref{def:Or} that 
$$
\Or(\Gamma_1)=-\Or(\Gamma_2)=\Or(\Gamma_3).
$$ 
Hence, the contribution to the sum in equation \eqref{eq:internal_sum} coming from this boundary stratum takes the form:
\begin{align}
\int_{S^2}\omega\int_{\Conf(\Gamma_0)}\lambda(\Gamma_0)\,c(\Gamma_0),
\end{align}
where $c(\Gamma_0)$ obtained from applying the usual Feynman rules to all three- and one-valent vertices of $\Gamma_0$, and assigning to the four-valent vertex $p_0$ the factor:
$$
({f^c}_{ab}{f^e}_{cd}-{f^c}_{bd}{f^e}_{ac}+{f^c}_{ad}{f^e}_{bc}),
$$
which vanishes by Jacobi identity for the structure constants. This proves the theorem.% On the other hand, for any $\Gamma_0$ as in figure \ref{fig:IHX}, there are exactly three graphs $\Gamma_{1,2,3}\in \G$ that maps to $\Gamma_0$ upon collapsing two internal vertices the theorem follows. 
%This relation is usually in the literature referred to as the IHX-relation. 
\end{proof}
By combining lemma \ref{thm:hidden_int} and \ref{thm:IHX} we have now proved theorem \ref{thm:int}. In section \ref{sec:ext} below we show the similar vanishing theorems for external collisions. Many of the arguments are repetitions of those given above.
\subsection{Vanishing theorems for external collisions}\label{sec:ext}
\begin{theorem}\label{thm:ext}
    The boundary integrals contributing to equation \eqref{eq:boundary} coming from external collisions vanishes, that is 
\begin{equation}\label{eq:external_sum}
    \sum_{\Gamma}{\hbar^{\ord(\Gamma)}}\sum_{S,T} \int_{\partial_{S,T} \Conf(\Gamma)} \lambda(\Gamma)\,c(\Gamma) =0.
\end{equation}
\end{theorem}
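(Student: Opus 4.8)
The plan is to follow the proof of Theorem \ref{thm:int} almost verbatim, replacing the internal collision data of Section \ref{sec:int_strata} by the external collision data of Section \ref{sec:ext_strata}. Write $\Gamma_{S,T}$ for the subgraph of $\Gamma$ spanned by $S\cup T$ and $\delta_{S,T}\Gamma$ for the graph obtained by collapsing $\Gamma_{S,T}$ to a single vertex sitting on $L_{\alpha,t}$, so that $\partial_{S,T}\Conf(\Gamma)\cong \Conf(\delta_{S,T}\Gamma)\times C_{S,T}$ by \eqref{eq:bdry2}. First I would prove the exact analogue of Lemma \ref{prop:splitting}: using the rescaled coordinates \eqref{coord_change2}, an edge internal to the cluster has $\Phi_e$ depending only on the $u_i$ and $a_j$, while an edge leaving the cluster has $\Phi_e$ converging as $r\to 0$ to a map pulled back from $\Conf(\delta_{S,T}\Gamma)$. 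Hence $\lambda(\Gamma)$ restricts on the boundary stratum to $\pi_2^*\lambda(\delta_{S,T}\Gamma)\wedge\pi_1^*\wt\lambda(\Gamma_{S,T})$, with $\wt\lambda(\Gamma_{S,T})$ supported on $C_{S,T}$.

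Next comes the dimension count of Corollary \ref{lm:dim2}. Letting $\eta_{S,T}(\Gamma)$ be the number of edges joining $S\cup T$ to its complement, a half-edge count gives $\deg\wt\lambda(\Gamma_{S,T})=3|S|+|T|-\eta_{S,T}(\Gamma)$, while $\dim C_{S,T}=3|S|+|T|-2$ (the $2$ replacing the $4$ of the internal case because now only the single translation along $L$ and the overall scaling are quotiented out). Thus the stratum contributes only if $\eta_{S,T}(\Gamma)=2$. Combined with Proposition \ref{prop:forest} this already disposes of several families exactly as in Lemma \ref{thm:hidden_int}: a connected cluster meeting the line in a single external vertex and containing $k\geq 2$ bulk vertices has $\eta_{S,T}=k+1>2$ and so drops out by degree. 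What survives are the two principal faces, (a) two adjacent external vertices colliding ($S=\emptyset$, $|T|=2$, $\dim C_{S,T}=0$) and (b) a single bulk vertex meeting the line at one external vertex ($|S|=|T|=1$, $C_{S,T}\cong S^2$), together with the larger clusters that touch the line at two or more points.

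These larger clusters are the genuinely new feature, since — unlike the internal case, where $\eta_S=4$ forced $|S|=2$ — here $\eta_{S,T}=2$ is compatible with connected clusters of any size that reach the line at $|T|=|S|$ points (for instance two bulk vertices each sending an edge up to $L_{\alpha,t}$). I would show their contributions vanish by the hidden-face arguments of \cite{BT}: on such a stratum $\wt\lambda(\Gamma_{S,T})$ is invariant, up to the sign dictated by the antisymmetry \eqref{eq:anti-sym}, under an orientation-reversing involution of $C_{S,T}$ (or, for disconnected clusters, factors through a space of strictly smaller dimension), so that $\int_{C_{S,T}}\wt\lambda(\Gamma_{S,T})=0$. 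I expect this to be the main obstacle, precisely because the simple valence bound of Lemma \ref{thm:hidden_int} is no longer available, and one must exhibit the relevant symmetry of $C_{S,T}$ for every admissible multi-contact cluster rather than reduce to a single universal configuration.

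It remains to cancel the principal faces (a) and (b), which is the boundary analogue of the IHX relation of Lemma \ref{thm:IHX}, namely an STU relation. Fixing the collapsed graph $\delta_{S,T}\Gamma$ with its two distinguished legs, exactly three graphs map to it: the graph whose two legs meet at a bulk vertex attached to $L_{\alpha,t}$ by a third edge (contributing via face (b)), and the two graphs whose legs attach directly to the line in the two admissible orders (contributing via face (a)). On face (b) the third edge integrates to $\int_{S^2}\omega=1$ and leaves behind the structure constant carried by the bulk vertex, whereas the two face-(a) terms leave the two ordered products of the corresponding line insertions. After matching the orientation forms $\Or(\Gamma)$ across the three graphs via Definition \ref{def:Or}, exactly as in Lemma \ref{thm:IHX}, the three contributions combine into a single insertion proportional to a commutator minus a structure-constant term, which cancels by the appropriate bracket relation in \eqref{eq:brackets} of the Manin triple (the commutator $[\xi_b,\xi_c]={f^e}_{bc}\xi_e$ of two $\mathfrak a$-insertions, or the mixed bracket $[\zeta^a,\xi_b]={f^a}_{cb}\zeta^c$, according to the field types forced on the legs by the edge orientations). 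The delicate point in this step is the sign bookkeeping: one must track the relative signs of $\Or$ across the three graphs together with the orientation of $S^2\hookrightarrow C_{S,T}$ and the direction of the collapsing edge, so as to land on the commutator rather than on a spurious symmetric combination.
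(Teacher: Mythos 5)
Your proposal matches the paper's proof step for step in its first three moves: the splitting lemma (the paper's Lemma \ref{prop:splitting2}), the dimension count $\deg\wt\lambda(\Gamma_{S,T})=3|S|+|T|-\eta_{S,T}(\Gamma)$ against $\dim C_{S,T}=3|S|+|T|-2$ forcing $\eta_{S,T}(\Gamma)=2$ (Corollary \ref{lm:dim1}), and the STU cancellation of the principal faces, where the three graphs collapsing to a two-valent external vertex combine into the factor $\zeta^a\xi_b-\xi_b\zeta^a-{f^a}_{bc}\zeta^c\int_{S^2}\omega$, which vanishes by the bracket relations \eqref{eq:brackets} (Lemma \ref{thm:STU}).

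The gap is exactly the step you flag as the ``main obstacle'': clusters with $|S\cup T|>2$ and $\eta_{S,T}=2$ (the paper's Lemma \ref{thm:hidden_ext}). There are two problems with your proposed resolution. First, the mechanism you invoke --- invariance of $\wt\lambda(\Gamma_{S,T})$ up to the sign of the antisymmetry \eqref{eq:anti-sym} under an orientation-reversing involution of $C_{S,T}$ --- is precisely the kind of Bott--Taubes reflection argument that is \emph{not} available in this theory: $\omega$ is concentrated near the north pole, so it has no antipodal or rotational symmetry (this broken symmetry is the paper's stated reason for passing to the split theory in the first place), and any involution that reverses the direction of an edge kills the support of the integrand instead of producing a sign. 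Second, no cluster-by-cluster search for symmetries is needed: the paper organizes the analysis by the orientations of the two legs leaving the cluster, leaning on Proposition \ref{prop:forest} throughout. Both legs outgoing is impossible for a forest once $|S\cup T|>2$. Both legs incoming, as well as the cases where a leg attaches to an external vertex or both legs attach to the same bulk vertex, die by a refined degree count: the sub-cluster $\Gamma_v$ of $\Gamma_{S,T}$ reachable from one leg carries a form factoring through a projection $C_{S,T}\to C_{S',T'}$ whose target has dimension smaller than the form degree (it has only one leg, so the bound of Corollary \ref{lm:dim1} is violated). Only the remaining case --- one incoming and one outgoing leg attached to \emph{distinct} bulk vertices --- uses a symmetry, and it is Kontsevich's substitution $y\mapsto x+z-y$ applied to the single bulk vertex adjacent to the outgoing leg: this swaps the two propagators $\phi^*\omega(x,y)$ and $\phi^*\omega(y,z)$ without reversing either edge direction (hence is compatible with the north-pole support), and the minus sign comes from the orientation reversal of $\R^3$, not from \eqref{eq:anti-sym}. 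Without this case analysis, and with the involution as you state it, the multi-contact faces remain unhandled.
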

\paragraph{Notation:} Given $\Gamma\in \G$, 
%with vertex set $\underline A\cup B$ 
$S\subset V$ and $T\subset W_\alpha$ for some $\alpha\in\{1,2,3\}$, denote by $\Gamma_{S, T}$ the subgraph of $\Gamma$ spanned by the vertices in $S\cup T$ and by $\delta_{S,T}\Gamma$ the graph obtained from $\Gamma$ by collapsing $\Gamma_{S, T}$ to a single external vertex $w_0$. Then
$$
\partial_{S,T}\Conf(\Gamma)=C_{S,T}\times \Conf(\Gamma_{S,T}).
$$

We begin by proving the equivalents of lemma \ref{prop:splitting} and corollary \ref{lm:dim2} in the case of external collisions. As in section \ref{sec:int} we can write $\lambda(\Gamma)=\lambda_1\wedge\lambda_2$ where $\lambda_1$ is constructed from the edges in $\Gamma_{S, T}$ and $\lambda_2$ is constructed from the remaining edges. 

\begin{lemma}\label{prop:splitting2} 
Upon restricting to $\partial_{S,T}\Conf(\Gamma)$, the form $\lambda_1$ factors through the projection
$$
\wt\pi_1:\partial_{S,T}\Conf(\Gamma)\to C_{S,T}
$$
and the form $\lambda_2$ factors through the projection 
$$
\wt \pi_2:\partial_{S,T}\Conf(\Gamma)\to \Conf(\delta_{S,T}\Gamma).
$$  
%Let $\Gamma\in \G$. If $e\in E(\Gamma)$ connects two vertices in $S\cup T$ then the restriction of $P_e=\Phi_e^*\omega$ to $\partial_{T,S}\Conf(\Gamma)$ factors through the projection 
%$$
%\wt\pi_1:\partial_{S,T}\Conf(\Gamma)\to C_{S,T}\,.
%$$ 
%Otherwise, the restriction of $P_e$ factors through the projection 
%$$
%\wt \pi_2:\partial_{S,T}\Conf(\Gamma)\to \Conf(\Gamma_0)
%$$
\end{lemma}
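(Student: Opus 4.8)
The plan is to transcribe the proof of Lemma~\ref{prop:splitting} to the mixed bulk-and-line collision, now using the coordinate change \eqref{coord_change2}. Recall that on the neighbourhood $V$ the internal vertices in $S$ are written $p_i = L_{\alpha,t}(q_0) + r d_{\min} u_i$ and the external vertices in $T$ are written $q_j = q_0 + r d_{\min} a_j$, the boundary stratum \eqref{eq:bdry2} being reached as $r\to 0$. The structural fact I would invoke repeatedly is that $L_{\alpha,t}$ is an affine embedding, so that $L_{\alpha,t}(q_0 + r d_{\min} a_j) = L_{\alpha,t}(q_0) + r d_{\min}\, a_j\, \mathbf e_\alpha$; this lets the scaling parameter $r$ be factored out cleanly even for edges meeting external vertices.

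First I would analyse $\lambda_1 = \bigwedge_{e\in E(\Gamma_{S,T})} P_e$ by the type of edge. For an edge between two internal vertices $v_i, v_j \in S$ one has $p_j - p_i = r d_{\min}(u_j - u_i)$, so $\Phi_e = (u_j-u_i)/|u_j - u_i|$; for an edge from $v_i \in S$ to an external vertex $w_j \in T$ the affineness gives $\Phi_e = (a_j \mathbf e_\alpha - u_i)/|a_j \mathbf e_\alpha - u_i|$; an edge between two vertices of $T$ lies horizontally along $L_{\alpha,t}$ and therefore carries a vanishing propagator by Proposition~\ref{prop:forest}, so contributes nothing. In each surviving case $\Phi_e$ is a function of the relative coordinates $\{u_i, a_j\}$ alone, so $\lambda_1$ is pulled back along $\wt\pi_1$.

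Next I would analyse $\lambda_2$ by taking the $r\to 0$ limit, exactly as for internal collisions. An edge with both endpoints outside $S\cup T$ is independent of $r$ and depends only on the un-rescaled vertices, which are the coordinates of $\Conf(\delta_{S,T}\Gamma)$. For an edge from $v_j\in S$ to an outside vertex at $p_i$ one finds $\Phi_e = (L_{\alpha,t}(q_0) + r d_{\min}u_j - p_i)/|\cdots| \to (L_{\alpha,t}(q_0) - p_i)/|L_{\alpha,t}(q_0) - p_i|$, and for an edge meeting $w_j\in T$ the limit $q_j \to q_0$ gives the analogous expression; in both cases the limiting propagator is the one attached to the single collapsed external vertex $w_0$ sitting at $q_0$ on $L_{\alpha,t}$, which is precisely the datum recorded by $\delta_{S,T}\Gamma$. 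Hence $\lambda_2$ factors through $\wt\pi_2$.

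The point requiring the most care, and the genuine difference from Lemma~\ref{prop:splitting}, is that the residual group is $G'$, the scalings and translations \emph{along} $L$, rather than the full group $G$ of the internal case, and that two kinds of vertices collapse simultaneously onto the line. Concretely one must check that each limiting direction descends to the quotient $C_{S,T} = \Conf_{S,T}(L,\R^3)/G'$, i.e.\ is invariant under $u_i\mapsto \lambda u_i$, $a_j \mapsto \lambda a_j$ and $u_i \mapsto u_i + c\mathbf e_\alpha$, $a_j \mapsto a_j + c$; the differences $u_j - u_i$ and $a_j\mathbf e_\alpha - u_i$ are manifestly invariant, which closes this gap. With these checks in place the remainder is a direct repetition of the internal argument.
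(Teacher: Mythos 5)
Your proposal is correct and follows essentially the same route as the paper's proof: a case analysis over edge types using the coordinate change \eqref{coord_change2}, with edges inside $\Gamma_{S,T}$ depending only on the rescaled coordinates $\{u_i,a_j\}$ and edges leaving $S\cup T$ admitting an $r\to 0$ limit that depends only on the collapsed configuration. The paper treats just two representative cases and declares ``the remaining cases are similar,'' whereas you spell out all cases (including the degenerate $T$--$T$ edge) and verify $G'$-invariance explicitly; these additions are consistent with, and slightly more careful than, the published argument.
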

\begin{proof}
%$\partial_{T,S}\Conf_{\underline A,B}$ with $T\subset A_2$ (the other case is similar) and 
%This expression does not depend in the parameter $r\in(0,1)$ and thus $\Phi_e$ extends continuously to the boundary stratum which corresponds to the limit $r\to 0$. 
Let $e$ be an edge connecting a vertices $v_i\in S$ and $w_j\in T$. Then with the coordinate change in equation \eqref{coord_change2} we have
%Observe first that
%\begin{equation*}
%L_{2,t}(q_0)=L_{2,0}(q_0)+t\hat x
%\end{equation*}
%where $\hat x$ is the unit vector along the $x$-axis {\color{red}(orthogonal to $\hat a_{2}$ parallel to the boundary)}. Hence, 
%With the coordinate change in equation \eqref{coord_change2} 
\begin{equation*}\label{eq:extensionII}
\Phi_e(x)=\frac{L_{\alpha,t}(q_j)-p_i}{\big|L_{\alpha,t}(q_j)-p_i\big|}=\frac{a_j\mathbf e_\alpha-u_i}{|a_j\mathbf e_\alpha-u_i|}\,,
\end{equation*}
%As before, this expression is independent of $r$ and hence $\Phi_e$ extends continuously to the boundary stratum. 
which implies that $\Phi_e$ and thereby $P_e$ factors through the projection $\wt\pi_1$. On the other hand, if $e$ connects a vertex $v_i\in V\setminus S$ and a vertex $v_j\in S$ we have
\begin{equation*}
    \Phi_e(x)=\frac{p_j-p_i}{|p_j-p_i|}=\frac{L_{\alpha,t}(q_0)+ru_j-p_i}{\big|L_{\alpha,t}(q_0)+ru_j-p_i\big|}\to \frac{L_{\alpha,t}(q_0)-p_i}{|L_{\alpha,t}(q_0)-p_i|} \ \text{when} \ r\to 0
\end{equation*}
and hence $\Phi_e$ factors through the projection $\wt\pi_2$. The remaining cases are similar. 
\end{proof}
%By lemma \ref{prop:splitting2} we can 
We write
$$
\lambda_1|_{\partial_{S,T}\Conf(\Gamma)}=\wt\pi_1^*\,\wt\lambda(\Gamma_{S,T}) \ \text{ and } \ \lambda_2|_{\partial_{S,T}\Conf(\Gamma)}=\wt\pi_2^*\,\lambda(\delta_{S,T}\Gamma).
$$
\begin{corollary}\label{lm:dim1}
Given $\Gamma\in \G$, $S\subset V$ and $T\subset W_\alpha$, let $\eta_{S,T}(\Gamma)$ be the number of edges connecting a vertex in $S\cup T$ with a vertex in $(V\cup W)\setminus (S\cup T)$. The contribution to equation \eqref{eq:external_sum} from the boundary stratum $\partial_{S, T}\Conf(\Gamma)$ vanishes unless $\eta_{S, T}(\Gamma)=2$.  
%The contribution to equation \eqref{eq:boundary} from the boundary strata corresponding to the vertices labeled by $S$ and $T$ coming together vanishes unless $\eta(\Gamma)=2$.    
\end{corollary}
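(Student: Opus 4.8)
The plan is to run the same dimension-counting argument used for Corollary \ref{lm:dim2}, now adapted to the external boundary stratum $\partial_{S,T}\Conf(\Gamma)\cong C_{S,T}\times\Conf(\delta_{S,T}\Gamma)$. First I would invoke Lemma \ref{prop:splitting2}, which says that on this stratum the form factors as $\lambda(\Gamma)|_{\partial_{S,T}\Conf(\Gamma)}=\wt\pi_1^*\,\wt\lambda(\Gamma_{S,T})\wedge\wt\pi_2^*\,\lambda(\delta_{S,T}\Gamma)$, the first factor being pulled back from $C_{S,T}$ and the second from $\Conf(\delta_{S,T}\Gamma)$. Since $\lambda(\Gamma)$ is a codimension-one form on $\Conf_{V,\underline W}$, its restriction to the codimension-one stratum is of top degree, so the boundary integral splits as a product of integrals over the two factors. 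By the standard degree obstruction for a wedge of pulled-back forms on a product, this product vanishes unless each factor carries a form of exactly top degree; in particular it vanishes unless $\deg\wt\lambda(\Gamma_{S,T})=\dim C_{S,T}$.

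The heart of the proof is then to evaluate both sides of this equality. For the degree I would count half-edges incident to $S\cup T$: each of the $|S|$ internal vertices is trivalent and each of the $|T|$ external vertices is univalent, giving $3|S|+|T|$ half-edges, of which $\eta_{S,T}(\Gamma)$ are consumed by edges leaving $S\cup T$. The rest pair up into $\tfrac12\big(3|S|+|T|-\eta_{S,T}(\Gamma)\big)$ internal edges of $\Gamma_{S,T}$, each contributing a two-form $P_e$, whence $\deg\wt\lambda(\Gamma_{S,T})=3|S|+|T|-\eta_{S,T}(\Gamma)$. For the dimension I would read off from $C_{S,T}=\Conf_{S,T}(L,\R^3)/G'$ that $\dim\Conf_{S,T}(L,\R^3)=3|S|+|T|$ (the $S$-points ranging over the bulk and the $T$-points over the line $L$) and that $G'$, the group of translations and scalings along $L$, is two-dimensional, so $\dim C_{S,T}=3|S|+|T|-2$. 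Equating the two expressions forces $\eta_{S,T}(\Gamma)=2$.

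To make the vanishing airtight I would treat the two inequalities separately, exactly as in Corollary \ref{lm:dim2}. If $\eta_{S,T}(\Gamma)<2$ then $\deg\wt\lambda(\Gamma_{S,T})>\dim C_{S,T}$ and $\wt\lambda(\Gamma_{S,T})$ vanishes identically on $C_{S,T}$ for dimensional reasons; if $\eta_{S,T}(\Gamma)>2$ then $\deg\wt\lambda(\Gamma_{S,T})<\dim C_{S,T}$, which forces the complementary factor $\lambda(\delta_{S,T}\Gamma)$ to have degree exceeding $\dim\Conf(\delta_{S,T}\Gamma)$ and hence to vanish. In either case the boundary integral is zero, so only $\eta_{S,T}(\Gamma)=2$ can contribute.

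I expect the main obstacle to be bookkeeping rather than conceptual: getting the half-edge count right in the mixed situation where $S\cup T$ carries both trivalent bulk vertices and univalent line vertices, and correctly identifying the dimension of the quotient $C_{S,T}$. The key point where the external case genuinely differs from the internal one is that $G'$ has dimension two, not four as for $C_S$, because once the $T$-points are constrained to lie on $L$ only translations and scalings \emph{along} $L$ preserve the configuration; this is precisely what shifts the answer from $\eta_S=4$ to $\eta_{S,T}=2$. Once these two counts are pinned down, the argument is a verbatim transcription of the internal case.
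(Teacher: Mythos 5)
Your proposal is correct and follows essentially the same route as the paper: factor $\lambda(\Gamma)$ via Lemma \ref{prop:splitting2}, compute $\deg\wt\lambda(\Gamma_{S,T})=3|S|+|T|-\eta_{S,T}(\Gamma)$ by the half-edge count and $\dim C_{S,T}=3|S|+|T|-2$ from the two-dimensional group $G'$, then get $\eta_{S,T}(\Gamma)\geq 2$ from the first factor and $\eta_{S,T}(\Gamma)\leq 2$ from the complementary factor $\lambda(\delta_{S,T}\Gamma)$. Your write-up is in fact slightly more explicit than the paper's, spelling out the product-integral degree obstruction and the identification of why $G'$ has dimension two rather than four, but the argument is the same.
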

\begin{proof}
By counting the number of edges connecting vertices in $S\cup T$ one finds 
$
\deg \wt\lambda(\Gamma_{S,T}) = 3|S|+|T|-\eta_{S,T}(\Gamma)
$. On the other hand, $\dim C_{S,T}= 3|S|+|T|-2$, and hence $\wt\lambda(\Gamma_{S,T})$ vanishes unless $\eta_{S,T}(\Gamma)\geq 2$. By a similar argument $\lambda(\delta_{S,T}\Gamma)$ vanishes on the boundary stratum unless ${\eta_{S,T}(\Gamma)\leq 2}$. 
\end{proof}
The following lemma is known as the STU relations.
\begin{lemma}\label{thm:STU} The contribution to equation \eqref{eq:external_sum} corresponding to two vertices coming together where at least one is external vanishes.
\end{lemma}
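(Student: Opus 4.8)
The plan is to establish the STU relation, which is the external-collision analogue of the IHX relation proven in Lemma \ref{thm:IHX}. By Corollary \ref{lm:dim1}, the only boundary strata $\partial_{S,T}\Conf(\Gamma)$ contributing to equation \eqref{eq:external_sum} are those with $\eta_{S,T}(\Gamma)=2$, i.e. where exactly two edges connect the collapsing cluster $S\cup T$ to the rest of the graph. Combined with the forest structure from Proposition \ref{prop:forest} and the dimension count $\dim C_{S,T}=3|S|+|T|-2$, this forces the collapsing configuration to be minimal. First I would argue that when $|S|\geq 1$ and $|T|\geq 1$ with more than two vertices colliding, the collapsed vertex would acquire valence exceeding what the forest structure permits, so that $\lambda(\delta_{S,T}\Gamma)$ or $\wt\lambda(\Gamma_{S,T})$ vanishes by the same degree argument used in Lemma \ref{thm:hidden_int}. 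This reduces the problem to the case of exactly two colliding vertices, at least one external.

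The heart of the argument is then the STU identity itself. I would fix a graph $\Gamma_0'$ obtained after collapse, having a single distinguished external vertex $w_0$ on a Wilson line $L_{\alpha,t}$ to which an internal edge attaches, with everything else held fixed. I would enumerate the graphs $\Gamma$ in $\G$ that collapse to $\Gamma_0'$ when two adjacent vertices (one external on $L_\alpha$, one internal, or two adjacent externals joined through an internal vertex) come together. The expectation is that there are exactly three such pre-images — the ``S'', ``T'', and ``U'' graphs — whose boundary strata all take the form $C_{S,T}\times \Conf(\delta_{S,T}\Gamma)$ with $C_{S,T}$ a common factor (an interval or half-line modulo scaling, here effectively a point or $S^1$-type factor depending on the blow-up). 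Using Definition \ref{def:Or} and choosing a consistent ordering of half-edges, I would track the orientation signs $\Or(\Gamma_S)$, $\Or(\Gamma_T)$, $\Or(\Gamma_U)$ and show the three contributions combine into a single integral over $\Conf(\delta_{S,T}\Gamma)$ whose Lie-algebra coefficient $c$ collects into the combination
\begin{equation*}
{f^a}_{bc}\,\tau_e - (\text{analogous terms from the two other graphs}),
\end{equation*}
which is precisely the defining relation of the bracket $[\zeta^a,\xi_b]={f^a}_{cb}\zeta^c$ acting on the Wilson line, so the coefficient vanishes by the structure-constant identity encoding the action of $\mathfrak a^*$ on $\mathfrak a$.

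The key geometric input is that the normal factor $C_{S,T}$, coming from scalings and translations \emph{along} $L$ (the group $G'$), contributes an integral $\int_{C_{S,T}}\wt\lambda(\Gamma_{S,T})$ that is common to all three graphs and nonzero, so the total reduces to summing the three $c$-factors against a single bulk integral. I expect the main obstacle to be the careful bookkeeping of orientations and signs: one must verify that the three graphs inherit compatible orientations on the shared boundary stratum $C_{S,T}\times\Conf(\delta_{S,T}\Gamma)$ so that their $c$-factors add rather than cancel against each other, and in particular that the sign discrepancies from reversing an edge orientation (governed by the anti-symmetry relation \eqref{eq:anti-sym}) align exactly with the anti-symmetry of the structure constants. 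The remaining subtlety is handling the boundary condition \eqref{eq:bc} on $L_\alpha$ correctly — specifically ensuring that the Wilson-line ordering of inserted basis elements on either side of $w_0$ produces the correct ordered product in $\mathcal U(\g)$, so that the vanishing of the coefficient genuinely follows from the Lie bracket relation rather than from an accidental cancellation.
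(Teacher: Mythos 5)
Your overall strategy is the paper's strategy: identify the three graphs $\Gamma_1,\Gamma_2,\Gamma_3$ collapsing to a single graph $\Gamma_0$ with a two-valent external vertex, track the orientation signs from Definition \ref{def:Or}, and cancel the sum of Lie-algebra factors using $[\zeta^a,\xi_b]={f^a}_{bc}\zeta^c$. (Your opening reduction to two colliding vertices is unnecessary — the lemma as stated already concerns exactly two vertices; larger clusters are Lemma \ref{thm:hidden_ext}, proved separately.) However, there is a genuine gap in your geometric mechanism. You claim that the normal factor $C_{S,T}$ is \emph{common} to all three graphs, so that $\int_{C_{S,T}}\wt\lambda(\Gamma_{S,T})$ can be pulled out as a single overall constant multiplying the three $c$-factors. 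This is false: the three strata are not geometrically alike. For the two graphs in which a pair of \emph{external} vertices collides (the two orderings $w_j,w_{j+1}$ along the line, each carrying its own edge into the bulk, not joined to each other), one has $S=\emptyset$, $|T|=2$, so $C_{S,T}$ is a \emph{point} and the cluster contains no edge, i.e.\ $\wt\lambda=1$ trivially. For the third graph, where an internal vertex $p_i$ collides with an external vertex $q_j$ along the edge joining them, one has $C_{\{v_i\},\{w_j\}}\cong S^2$ (not an $S^1$-type factor), and $\wt\lambda(\Gamma_{S,T})$ is the propagator of the collapsing edge restricted to that sphere.

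This distinction is not bookkeeping — it is the entire content of the STU relation. The contribution of $\Gamma_3$ enters with the coefficient $\int_{S^2}\omega$, and the cancellation
\begin{equation*}
\zeta^a\xi_b-\xi_b\zeta^a-{f^a}_{bc}\,\zeta^c\int_{S^2}\omega=0
\end{equation*}
holds only because the propagator is normalized so that $\omega$ integrates to one on $S^2$ (Section \ref{sec:propagator}). Your ``common factor'' framing never invokes this normalization, and an argument insensitive to it cannot be correct: had $\omega$ integrated to any other constant, the three terms would not cancel and the lemma would be false. So the missing ingredient is the explicit fiber integration over the $S^2$ blow-up sphere for the internal--external collision, matched against the trivial (point) fibers of the two external--external collisions; your orientation-sign concerns, on the other hand, are handled exactly as you anticipate, via $\Or(\Gamma_1)=-\Or(\Gamma_2)$ and the induced orientation identification $dq_j\wedge d^3(ru)=dq_j\wedge\vol_{S^2}\wedge r^2dr$ for $\Gamma_3$.
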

\begin{proof} Let $\Gamma_0$ be a graph with a single two-valent external vertex $v_0$ that has an incoming and an outgoing edge, and with all other vertices three- and one-valent. There are exactly three graphs $\Gamma_1,\Gamma_2,\Gamma_3\in \G$ that maps to $\Gamma_0$ upon collapsing two vertices. These graphs are shown in figure \ref{fig:STU}.%, where we imagine that the vertices and edges outside the encircled area are fixed for all three graphs. 
%We assume let $\Gamma_1,\Gamma_2\in G_{\mathbf{n},m}$ and $\Gamma_3\in \G_{\mathbf{n}',m+1}$ where $\mathbf n'=(n_1,n_2-1,n_3)$.
\begin{figure}[H]
\centering
\begin{tikzpicture}[vertex/.style={draw,circle, fill=black, inner sep=1.5pt}]
  \node (v1) at (-1.4,0)  {}; 
  \node (v2) at (1.4,0)  {};
  \node[vertex] (w) at (-0.6,0) {};
  \node[vertex] (v) at (0.6,0) {};
  \node (w1) at (-0.8,1.7)  {}; 
  \node (w2) at (0.8,1.7)  {};
  \node at (0,-1.2) {$\Gamma_1$};
  \draw[ultra thick,->] (v1) --  (v2);
  \draw[->] (w1)node[above]{$a$} -- (w)node[below]{$w_j$};
  \draw[<-] (w2)node[above]{$b$} -- (v)node[below]{$w_{j+1}$};
    \draw[dotted,thick] (1.1,0.4) arc (1.1:360:1.1);
\end{tikzpicture}\hspace{0.4cm}
\begin{tikzpicture}[vertex/.style={draw,circle, fill=black, inner sep=1.5pt}]
  \node (v1) at (-1.4,0)  {}; 
  \node (v2) at (1.4,0)  {};
  \node[vertex] (w) at (-0.6,0) {};
  \node[vertex] (v) at (0.6,0) {};
  \node (w1) at (-0.8,1.7)  {}; 
  \node (w2) at (0.8,1.7)  {};
  \node at (0,-1.2) {$\Gamma_2$};
  \draw[ultra thick,->] (v1) --  (v2);
  \draw[->] (w1)node[above]{$a$} -- (v)node[below]{$q_{j+1}$};
  \draw[<-] (w2)node[above]{$b$} -- (w)node[below]{$q_{j}$};
    \draw[dotted,thick] (1.1,0.4) arc (1.1:360:1.1);
\end{tikzpicture}\hspace{0.4cm}
\begin{tikzpicture}[vertex/.style={draw,circle, fill=black, inner sep=1.5pt}]
  \node (v1) at (-1.4,0)  {}; 
  \node (v2) at (1.4,0)  {};
  \node[vertex] (w) at (0,0) {};
  \node[vertex] (v) at (0,1) {};
  \node (w1) at (-0.8,1.7)  {}; 
  \node (w2) at (0.8,1.7)  {};
  \node at (0,-1.2) {$\Gamma_3$};
  \draw[ultra thick,->] (v1) --  (v2);
  \draw[<-] (w)node[below]{$q_j$} -- (v)node[left]{$p_i$};
  \draw[<-] (v) -- (w1)node[above]{$a$};
  \draw[->] (v) -- (w2)node[above]{$b$};
    \draw[dotted,thick] (1.1,0.4) arc (1.1:360:1.1);
\end{tikzpicture} 
\begin{tikzpicture}
  \node (a1) at (0,0)  {};
  \node (a2) at (0,2)  {$\longrightarrow$};
\end{tikzpicture}
\begin{tikzpicture}[vertex/.style={draw,circle, fill=black, inner sep=1.5pt}]
  \node (v1) at (-1.4,0)  {}; 
  \node (v2) at (1.4,0)  {};
  \node[vertex] (w) at (0,0) {};
  %\node[vertex] (v) at (0,1) {};
  \node (w1) at (-0.8,1.7)  {}; 
  \node (w2) at (0.8,1.7)  {};
  \node at (0,-1.2) {$\Gamma_0$};
  \draw[ultra thick,->] (v1) --  (v2);
  %\draw[<-] (w) -- (v)node[left]{$p_\gamma$};
  \draw[->] (w)node[below]{$q_0$} -- (w1)node[above]{$a$};
  \draw[<-] (w) -- (w2)node[above]{$b$};
    \draw[dotted,thick] (1.1,0.4) arc (1.1:360:1.1);
\end{tikzpicture} 
\caption{:~The STU-relation}
    \label{fig:STU}
\end{figure}
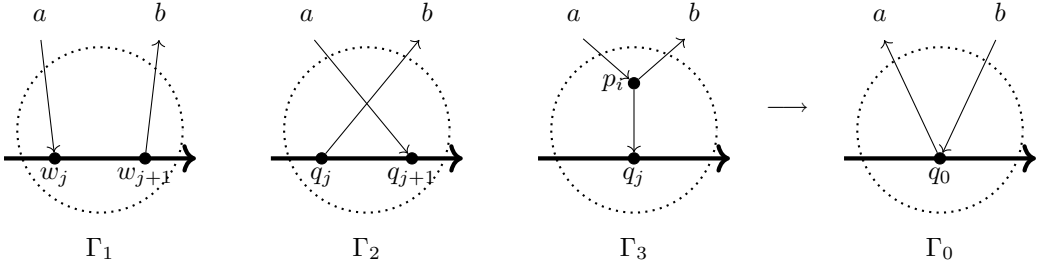
\noindent Collapsing the vertices $q_j$ and $q_{j+1}$ in $\Gamma_1$ and $\Gamma_2$ and the vertices $q_j$ and $p_i$ in $\Gamma_3$ into a single vertex $q_0$ we obtain a graph $\Gamma_0$ with a single two-valent external vertex as shown on the right-hands side of figure \ref{fig:STU}. The corresponding boundary strata are given by 
%Recall from section \ref{sec:ext_strata} that the boundary stratum of $\Conf_{\mathbf{n},m}$ coming from collapsing $q_\alpha$ and $q_{\alpha+1}$ is given by:
%$$
%\partial_{\{w_\alpha,w_\beta\},\emptyset}\Conf_{\mathbf{n},m}=\{*\}\times \Conf_{\mathbf{n}',m}
%$$
%where, as above, $\mathbf{n}'=(n_1,n_2-1,n_3)$. 
$$
\partial_{\,\emptyset,\{w_{j},w_{j+1}\}}\Conf(\Gamma_{k})=\{*\}\times\Conf(\Gamma_0)
$$
for $k=1,2$ and
$$
\partial_{\{v_i\},\{w_j\}}\Conf(\Gamma_3)=C_{\{v_i\},\{w_j\}}\times \Conf(\Gamma_0)\cong S^2\times \Conf(\Gamma_0).
$$
%By definition \ref{def:Conf_TS} it holds $C_{1,1}\cong S^2$, 
%and the restriction of $\lambda(\Gamma_3)$ to this boundary is given by
%$
%\wt\pi_1^*\omega\wedge \wt\pi_2^*(\lambda(\Gamma_0))
%$
%{\color{red}Assume wlog that the other ends of the edges are connected to vertices $p_\alpha$ and $p_\delta$.} 
We now determine the induced orientation on $\Conf(\Gamma_0)$ coming from each $\Gamma_k$, $k\in\{1,2,3\}$. By definition~\ref{def:Or} we can write
\begin{equation}\label{eq:Or_Gamma1}
\Or(\Gamma_1)=-\Or(\Gamma_2)=dq_j\wedge dq_{j+1}\wedge X
\end{equation}
and
\begin{equation}\label{eq:Or_Gamma2}
\Or(\Gamma_3)=(dq_j\wedge dp_i^1)\wedge dp_i^2\wedge dp_i^3\wedge X\,,
\end{equation}
where $X$ 
%the contribution from vertices outside of the encircled area, and 
is the same for all $\Gamma_k$, $k\in\{1,2,3\}$. Inserting $q_{j+1}=q_j+r$ for some $r>0$ into equation \eqref{eq:Or_Gamma1} we get
$$
\Or(\Gamma_1)= -\Or(\Gamma_2)= - dq_j\wedge dr \wedge X\,.
$$
Similarly, we can write $p_i=L_{t,\alpha}(q_j)+r u$ for some $r>0$ and unit vector $u\in \R^3$, and inserting this into \eqref{eq:Or_Gamma2} we get
\begin{align*}
\Or(\Gamma_3)= dq_j\wedge d^3(r u)\wedge  X=dq_j\wedge \vol_{S^2}\wedge\,r^2d r\wedge X\,.
\end{align*}
In each case, the vector $r$ is orthogonal to the boundary and pointing into the configuration space. Thus, fixing an orientation
$
\Or(\Gamma_0)=dq_j\wedge dX
$
on 
$
\Conf(\Gamma_0),
$
the contribution to equation \eqref{eq:external_sum} from the three boundary integrals takes the form:
\begin{equation*}
\int_{\Conf(\Gamma_0)}\lambda(\Gamma_0)\,c(\Gamma_0),
\end{equation*}
where $c(\Gamma_0)$ is the Lie algebra factor obtained from applying the usual Feynman rules to $\Gamma_0$ at each three- and one-valent vertex, and assigning to the two-valent vertex $q_0$ the factor:
\begin{align*}
{\zeta^a\xi_b}-{\xi_b\zeta^a}-{f^a}_{bc} \zeta^c\,\int_{S^2}\omega.
\end{align*}
Recall that from the definition in section \ref{sec:propagator} that $\omega$ integrates to one on $S^2$, and hence the above factor vanish by the Lie algebra relations:
\begin{equation*}
    [\zeta^a,\xi_b]={f^a}_{bc}\zeta^c\,.
\end{equation*}
Similar arguments would apply had we started from a graph $\Gamma_0$ with two incoming or two outgoing edges. Hence the theorem follows.
\end{proof}
\begin{lemma}\label{thm:hidden_ext}
Let $S\subset V$ and $T\subset W_\alpha$ such that $T\neq\emptyset$ and $|S\cup T|>2$. Then contribution to equation \eqref{eq:external_sum} from the boundary stratum where the vertices in $S\cup T$ come together vanishes. 
\end{lemma}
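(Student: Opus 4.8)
The plan is to follow the template of the internal case (Lemma~\ref{thm:hidden_int}), but the argument cannot be a direct mirror. First I would invoke Corollary~\ref{lm:dim1}: the stratum $\partial_{S,T}\Conf(\Gamma)$ contributes to \eqref{eq:external_sum} only when $\eta_{S,T}(\Gamma)=2$. Writing $c$ for the number of connected components of the spanned subgraph $\Gamma_{S,T}$, the same half-edge count used in Corollary~\ref{lm:dim1} gives $\eta_{S,T}(\Gamma)=|S|-|T|+2c$. The crucial difference with the internal case is that, whereas collapsing more than two internal vertices in a forest always forces valence strictly above the critical value, here $\eta_{S,T}$ can equal the critical value $2$ even when $|S\cup T|>2$ (for instance $c=1$, $|S|=|T|=2$). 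Thus dimension counting alone is \emph{not} sufficient, and a genuine geometric input is required to dispose of the remaining collision types.

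The second step is a classification using Proposition~\ref{prop:forest}. Since every contributing $\Gamma$ is a forest whose edges point strictly upward along $I$, and since internal vertices and external leaves have in-degree one while external roots have in-degree zero, no vertex can be a local height-maximum along any path; hence every path from a root is monotone increasing and every path from a leaf is monotone decreasing. Because all vertices of $T$ share the common height of the horizontal line $L_{\alpha,t}$, this excludes any path joining a root-type to a leaf-type external vertex, and forbids two root-type external vertices in one component. The surviving faces are therefore ``valley'' configurations in which the collapsing external vertices are leaves descending onto a shared internal vertex lying strictly below $L_{\alpha,t}$, so that the collapsed vertex $w_0$ is exactly the two-valent external vertex appearing in the STU picture of Lemma~\ref{thm:STU}.

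The final step is to show that each surviving valley face integrates to zero. The factorisation $\partial_{S,T}\Conf(\Gamma)=C_{S,T}\times\Conf(\delta_{S,T}\Gamma)$ together with Lemma~\ref{prop:splitting2} writes the boundary integral as a product of an integral of $\wt\lambda(\Gamma_{S,T})$ over $C_{S,T}$ with an integral of $\lambda(\delta_{S,T}\Gamma)$ over $\Conf(\delta_{S,T}\Gamma)$, so it suffices to kill the ``blob factor'' over $C_{S,T}$. For this I would use a Kontsevich-type involution: interchanging the two outgoing branches at the valley vertex (equivalently, reversing the order of the corresponding external leaves along $L_{\alpha,t}$) reverses the orientation of $C_{S,T}$, while $\wt\lambda(\Gamma_{S,T})$ is invariant because each propagator $P_e$ is an even-degree form and the structure constants are totally antisymmetric, so that $\Or(\Gamma)$ absorbs the compensating sign exactly as in the proof of Lemma~\ref{thm:IHX}. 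When the two branches are not isomorphic, the same involution instead pairs two distinct graphs with equal integrand and opposite induced orientation, and their contributions cancel inside the sum \eqref{eq:external_sum}.

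The hard part will be precisely this last step: making the involution watertight for a general valley rather than the minimal star with $|S|=|T|=2$. I must verify that the branch-swap descends to an orientation-reversing diffeomorphism of $C_{S,T}$ after quotienting by the line-scaling group $G'$, that it is compatible with the increasing ordering built into $\Conf_T(\R)$, and that the color factors $c(\Gamma_{S,T})$ transform with the correct sign so the cancellation is exact and does not merely reduce to the already-established STU and IHX relations. Confirming that the height/valley classification of the second step is genuinely exhaustive, so that no surviving collision type escapes the involution, is the portion of the argument that will demand the most care.
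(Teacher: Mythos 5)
Your proposal diverges from the paper's proof, and both of the places where it diverges contain genuine gaps. The paper's argument, after invoking Corollary \ref{lm:dim1} exactly as you do, splits into three cases according to the orientations of the two edges leaving the stratum: both outgoing (impossible for $|S\cup T|>2$ by the forest structure), both incoming, and one of each. The ``both incoming'' case is \emph{not} excluded by your height-monotonicity classification: for instance two disjoint valley components, each consisting of leaves of $T$ fed by an internal vertex below the line, each receiving one edge from outside, satisfy $\eta_{S,T}(\Gamma)=2$ and all of your monotonicity constraints. The paper kills these (and the case where a leaving edge attaches to an external vertex) by a \emph{refined, component-wise} dimension count: restricting $\wt\lambda(\Gamma_{S,T})$ to the sub-blob $\Gamma_v$ attached to one incoming edge, one finds its degree exceeds $\dim C_{S',T'}$ by exactly one. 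Having declared at the outset that dimension counting is insufficient, you never return to it, so these faces are simply not handled in your proposal.

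The second gap is in your involution. For the surviving case --- both leaving edges attached to distinct internal vertices $v\neq v'$ --- the paper uses Kontsevich's point reflection $y\mapsto x+z-y$ of the position of the single vertex $v$ carrying the outgoing edge, which is orientation-reversing on $\R^3$, fixes $x$ and $z$, swaps the two propagators $P(x,y)\wedge P(y,z)$, and therefore kills each boundary integral \emph{individually}. Your branch-swap is a different map and it does not work: swapping two branches that end on external vertices is not a self-map of the stratum, because $\Conf_T(\R)$ carries a fixed increasing order of the points of $T$ along $L_{\alpha,t}$. Recast as a pairing between the two distinct graphs $\Gamma^{(1)},\Gamma^{(2)}$ obtained by exchanging which leaf attaches to which branch, the sign you hope for evaporates: the coordinate swap pulls back $\Or(\Gamma^{(2)})$ to $+\Or(\Gamma^{(1)})$ while mapping the ordered chamber to the reversed-order chamber, so the two contributions \emph{add up} to one integral over unordered configurations rather than cancelling. (This is exactly the mechanism by which the STU faces of Lemma \ref{thm:STU} produce a commutator $\zeta^a\xi_b-\xi_b\zeta^a$ instead of zero; the fact that the insertions here are commuting $\zeta$'s makes the combined color factor nonzero, not the integral.) The resulting unordered integral still requires the reflection trick to vanish, so the point-reflection at $v$ is not an optional refinement of your involution --- it is the missing idea, and without it the case $\eta_{S,T}(\Gamma)=2$ with two distinct internal attachment vertices is not disposed of.
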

\begin{proof}
%Let $\Gamma_0$ be the graph obtained by collapsing the vertices in $S\cup T$ to a single external vertex $w_0$. Then it corresponding boundary stratum has the form
%$$
%\partial_{S,T}\Conf(\Gamma)=C_{S,T}\times \Conf(\Gamma_0).
%$$
%As in ... we write $\wt\lambda_1\in \Omega^*(C_{S,T})$ for the contribution to $\wt \lambda(\Gamma)$ coming from the edges connecting vertices in ...}. 
Recall from corollary \ref{lm:dim1} that we only get a contribution to equation \eqref{eq:external_sum} when $\Gamma$ has exactly two edges ``leaving the stratum'', that is, connecting a vertex in $S\cup T$ with a vertex not in $S\cup T$. We consider the following three cases separately: 
\begin{enumerate}[(a)]
\item Both of the edges leaving the stratum have orientations pointing out of $S\cup T$:
    \begin{center}
        \begin{tikzpicture}[vertex/.style={draw,circle, fill=black, inner sep=1.3pt}]
  \node (a1) at (-1.5,0)  {}; 
  \node (a2) at (1.5,0)  {};
  \node (v1) at (0.5,0.3) {};
  \node (v2) at (0.7,1.3) {};
  \node (w1) at (-0.5,0.3) {};
  \node (w2) at (-0.7,1.3) {};
  \draw[ultra thick,->] (a1) --  (a2);
  \draw[->] (v1) -- (v2);
  \draw[->] (w1) -- (w2);
    \draw[dotted, thick] (1,0) arc (1:183:1);
\end{tikzpicture} 
    \end{center}
    %\item Both edges connecting a vertex in $S\cup T$ with a vertex in $V(\Gamma)\setminus S\cup T$ are outgoing
    \item Both of the edges leaving the stratum have orientations pointing into $S\cup T$:
    \begin{center}
        \begin{tikzpicture}[vertex/.style={draw,circle, fill=black, inner sep=1.3pt}]
  \node (a1) at (-1.5,0)  {}; 
  \node (a2) at (1.5,0)  {};
  \node (v1) at (0.5,-0.3) {};
  \node (v2) at (0.7,-1.3) {};
  \node (w1) at (-0.5,-0.3) {};
  \node (w2) at (-0.7,-1.3) {};
  \draw[ultra thick,->] (a1) --  (a2);
  \draw[<-] (v1) -- (v2);
  \draw[<-] (w1) -- (w2);
    \draw[dotted, thick] (1,0) arc (1:-183:1);
\end{tikzpicture} 
    \end{center}
    \item One of the edges leaving the stratum has orientation point into $S\cup T$ and the other edge has orientation pointing out of $S\cup T$:
    \begin{center}
        \begin{tikzpicture}[vertex/.style={draw,circle, fill=black, inner sep=1.3pt}]
  \node (a1) at (-1.5,0)  {}; 
  \node (a2) at (1.5,0)  {};
  \node (v1) at (0,0.4) {};
  \node (v2) at (0,1.4) {};
  \node (w1) at (0,-0.4) {};
  \node (w2) at (0,-1.4) {};
  \draw[ultra thick,->] (a1) --  (a2);
  \draw[->] (v1) -- (v2);
  \draw[<-] (w1) -- (w2);
    \draw[dotted, thick] (0.9,0) arc (0.9:360:0.9);
\end{tikzpicture} 
    \end{center}
\end{enumerate}
%\textit{Case (a):} Since the only graphs are trees and there is no oriented path connecting a vertex on a Wilson line to a vertex on the same Wilson line.  we only have one type of internal vertex there must be two vertices coming together.
\paragraph{Case (a):} Since, by proposition \ref{prop:forest}, all contributing graphs are trees, this situation can only occur when $|S\cup T|=2$. 

\paragraph{Case (b):} We can assume that at least one of the edges leaving the stratum is connected to an internal vertex $v\in S$ since otherwise $|S|=\emptyset$ and $|T|=2$. Let $\Gamma_v$ be the disconnected sub-graph of $\Gamma_{S, T}$ spanned by the vertices $S'\cup T'$ connected by a path to $v$ %and let $\Gamma_w$ be the sub-graph spanned by vertices in $S\cup T$ connected by a path to $w$. 
as illustrated in figure \ref{fig:gamma_v}.
%and let $S'\subset S$ and $T'\subset T$ be the vertices in $\Gamma_v$. 
We write $$\wt\lambda(\Gamma_{S,T})=\wt\lambda_1\wedge \wt\lambda_2$$ where $\wt\lambda_1$ is constructed from edges in $\Gamma_v$ and $\wt\lambda_2$ is the contribution from the remaining edges in $\Gamma_{S,T}$. 
%can be written as a product:
%$$
%\wt\lambda_1=\wt\lambda_v\wedge \wt\lambda_w.
%$$ 
It then holds that $\wt \lambda_1$ factors through the projection
$$p: C_{S,T}\to C_{S',T'}$$ 
which forgets about the vertices not in $\Gamma_v$. By counting the number of edges and vertices in $\Gamma_v$ one finds that $\wt\lambda_1$ vanishes by the same dimensional arguments as used in the proof of corollary~\ref{lm:dim1}. 
\begin{figure}[H]
    \centering
\begin{tikzpicture}[vertex/.style={draw,circle, fill=black, inner sep=1.5pt}]
  \node (a1) at (-3,0)  {}; 
  \node (a2) at (3,0)  {};
  \draw[ultra thick,->] (a1) --  (a2);
  
  \node[vertex] (w) at (-0.5,0) {};
  \node[vertex] (v) at (0,-1) {};
  \node[vertex] (w1) at (-0.7,-1.7)  {}; 
  \node[vertex] (w2) at (.7,0)  {};
  \node[vertex] (w3) at (-1.5,0) {};
  \node[vertex] (v1) at (0.2,0) {};
  \node[vertex] (v2) at (1.5,0) {};
  \node[vertex] (v3) at (1,-1.4) {};
  \node (v4) at (1.2,-2.5) {};
  \node (w4) at (-1.2,-2.5) {};

  \draw[->] (v4) -- (v3)node[right]{};
  \draw[->] (v3) -- (v1);
  \draw[->] (v3) -- (v2);
  \draw[->] (w4) -- (w1);
  \draw[thick,red,<-] (w) -- (v);
  \draw[thick,red,->] (w1) -- (w3);
  \draw[thick,red,<-] (v) -- (w1)node[black,right]{$v$};
  \draw[thick,red,->] (v) -- (w2);
    \draw[dotted,thick] (2.3,0) arc (0:-180:2.3);
\end{tikzpicture} 
\caption{: Boundary stratum with two incoming edges. The red edges form the sub-graph $\Gamma_v$.}
\label{fig:gamma_v}
\end{figure}
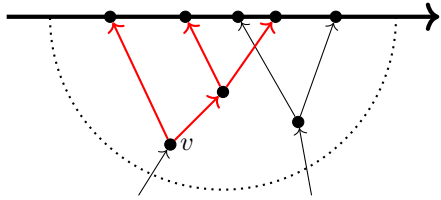    
\paragraph{Case (c):} We will further divide case (c) into two subcases:

\begin{enumerate}
    \item[(c1)] Either one of the edges leaving the stratum is connected an external vertex $w\in T$ or both edges leaving the stratum are connected to the same internal vertex $v\in S$. 

    \item[(c2)] Both edges leaving the stratum are connected to internal vertices $v,v'\in S$ and $v\neq v'$.

\end{enumerate}

\paragraph{Case (c1):} In this case $\wt \lambda(\Gamma_{S,T})$ vanishes on dimensional grounds by arguments completely analogous to those for case~(b). 

\paragraph{Case (c2):} Assume that the outgoing edge is connected $v\in S$. By assumption $\Gamma$ has two edges connecting $v$ to two different vertices in $S\cup T$. The situation is illustrated below where we have assigned coordinates $x,y$ and $z$ to the three vertices. Notice that $x$ and $z$ may be coordinates along the Wilson line.
\begin{center}
        \begin{tikzpicture}[vertex/.style={draw,circle, fill=black, inner sep=1.5pt}]
  \node[vertex] (v0) at (0,0)  {}; 
  \node[vertex] (v1) at (0.9,-0.6)  {};
  \node[vertex] (v2) at (-0.9,-0.6) {};
  \node (v3) at (0,1.2) {};
  \draw[<-] (v0) --  (v1)node[right,black]{$x$};  
  \draw[->] (v0)node[below]{$y$} --  (v2)node[left,black]{$z$} ;
  \draw[->] (v0) --  (v3);
  \draw[dotted,thick] (2.25,-0.28) arc (45:135:3.2);

  \node at (0.2,0.15) {$v$};
  %\draw (v2) -- (-1.1,-0.9);
  %\draw (v2) -- (-0.7,-0.9);
  %\draw (v1) -- (1.1,-0.9);
  %\draw (v1) -- (0.7,-0.9);
\end{tikzpicture} 
\end{center}
We can now use a well known coordinate change originally due to Kontsevich \cite{K} to show the vanishing of the integral 
\begin{equation}\label{eq:Kont}
    \int_{C_{S,T}}\wt\lambda(\Gamma_{S,T}).
\end{equation}
In fact, integrating over $y$ in equation \eqref{eq:Kont} while keeping all other vertices fixed produces the integral
    \begin{equation}\label{eq:Kontsevich_int}
    \int_{y\in \R^3}\phi^*\omega(x,y)\wedge \phi^*\omega(y,z).
     \end{equation}
    We now make the following change of coordinates: $y = x+z-y'$.
    \begin{equation}
    \begin{aligned}
    \int_{y}\phi^*\omega(x,y)\wedge \phi^*\omega(y,z) &= - \int_{y'}\phi^*\omega(x,x+z-y')\wedge \phi^*\omega(x+z-y',z)\\&=-\int_{y'}\phi^*\omega(y',z)\wedge \phi^*\omega(x,y').
    \end{aligned}
    \end{equation}
    The minus since comes from this coordinate change being orientation reversing and the last equality uses translation invariance of $\phi$. This implies that the integral in equation \eqref{eq:Kontsevich_int} equals minus itself and hence must be zero.
\end{proof}
Lemma \ref{thm:STU} and \ref{thm:hidden_ext} proves theorem \ref{thm:ext}, and together with theorem \ref{thm:int} this completes the proof of theorem \ref{thm:1}. By lemma \ref{lm:infinity}, this implies that the expectation value $\mathcal R$ of a pair of crossing Wilson lines is a solution to the Yang-Baxter equation. In the following section we argue that $\mathcal R$ is in fact an $R$-matrix in the sence of section \ref{sec:YBE}. In particular, we show that $\mathcal R$ is independent of the angle of crossing between the Wilson lines and that it satisfies a so called unitarity relation, implying that it is invertible.

\section{Angle Independence and Unitarity}\label{sec:conclusion}
%Theorem \ref{thm:int} and \ref{thm:ext} prove theorem \ref{thm:1}, namely that $\braket{L_1}-\braket{L_0}=0$. By lemma \ref{lm:infinity}, this implies that the expectation value of a pair of crossing Wilson lines is a solution to the Yang-Baxter equation. However, as mention in remark \ref{rmk:R} of section \ref{sec:W-lines}, it is not immediately clear that the solution we obtain does not depend on a parameter corresponding to the angle of crossing between the lines. 
\begin{proposition}\label{prop:angle}
    Let $L$ and $L'$ be two (non-parallel) lines in $\R^2\times I$ supported at different points in $I$. Then expectation value $\mathcal R=\braket{LL'}$ is independent of the angle of crossing between the lines.
\end{proposition}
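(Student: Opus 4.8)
The plan is to treat the crossing angle as a deformation parameter and run exactly the Stokes-theorem argument that proved Theorem \ref{thm:1}, with the angle $\theta$ now playing the role of the parameter $t$. Fix two non-parallel crossing configurations with angles $\theta_0$ and $\theta_1$. First I would construct a smooth one-parameter family of proper embeddings $L_\theta, L'_\theta$, $\theta\in[\theta_0,\theta_1]$, holding both lines at their fixed heights in $I$ and rotating one of them continuously in the $\R^2$-plane so as to sweep out every intermediate crossing angle. Exactly as in Section \ref{sec:conf}, this family defines a configuration space $\Conf_{V,\underline W}$ fibered over $[\theta_0,\theta_1]$, together with the forms $\lambda(\Gamma)$ and the colour factors $c(\Gamma)$; writing $\Conf^{\theta}(\Gamma)$ for the fiber over $\theta$, the amplitude $\mathcal M_\theta(\Gamma)$ and the expectation value $\braket{L_\theta}=\braket{L_\theta L'_\theta}$ are defined verbatim.

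Next I would observe that nothing in the finiteness and compactification results of Section \ref{sec:finiteness} uses the specific directions of the Wilson lines: the propagator $\omega$ is supported near the north pole, so Proposition \ref{prop:forest} still forces the only contributing graphs to be upward-pointing forests, $\lambda(\Gamma)$ remains a closed, compactly supported form, and it still vanishes on the boundary where an internal vertex reaches $\R^2\times\{\pm1\}$. Applying Stokes' theorem precisely as in Proposition \ref{prop:Stokes} then yields
\begin{equation*}
\braket{L_{\theta_1}}-\braket{L_{\theta_0}}=\sum_{\Gamma}\hbar^{\ord(\Gamma)}\int_{\partial\Conf(\Gamma)}\lambda(\Gamma)\,c(\Gamma),
\end{equation*}
where $\partial\Conf(\Gamma)$ is the collision boundary, decomposed into internal strata $\partial_S\Conf(\Gamma)$ and external strata $\partial_{S,T}\Conf(\Gamma)$.

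It then remains to show that every term on the right vanishes, and here I would simply invoke the vanishing theorems already established: the internal-collision contributions vanish by Theorem \ref{thm:int} (via the IHX/Jacobi argument of Lemma \ref{thm:IHX} and the hidden-face dimension count of Lemma \ref{thm:hidden_int}), and the external-collision contributions vanish by Theorem \ref{thm:ext} (via the STU relation of Lemma \ref{thm:STU} and Lemma \ref{thm:hidden_ext}). The key point justifying this transfer is that all of these are \emph{local} statements about a bounded number of colliding vertices: the IHX/STU cancellations depend only on the Lie-algebra relations \eqref{eq:brackets}, and the Kontsevich coordinate change in Lemma \ref{thm:hidden_ext} uses only the translation and scaling invariance of $\phi$ together with the local direction $\mathbf e_\alpha$ of a line. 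None of these inputs sees the global crossing angle or the number of Wilson lines, so the proofs apply unchanged.

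The main obstacle I anticipate is not the collision strata but the \emph{non-compactness} of the configuration space coming from the Wilson lines extending to infinity, which could in principle produce extra boundary contributions as vertices escape to infinity along a line. The resolution is again the localization of $\omega$: since every admissible graph connects at least two lines through upward-pointing edges confined to a neighbourhood of the crossing (the argument preceding Lemma \ref{lm:infinity}), the integrand $\lambda(\Gamma)$ has compact support in a fixed neighbourhood of the crossing, uniformly in $\theta$, so the ends contribute nothing. One should also check that the interpolating family stays in the non-parallel regime throughout $[\theta_0,\theta_1]$, so that the crossing point remains well-defined; this is automatic once the endpoints are non-parallel and the path is chosen to avoid the parallel configuration. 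Granting these points, the displayed sum vanishes and $\braket{L_{\theta_1}}=\braket{L_{\theta_0}}$, establishing angle independence.
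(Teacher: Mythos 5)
Your overall strategy---realising the change of angle as a one-parameter family, applying Stokes' theorem as in Proposition \ref{prop:Stokes}, and killing the collision boundary terms---is the same as the paper's, but there is a genuine gap at the decisive step: the claim that the external-collision vanishing results apply ``unchanged''. When $L$ is rotated, its unit tangent vector $\mathbf e_\alpha$ depends on the deformation parameter $\theta$; by contrast, in Theorem \ref{thm:1} the middle line is \emph{translated}, and section \ref{sec:ext_strata} explicitly flags that there $\mathbf e_\alpha$ is independent of $t$. This matters because Corollary \ref{lm:dim1} rests on Lemma \ref{prop:splitting2}: on a collision stratum along the line, the blown-up edge directions are $(a_j\mathbf e_\alpha-u_i)/|a_j\mathbf e_\alpha-u_i|$, so once $\mathbf e_\alpha=\mathbf e_\alpha(\theta)$ the form $\wt\lambda(\Gamma_{S,T})$ no longer factors through $C_{S,T}$ but only through $C_{S,T}\times[\theta_0,\theta_1]$, which is one dimension larger. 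The degree count then fails to exclude strata with $\eta_{S,T}(\Gamma)<2$; only the other half of the corollary survives, namely that the stratum vanishes unless $\eta_{S,T}(\Gamma)\leq 2$ (this half comes from $\lambda(\delta_{S,T}\Gamma)$, which lives on $\Conf(\delta_{S,T}\Gamma)$ where the deformation parameter is already a coordinate). Since both Lemma \ref{thm:STU} and Lemma \ref{thm:hidden_ext} begin by invoking Corollary \ref{lm:dim1} to reduce to strata with exactly two edges leaving, your assertion that none of the inputs ``sees the global crossing angle'' is precisely where the argument breaks: $\mathbf e_\alpha$ is such an input, and during the interpolation it does see the angle. (Internal collisions are genuinely unaffected, since the blow-up coordinates of $C_S$ involve no line direction, so your appeal to Theorem \ref{thm:int} is fine; and the obstacle you single out, vertices escaping to infinity, is not the real issue---it is handled exactly as in Theorem \ref{thm:1} by the compact support of $\lambda(\Gamma)$.)

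The missing piece, which is the actual content of the paper's proof, is to recover the inequality $\eta_{S,T}(\Gamma)\geq 2$ combinatorially rather than by dimension counting: by Proposition \ref{prop:forest} the only contributing graphs are forests whose trees have their root on one of the two lines and their leaves on the other, and for such graphs every collision subset $S\cup T$ along a single line is connected to the rest of the graph by at least two edges. Combined with the surviving weak form of Corollary \ref{lm:dim1} this again restricts attention to the $\eta_{S,T}(\Gamma)=2$ strata, after which the STU and hidden-face arguments do carry through. So your proof is repairable---and note that the repair genuinely uses the two-line structure of the present setting---but as written it asserts that nothing needs to change at exactly the one point where the proof of Theorem \ref{thm:1} does not simply transfer.
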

\begin{figure}[H]
    \centering
\begin{tikzpicture}[vertex/.style={draw,circle, fill=black, inner sep=1.2pt}]

\node (b1) at (-1.5,0)  {}; 
  \node (b2) at (-0.1,0)  {};
  \node (b3) at (0.1,0)  {};
  \node (b4) at (1.5,0)  {};
  \draw[very thick] (b1)node[left]{$L$} -- (b2);
  \draw[very thick,->] (b3) -- (b4);
  
  \node (c1) at (0,1.5)  {}; 
  \node (c2) at (0,-1.5)  {};
  \draw[very thick,->] (c1)node[above]{$L'$} --  (c2);
  \draw[->] (0.8,0) arc (0.8:-80:0.8);
  \node at (0.4,-0.4) {$\theta$};
\end{tikzpicture}
    \caption{}
    \label{fig:angle}
\end{figure}
\begin{proof}
Consider changing the angle $\theta$ at the crossing in figure \ref{fig:angle} by keeping $L'$ fixed while rotating $L$. We can apply the same vanishing arguments as in section \ref{sec:vanishing} to check that the expectation value is unchanged under this operation. Notice that in this case the tangent vector to $L$ dependents on $\theta$. 
%This means that $\lambda_1$ does not factor through the map $\wt\pi_1$ in lemma \ref{prop:splitting2} for points coming together on the bottom line.
We therefore get the following weaker version of corollary \ref{lm:dim1}: Let $\Gamma\in \G$ and let $S$ be a subset of internal vertices and $T$ a subset of external vertices on $L$. It then holds that $\lambda(\Gamma)$ vanishes on $\partial_{S,T}\Conf(\Gamma)$ unless $\eta_{S,T}(\Gamma)\leq 2$. On the other hand, since by proposition \ref{prop:forest} the only contributing Feynman graphs in are forests with roots on $L$ and leafs on $L'$, it holds that $\eta_{S,T}(\Gamma)\geq 2$ for any choice of $\Gamma$, and hence the vanishing arguments carry through regardless. 
\end{proof}
\begin{proposition}
    The element $\mathcal R$ is invertible, that is, it satisfies the relation shown in figure~\ref{fig:unitarity}.
\end{proposition}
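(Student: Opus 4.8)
The plan is to prove invertibility by the same Stokes-theorem strategy used for Theorem~\ref{thm:1}, applied now to a Reidemeister-II homotopy that merges two opposite crossings. Concretely, I would introduce a smooth one-parameter family of proper embeddings $\{M_t\}_{t\in[0,1]}$ of two Wilson lines $L,L'$ supported at two different heights in $I$, in the spirit of Definition~\ref{def:W-lines}. At $t=0$ the two lines cross twice in the $\R^2$-projection, the two crossings being of opposite orientation and held far apart; at $t=1$ the bend is pulled out so that $L$ and $L'$ become parallel and horizontally displaced, hence do not cross at all. The associated expectation values $\braket{M_t}\in\mathcal U(\g)^{\otimes 2}$ are then exactly the two sides of figure~\ref{fig:unitarity}.

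First I would invoke Proposition~\ref{prop:Stokes}, which gives $\braket{M_1}-\braket{M_0}=\sum_\Gamma \hbar^{\ord(\Gamma)}\int_{\partial\Conf(\Gamma)}\lambda(\Gamma)\,c(\Gamma)$. The point is that every boundary integral on the right vanishes by the results of Section~\ref{sec:vanishing}: Theorem~\ref{thm:int} kills the strata from internal collisions (via the IHX/Jacobi identity together with the dimensional and hidden-face arguments) and Theorem~\ref{thm:ext} kills the strata from external collisions (via the STU relation and Kontsevich's coordinate change). These vanishing statements are local assertions about collisions of Feynman-graph vertices and about the support of the propagator, so they are insensitive to the global shape of the family and apply verbatim to $\{M_t\}$. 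Hence $\braket{M_0}=\braket{M_1}$.

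Next I would evaluate the two endpoints. At $t=1$ the lines are parallel and displaced, so no edge of any admissible graph can point from $L$ straight up to $L'$; since by Proposition~\ref{prop:forest} every contributing graph is a forest whose interactions are localized at crossings, there are no crossings and $\braket{M_1}=1\otimes1$. At $t=0$ the two crossings are far apart, so $\braket{M_0}$ factorizes into the product of the two single-crossing amplitudes. Using the angle-independence of Proposition~\ref{prop:angle} together with the reflection symmetry of $\omega$ relating a crossing to its mirror image, the first crossing contributes $\mathcal R$ and the second contributes the reversed-crossing amplitude $\mathcal R'$, which the reflection identifies with $\mathcal R$ read with its two tensor legs interchanged. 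Combining with the previous paragraph yields a relation of the form $\mathcal R\,\mathcal R'=1\otimes1$, which is precisely the unitarity relation of figure~\ref{fig:unitarity} and exhibits $\mathcal R'$ as an inverse of $\mathcal R$ (the opposite-sided identity following from the mirror homotopy).

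The main obstacle I anticipate is the critical moment $t_*$ of the homotopy at which the two crossings merge and annihilate, where $L$ and $L'$ become tangent in the $\R^2$-projection. I would need to confirm that this tangency produces no boundary stratum beyond those classified in Section~\ref{sec:finiteness}: because $L$ and $L'$ lie at distinct heights in $I$ they never meet in $\R^2\times I$, so no vertices are forced to collide and the family stays inside the (partially compactified) configuration space throughout. The remaining delicate points are bookkeeping rather than conceptual, namely checking that the induced orientations at the two endpoints carry the correct signs and that the reflection genuinely identifies the reversed crossing with $\mathcal R$ with interchanged legs rather than with some merely abstract inverse; these I would settle by the same orientation analysis as in Definition~\ref{def:Or} and the proof of Lemma~\ref{thm:STU}.
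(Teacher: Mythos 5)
Your proposal is correct and follows essentially the same route as the paper: the paper also proves invertibility by a Reidemeister-II--type homotopy (keeping one line fixed and continuously translating the other until the two opposite crossings disappear), invoking the same Stokes'-theorem and vanishing-theorem machinery already used for the angle-independence of $\mathcal R$ in Proposition~\ref{prop:angle}, with the endpoints evaluating to $R\,\wt R$ and the trivial element $1\otimes 1$. The only difference is that the paper stops at the relation of figure~\ref{fig:unitarity} and does not attempt your additional reflection-symmetry identification of the reversed-crossing amplitude with $\mathcal R$ with interchanged tensor legs, which is not needed for invertibility (and would require a symmetry of $\omega$ that the paper never assumes).
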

\begin{figure}[H]
    \centering
\begin{tikzpicture}
  \draw[very thick] (0,0) arc (-90:90:1);
  \draw[very thick] (-0.4,2) -- (-1.3,2);
  \draw[very thick, ->] (-0.4,0) -- (-1.3,0);
  
  %\draw[very thick] (-0.2,3) -- (-0.2,2.2);
  %\draw[very thick] (-0.2,1.8) -- (-0.2,0.2);
  \draw[very thick, ->] (-0.2,3) -- (-0.2,-1);

  \node at (-0.5,2.3) {$R$};
  \node at (-0.5,-0.3) {$\wt R$};

  \node at (-0.2,-1.7) {(a)};
\end{tikzpicture} \hspace{4cm}
\begin{tikzpicture}

  \draw[very thick] (0,0) arc (-90:90:1);
  \draw[very thick] (0,2) -- (-1.3,2);
  \draw[very thick, ->] (0,0) -- (-1.3,0);
  
  \draw[very thick,->] (1.6,3) -- (1.6,-1);
  \node at (0.2,-1.7) {(b)};
\end{tikzpicture}
    \caption{}
    \label{fig:unitarity}
\end{figure}

\begin{proof}
We here use the exact same arguments as for the angle independence of $\mathcal R$ in proposition \ref{prop:angle}.
In fact, if we 
%consider the pair of Wilson lines in figure \ref{fig:unitarity} (a) with the corresponding expectation value given by the product $\mathcal R\wt {\mathcal R}$. 
start from the diagram in figure \ref{fig:unitarity} (a) and keep the top line fixed while continuously moving the bottom line to the left we obtain diagram in figure \ref{fig:unitarity} (b). By the same argument as above, the expectation value is invariant under this operation.
\end{proof}
\section{Conclusion}
We have proved that the expectation value $\mathcal R=\braket{LL'}$ of a pair of crossing Wilson lines is an $R$-matrix. In \cite{AK} Kaufman and the present author showed that the leading order deformation of the co-product in $\mathcal U_\hbar(\g)$ can be realised from the operation of merging two parallel Wilson lines. As in \cite{Aamand}, computations are here carried out in the setting of Chern-Simons theory for a semi-simple Lie algebra extended by an extra copy of the Cartan subalgebra. The arguments however translate directly into the present context. Together these results give a Wilson line realisation of the co-product and $R$-matrix in the quasi-triangular Hopf algebra $\mathcal U_\hbar (\g)$, thus supporting the claim that the category of Wilson line operators is equivalent to the category of representations of $\mathcal U_\hbar(\g)$ as a braided tensor category. 

A final remark worth noting: As mentioned in the introduction, the theory we have studied is equivalent to a topologically twisted $3$d $\mathcal N=4$ gauge theory. Moreover, if we take $\g=\mathfrak a\oplus \mathfrak a^*$ to be a Lie super-algebra this would also cover Chern-Simons theory as a $3$d $\mathcal N=4$ gauge theory with matter. We have here only considered the case when $\g$ is a classical Lie algebra but nothing in the arguments should change significantly if one instead considers the super-algebra case.

\section*{Acknowledgements}
I am grateful to Nathalie Wahl, Kevin Costello and Dani Kaufman for helpful discussions.\\
The author was supported by the European Research Council (ERC) under the European Union’s Horizon 2020 research and innovation programme (grant agreement No. 772960), and the Copenhagen Centre for Geometry and Topology (DNRF151)

\printbibliography

\end{document}